\tikzset{%
	highlight/.style={rectangle,rounded corners,fill=red!15,draw,fill opacity=0.5,thick,inner sep=0pt}
}
\newcommand{\tikzmark}[1]{\tikz[overlay,remember picture] \node (#1) {};}
\newcommand{\tikzdrawbox}[3][(0pt,0pt)]{%
    \tikz[overlay,remember picture]{
    \draw[#3]
      ($(left#2)+(-0.7em,1em) + #1$) rectangle
      ($(right#2)+(-2.3em,-0.6em) - #1$);}
}
\newcommand{\ellop}{\bar{\ell}}
\newcommand{\bZero}{\boldsymbol{0}}
\newcommand{\expM}{{^{(M)}}}
\newcommand{\PfracK}{P}
\newcommand{\eqmo}[1]{\begin{equation}\begin{aligned}#1\end{aligned}\end{equation}}
\newcommand{\varleq}{\leq\kern-0.6em{\raisebox{.5ex}{\text{\tiny{{v}}}}}\kern0.60em}
\newcommand{\zopi}{\zop\expo_{\inv}}
\newcommand{\zzop}{z\expo_n}
\newcommand\itemEq[1][]{%
  \ifx\relax#1\relax  \item \else \item[#1] \fi
  \abovedisplayskip=0pt\abovedisplayshortskip=0pt~\vspace*{-\baselineskip}}
\newcommand{\dd}{\mathop{}\!\mathrm{d}}
\newcommand{\itb}{\begin{itemize}}
\newcommand{\ite}{\end{itemize}}
\newcommand{\enb}{\begin{enumerate}}
\newcommand{\ene}{\end{enumerate}}
\newcommand{\eqm}[1]{\begin{align}#1\end{align}}
\newcommand{\limpf}{\lim_{P\rightarrow\infty}}
\newcommand{\setcomp}{{\mathsf{c}}}	
\newcommand{\Trans}{{\mathrm{T}}}
\newcommand{\orthP}[1]{\bP_{\hv^{\bot}_{#1}}}
\newcommand{\LB}{\left(}
\newcommand{\RB}{\right)}
\newcommand{\LSB}{\left[}
\newcommand{\RSB}{\right]}
\newcommand{\cmtblu}[1]{{\color{blue!70!black}{#1}}}
\newcommand{\abs}[1]{\left|#1\right|}
\newcommand{\absn}[1]{|#1|}
\newcommand{\bXh}{\hat{\bX}}
\newcommand{\h}{{\mathrm{h}}}
\newcommand{\bhh}{\hat{\hv}}
\newcommand{\bHH}{{{\hat{\bH}}}}
\newcommand{\thv}{\tilde{\hv}}
\newcommand{\zop}{\breve{z}}
\newcommand{\iop}{{\bar{i}}}
\newcommand{\expj}{^{(j)}}
\newcommand{\expo}{^{(1)}}
\newcommand{\expt}{^{(2)}}
\newcommand{\Pb}{{\bar{P}}}
\newcommand{\DeltaR}{\Delta\mathrm{R}}
\newcommand{\ExpB}[1]{\mathbb{E}\!\LSB#1\RSB}	
\newcommand{\ExpBs}[2]{\mathbb{E}_{#1}\!\LSB#2\RSB}	
\newcommand{\ExpHC}[1]{\mathbb{E}\!\LSB#1\RSB}
\DeclareMathAlphabet{\mathbit}{OML}{cmr}{bx}{it}
\DeclareMathAlphabet{\mathsf}{OT1}{cmss}{m}{n}
\DeclareMathAlphabet{\mathTXf}{OT1}{cmss}{bx}{it}
\DeclareMathOperator{\cov}{cov}
\DeclareMathOperator{\rank}{rank}
\DeclareMathOperator*{\argmax}{argmax}
\DeclareMathOperator{\Real}{Re}
\DeclareMathOperator{\Imag}{Im}
\DeclareMathOperator{\Exp}{{\mathbb{E}}}
\DeclareMathOperator{\CN}{\mathcal{N}_{\mathbb{C}}}
\DeclareMathOperator{\inv}{inv}
\DeclareMathOperator{\DoF}{DoF}
\DeclareMathOperator{\RX}{RX}
\DeclareMathOperator{\SNR}{SNR}
\newcommand{\Nb}{{{\mathbb{N}}}}
\newcommand{\Zb}{{{\mathbb{Z}}}}
\newcommand{\Rb}{{{\mathbb{R}}}}
\newcommand{\Cb}{{{\mathbb{C}}}}
\newcommand{\Eb}{{{\mathbb{E}}}}
\newcommand{\an}{\mathrm{a}}
\newcommand{\bbn}{\mathrm{b}}
\newcommand{\iin}{\mathrm{i}}
\newcommand{\sn}{\mathrm{s}}
\newcommand{\yn}{\mathrm{y}}
\newcommand{\dv}{\mathbf{d}}
\newcommand{\ev}{\mathbf{e}}
\newcommand{\hv}{\mathbf{h}}
\newcommand{\nv}{\mathbf{n}}
\newcommand{\sv}{\mathbf{s}}
\newcommand{\tv}{\mathbf{t}}
\newcommand{\uv}{\mathbf{u}}
\newcommand{\vv}{\mathbf{v}}
\newcommand{\wv}{\mathbf{w}}
\newcommand{\xv}{\mathbf{x}}
\newcommand{\yv}{\mathbf{y}}
\newcommand{\Dc}{{{\mathcal{D}}}}
\newcommand{\Fc}{{{\mathcal{F}}}}
\newcommand{\Hc}{{\mathcal{H}}}
\newcommand{\Ic}{{{\mathcal{I}}}}
\newcommand{\Lc}{{{\mathcal{L}}}}
\newcommand{\Nc}{{{\mathcal{N}}}}
\newcommand{\Oc}{{{\mathcal{O}}}}
\newcommand{\Pc}{{{\mathcal{P}}}}
\newcommand{\Qc}{{{\mathcal{Q}}}}
\newcommand{\Rc}{{{\mathcal{R}}}}
\newcommand{\Vc}{{{\mathcal{V}}}}
\newcommand{\pset}{\ensuremath{\Pc_{\!\!o}}}
\newcommand{\bA}{\mathbf{A}}
\newcommand{\bG}{\mathbf{G}}
\newcommand{\bH}{\mathbf{H}}
\newcommand{\bI}{\mathbf{I}}
\newcommand{\bP}{\mathbf{P}}
\newcommand{\bT}{\mathbf{T}}
\newcommand{\bU}{\mathbf{U}}
\newcommand{\bV}{\mathbf{V}}
\newcommand{\bW}{\mathbf{W}}
\newcommand{\bX}{\mathbf{X}}
\newcommand{\bs}{\bm{s}}
\newcommand{\balpha}{{\boldsymbol{\alpha}}}
\newcommand{\bdelta}{\bm{\delta}}
\newcommand{\bDelta}{\boldsymbol{\Delta}}
\newcommand{\bphi}{\ensuremath{\boldsymbol{\phi}}}
\newcommand{\Z}{\mathbf{0}} 
\newcommand{\U}{\mathbf{1}} 
\newcommand{\norm}[1]{\lVert{#1}\rVert}
\newcommand{\E}{{\mathbb{E}}}
\newcommand{\He}{{{\mathrm{H}}}}
\newtheorem{theorem}{Theorem}
\newtheorem{definition}{Definition}
\newtheorem{proposition}{Proposition}
\newtheorem{lemma}{Lemma}
\newtheorem{corollary}{Corollary}
\newtheorem{remark}{Remark}
\renewcommand{\cmtblu}[1]{#1}
\begin{document}
	\title{Asymptotically Achieving Centralized Rate on the Decentralized Network MISO Channel}
	\author{Antonio Bazco-Nogueras,~\IEEEmembership{Member,~IEEE,} 
					Paul de Kerret,~\IEEEmembership{Member,~IEEE,}\\ ~~~~
					David Gesbert,~\IEEEmembership{Fellow,~IEEE,} and 
					Nicolas Gresset,~\IEEEmembership{Senior Member,~IEEE}%
		\thanks{This work was in part supported by the European Research Council under the European Union’s Horizon 2020 research and innovation program (Agreement no.~670896). (Corresponding author: Antonio Bazco-Nogueras).}
		\thanks{A. Bazco-Nogueras was with the Mitsubishi Electric R\&D Centre Europe, Rennes, France, and also with the Communication Systems Department, EURECOM, Sophia-Antipolis, France. He is now with the IMDEA Networks Institute, Madrid, Spain  (e-mail:antonio.bazco@imdea.org).}
		\thanks{P. de Kerret was with the Communication Systems Department, EURECOM, Sophia-Antipolis, France; he is now with Greenly.}
		\thanks{D. Gesbert is with the Communication Systems Department, EURECOM, Sophia-Antipolis, France (e-mail: gesbert@eurecom.fr). }
		\thanks{N. Gresset is with the Mitsubishi Electric R\&D 
		Centre Europe, Rennes, France (e-mail: n.gresset@fr.merce.mee.com).}
}

	\maketitle

	\begin{abstract} 
		In this paper, we analyze the high-SNR regime of the $M\times K$ Network MISO channel in which each transmitter has access to a different channel estimate, possibly with different precision. 
		It has been recently shown that, for some regimes, this setting attains the same Degrees-of-Freedom as the ideal centralized setting with perfect Channel State Information (CSI) sharing, in which all the transmitters are endowed with the best estimate available at any transmitter. 
		This result is restricted by the limitations of the Degrees-of-Freedom metric, as it only provides information about the slope of growth of the capacity as a  function of the SNR, without any insight about the possible performance at a given SNR. 
		\cmtblu{In order to overcome this limitation, we analyze the affine approximation of the rate on the high-SNR regime for this decentralized Network MISO setting for the antenna configurations in which it achieves the Degrees-of-Freedom of the centralized setting. We show that, for a regime of antenna configurations, it is possible to asymptotically attain the same achievable rate as in the ideal centralized scenario. 
		Consequently, it is possible to achieve the beamforming gain of the ideal perfect-CSI-sharing setting even if only a subset of transmitters is endowed with precise CSI, which can be exploited in scenarios such as distributed massive MIMO where the number of transmit antennas is much bigger than the number of served users. 		
		This outcome is a consequence of the synergistic compromise between CSI precision at the transmitters and consistency between the locally-computed precoders, which is an inherent trade-off of decentralized settings that does not exist in the centralized CSI configuration.}
		We propose a precoding scheme achieving the previous result, which is built on an uneven structure in which some transmitters reduce the precision of their own precoding vector for the sake of using transmission parameters that can be more easily predicted by the other transmitters. 
	\end{abstract} 
	
	\begin{IEEEkeywords}
		Network MIMO, Cooperative transmission,  Distributed Broadcast Channel, decentralized zero-forcing.
	\end{IEEEkeywords}	
	
		\section{Introduction}\label{se:intro}
	
			\subsection{Cooperative Transmission}\label{se:cooperative}
			Multi-user cooperative networks and the extend of their theoretical capabilities have been thoroughly analyzed in the literature\cite{Costa1987,Caire2003,Lozano2005,Hassibi2007}. 
			Initially, it was shown that cooperation can bring multiplicative gains under ideal assumptions on the channel knowledge available at the communicating nodes. 
			As a matter of example, it is known that, under the assumption of perfect Channel State Information (CSI), the setting in which $M$ transmitters (TXs) jointly serve $K$ single-antenna users (the so-called Network MISO channel) achieves a rate that scales as  $\min(N_T,K)$ times the rate of the single-antenna point-to-point channel\cite{Jindal2005}, where $N_T$ is the total number of transmit antennas. 
			Conversely, the $K\times K$ interfering channel can attain a multiplexing gain of $K/2$\cite{Cadambe2008}. This perfect CSI scenario has been profoundly studied\cite{Viswanath2003,Jindal2005,Weingarten2006,Etkin2008}. 
			Unfortunately, the assumption of perfect information is not practical in most of the current wireless networks. 
			Motivated by the infeasibility of the previous assumption, the community has investigated settings in which the information available at the communicating nodes does not meet the perfect CSI assumption, such as scenarios where the information available is imperfect	\cite{Kountouris2006,Motahari2009,Huh2012,Piovano2017,Davoodi2016_TIT_DoF}, partial~\cite{Tandon2012b,Lashgari2016}, 
			or delayed	\cite{MaddahAli2012,Xu2012,Yang2013,Vahid2016}. 
			
			Even though the aforementioned works considered an imperfect acquisition or estimation of the CSI, all the cooperating nodes are assumed to \emph{perfectly} share the same \emph{imperfect} information. 
			Yet, current and upcoming wireless networks characteristics make this assumption of \emph{perfect sharing} also impractical for many applications. 
			This is due to, for example, the proliferation of heterogeneous networks for which some of the nodes have a wireless, fluctuating, or limited backhaul\cite{Simeone2009,Jaber2016}, hierarchical distributed networks (with remote radio-heads of  different capabilities), 
			or Ultra-Reliable Low-Latency Communication (URLLC) applications\cite{Popovski2018,Bennis2018}, in which the perfect sharing of the information would result in an intolerable  
			delay. 
			Settings in which simple devices with low capabilities aim to communicate in a dense environment, as in IoT applications, also fall into the use cases in which the sharing of channel information is indispensable but challenging.
			This evolution of different use cases boosts the interest of distributed information settings, in which the information available at the communicating nodes is not only imperfect but different from one node to another. 
			This type of settings can be formalized as the so-called Team Decision problems\cite{Radner1962}, in which different agents aiming for the same goal attempt to cooperate in the absence of perfect communication among~them. 
			\subsection{Distributed CSIT Setting}\label{se:distr_csi}
			There exists a great number of different distributed settings \cite{Grandhi1994,Ng2008,Dimakis2010,Aggarwal2012,Vahid2016_erasure, lampiris2020resolving}. 
			In particular, we focus on the scenario where $M$ TXs jointly serve $K$ different single-antenna users (RXs), which is often referred to as the Network MISO setting, under the assumption of distributed CSI at the TXs (CSIT). 
			In such scenario,  every TX has access to the information symbols of every RX, but it does not share the same CSI with the other TXs.  
			This setting arises in situations in which the data can be buffered or cached, but the CSI needs to be available with very small delay; for example, for the transmission of high-popularity content that can be cached at the network edge before the transmission occurs\cite{lampiris2020resolving,MaddahAli2014,Zhang2017}, or in high mobility scenarios and IoT (or V2X) networks with fast varying channels but low data rate\cite{Hail2015,Wang2017}. 
			
			The Network MISO with distributed CSIT has been analyzed in recent works. 
			Initially, it was shown in~\cite{dekerret2012_TIT} that the $2\times 2$ single-antenna scenario in which one TX has better knowledge of the full channel matrix that the other TX achieves the same Degrees-of-Freedom (DoF) as the ideal centralized case in which  both TXs are endowed with the best CSI. 
			The DoF  metric, which will be presented in the following section, is defined as the pre-logarithmic factor of the capacity as function of the SNR\cite{Caire2003,Cadambe2008,Etkin2008}, and it is also known as multiplexing gain. 
			The outcome of\cite{dekerret2012_TIT} derives from a precoding with a master-slave structure, named Active-Passive Zero-Forcing (AP-ZF), in which the TX whose CSI is less precise transmits with a fixed precoder. 
			Although this result could depend on the asymmetric structure of the setting, it has been extended to more general settings. 
			Indeed, it was shown in\cite{Bazco2018_WCL} that the Generalized DoF\footnote{Generalized DoF refers to the DoF analysis under the assumption that the difference of channel strengths between the links does not vanish at high SNR. See\cite{Etkin2008} for more details.}  
			\cmtblu{of the centralized setting with perfect CSIT sharing are attained in the distributed setting, no matter which TX has the best estimation of every single channel coefficient.} 
			By way of explanation, the pre-logarithmic factor of the centralized $2\times 2$ Network MISO is preserved as long as the estimate of a certain link is available at one of the TXs. 
			This comprises for example cases in which TXs have only local CSI or in which each TX knows better the channel towards a certain RX. 
			The DoF analysis has been also applied to the  $K\times K$ scenario, in which two main insights can be outlined: 
			First, the optimal DoF of the centralized setting with perfect CSIT sharing is also achieved for some regimes of the $K\times K$ scenario\cite{Bazco2018_TIT}. 
			Second, the quantization of the information available at a certain TX can be beneficial in distributed settings, inasmuch as it helps to transform the setting into a hierarchical configuration in which the structure of the CSI allocation can be used to increase the DoF\cite{dekerret2016_asilomar}. 
			
			These results provide some understanding on the resilience of cooperating settings under information mismatches between different nodes. 
			However, the DoF metric is a limited metric because it only provides information about the pre-log factor, not offering any information about the achievable rate at any given SNR. 
			For that reason, it is necessary to take a step beyond and analyze the affine approximation of the rate at the high-SNR regime. 
		
			\subsection{Linear Approximation of Rate at High-SNR}\label{se:linear_approx}
				Finding the fundamental limits of communication in distributed settings is a challenging problem.  
				Indeed, these limits have remained open  even for several important centralized cases. 
				Nevertheless, it is possible to find significant insights on those settings through rate approximations, which help us to move towards a better understanding of the behavior of complex wireless networks. 
				
				\begin{figure}%
					\centering
%
%
\definecolor{mycolor1}{rgb}{0.00000,0.44700,0.74100}%
	\begin{tikzpicture}
		\begin{axis}[%
					width=1.7*0.5\columnwidth,
					height=1.7*.42\columnwidth,
					at={(0\columnwidth,0\columnwidth)},
					scale only axis,
					xmin=0,
					xmax=100,
					xlabel={SNR [dB]},
					x label style={at={(axis description cs:0.5,0.05)},anchor=north},
					ymin=-3,
					ymax=300,
					ylabel={Rate [bits/s/Hz]},
					y label style={at={(axis description cs:0.1,.45)},rotate=0,anchor=south},						
					axis background/.style={fill=white},
					title style={font=\bfseries},
					axis x line*=bottom,
					axis y line*=left,
					yticklabels={,,},
					xticklabels={,,},
		]
			\addplot[samples=201,domain=0:35,blue,line width=1pt] {6/70*x^2 + 35} node[above,pos=0.9]{};								
			\addplot[samples=201,domain=35:100,blue,line width=1pt] {(6*x - 100) + 30} node[above,pos=0.9]{};								
			\addplot[samples=201,domain=70/6:100,blue,dashed,line width=0.5pt] {(6*x - 105) + 30} node[above,pos=0.9]{};				
			\addplot[samples=201,domain=69.9:100,orange!80!blue,line width=1pt] {6*x - 274.4878} node[above,pos=0.9]{};									
			\addplot[samples=201,domain=20:69.9, orange!80!blue,line width=1pt] {6*exp((x-10)/20)-5  + 30} node[above,pos=0.9]{};								
			\addplot[samples=201,domain=0:20, orange!80!blue,line width=1pt] {0.015*x^2 - 0.1054*x + 1  + 30} node[above,pos=0.9]{};		
			\addplot[samples=201,domain=45:100,orange!80!blue,dashed,line width=0.5pt] {6*x - 279.4878} node[above,pos=0.9]{};											
										%
										%

			\addplot[samples=201,domain=0:46.15, ->, orange!80!blue,densely dotted,line width=0.5pt] {5} node[above,pos=0.7]{$\small\Lc^B_\infty$};		
			\addplot[samples=201,domain=0:12, ->, blue,densely dotted,line width=0.5pt] {2} node[above,pos=0.5]{$\small\Lc^A_\infty$};		
			
			\draw[densely dotted, blue] (44,195) node[anchor=north]{$$}
				-- (54,255) node[anchor=north]{$$}
				-- (54,195) node[anchor=south]{\phantom{zef}~\quad$\small\DoF_{\!A}$}
				-- cycle;					
			\draw[densely dotted, orange!80!blue] (71.5,160) node[anchor=north]{$$}
				-- (81.5,220) node[anchor=north]{$$}
				-- (81.5,160) node[anchor=south]{\phantom{zef}~\quad $\small\DoF_{\!B}$}
				-- cycle;					
														
			\node at (25,270) {\fbox{$\DoF_{\!A} = \DoF_{\!B}$}};

		\end{axis}
	\end{tikzpicture}
						\caption{Qualitative illustration of the affine approximation of two different setting with the same DoF (slope) but different rate offset $\Rc_\infty = \DoF\Lc_\infty$, and hence different achieved rate. }\label{fig:affine}%
				\end{figure}
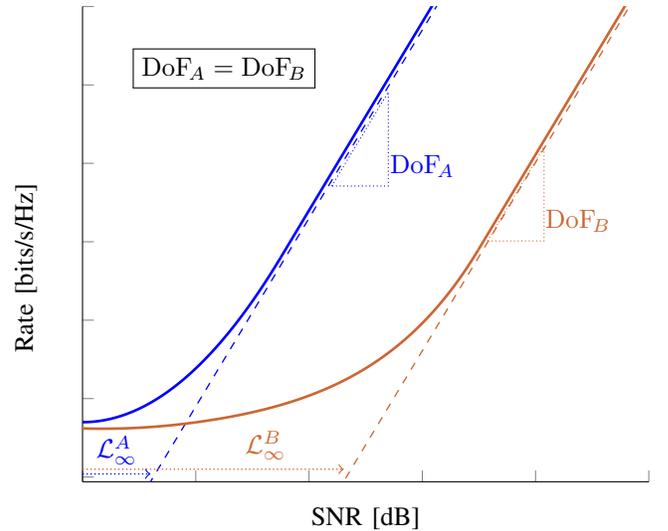
				A very useful metric that has been applied in the literature is the affine approximation of the rate at high SNR, introduced in~\cite{Shamai2001_TIT}. 
				Following this linear approximation, the rate can be written as
					\eqm{
							R = \DoF\log_2(P) - \Rc_\infty + o(1),  \label{eq:lin_approx} 
					}
				where $P$ denotes the SNR, $\DoF$ is the pre-logarithmic factor (or Degrees-of-Freedom), and $\Rc_\infty$ is the \emph{rate offset} (or vertical offset). 
				The approximation in~\eqref{eq:lin_approx} can also be written in terms of the \emph{power offset} (horizontal offset) $\Lc_\infty$, where $\Rc_\infty = \DoF\Lc_\infty$. 
				An illustrative representation is shown in Fig.~\ref{fig:affine}. 
				The term $\Lc_\infty$ represents the zero-order  term with respect to a reference setting with the same slope but whose affine approximation intersects the origin. These terms are defined as
					\eqm{
							\DoF ~&\triangleq \limpf \frac{R(P)}{\log_2(P)} \\
							\Rc_\infty ~& \triangleq \limpf \DoF\log_2(P) - R(P),
					}
				where $R(P)$ represents the rate as function of the SNR $P$. 
				This measure 
				has already proved instrumental in several findings. 
				In\cite{Lozano2005}, Lozano \emph{et al.} analyzed the multiple-antenna point-to-point scenario, revealing that some system features that do not impact the DoF (as antenna correlation, fading, etc.) do considerably impact the zero-order term, affecting the performance of the system at any possible SNR. 
				In addition to exposing some limitations of the DoF metric, \cite{Lozano2005} also revealed that the affine expansion offers appreciably tight approximations also at medium-to-low SNR. 
				This characterization has been also established for the Broadcast Channel (BC) with perfect CSIT using Dirty-Paper Coding and linear precoding \cite{Lee2007}, and for the BC with imperfect CSIT \cite{Jindal2006}. 
				In \cite{Jindal2006}, the BC setting with quantized feedback	was studied under the assumption of Zero-Forcing (ZF) schemes, showing that the CSIT error variance must be proportional to $\SNR^{-\alpha}$ in order to attain a DoF per user of $\DoF_{\RX i} = \alpha$. 
				Furthermore, having a CSIT error variance scaling with $\SNR^{-\alpha}$ was shown to be equivalent to obtaining a quantized feedback of $\alpha\log_2(\SNR)$ bits from the RX, which could be attained if the feedback resources scale proportionally to~$\log_2(\SNR)$. 
				
				The aforementioned works are yet focused on the centralized scenario. 
				For the best of our knowledge, this affine approximation was analyzed in the Distributed CSIT setting for first time  in\cite{Bazco2019_ISIT}. 
				In that work, it was shown that the $2\times 2$ Network MISO with Distributed CSIT and quantized feedback achieves the same rate offset as the ideal centralized scenario of \cite{Jindal2006}, in which the best CSIT is available at both TXs. Therefore, both settings asymptotically achieve the same rate. This result strengthens the previous results on DoF because it shows that there is no fundamental gap between both centralized and distributed CSIT settings, which could not be inferred from DoF analysis. 
				Besides that, it was also shown that the loss of performance at practical SNR values can be dramatically reduced with the correct design of the transmission schemes. 				
				
			\subsection{Main Contributions}\label{se:contributions}
				Motivated by the result of\cite{Bazco2019_ISIT} for the $2\times 2$ setting, we aim to characterize the high-SNR achievable throughput of ZF precoding techniques \cmtblu{for the $M\times K$ Network MISO setting with Distributed CSIT, and we compare it with the ideal centralized CSIT setting. This comparison is only meaningful for the antenna configurations that allow the distributed setting to achieve the same DoF as its centralized counterpart, since, otherwise, the rate gap between both settings will become unboundedly big as the SNR increases. Therefore, we restrict ourselves to such regime.} Our main contributions write as follows:
					\itb
						\item We obtain the affine expansion of the rate achieved with ZF precoding on the  $M\times K$ Network MISO setting with Distributed CSIT  
						\cmtblu{for the antenna configurations in which the centralized DoF can be achieved in the Distributed CSIT setting. }
						We prove that it is possible to asymptotically reach the same rate as in the ideal centralized setting in which the best estimate available in the network is shared by all the TXs. 
						
						This result is especially relevant when the number of transmit antennas ($N_t$) is considerably greater than the number of RXs ($K$), such as in distributed massive MIMO settings\cite{truong2013viability}, since we can achieve the beamforming gain of the centralized CSIT setting even if we only have precise CSIT  at $K$ transmit antennas. 

						\item \cmtblu{We present a discussion on a dilemma that is intrinsic to distributed settings and does not exist in the conventional centralized case: the interplay between local precision and global consistency.} In particular, we present an achievable scheme that achieves the previous result by means of capitalizing on the compromise between precoder precision and consistency among transmitters. We demonstrate that decreasing the precision of the precoding at certain transmitters improves the average performance as it helps to enhance the predictability of the transmission from the other TXs. 						
					\ite
				In addition to that, the techniques and approaches employed for the design of the transmission scheme are believed to be worthwhile by themselves for general decentralized problems, since they deal with the interplay between local precision and global consistency, which is an inherent aspect of decentralized and Team Decision problems. 
			
			\paragraph*{Notations} 
					$\Nb_N$ stands for $\Nb_N \triangleq \{1,2,\dots,N\}$. 
					We follow the asymptotic notation presented in\cite{Knuth1976}, based on the prevalent Bachmann–Landau notation\cite{Hardy2008}. 
					In consequence, 
							$f(x) = o(g(x))$ implies that $\lim_{x\rightarrow\infty}\!\frac{\!f(x)\!}{g(x)}=0$, 
							$f(x) = \Oc(g(x))$ implies than $\limsup_{x\rightarrow\infty}\!\frac{|f(x)|}{g(x)}<\infty$, 
							and 
							$f(x) = \Theta(g(x))$ implies than $\lim_{x\rightarrow\infty}\!\frac{|f(x)|}{g(x)}=c$, $0<c<\infty$.  
					For any expected value $E$ and event $X$,  $E_{|X}$ denotes  the conditional expectation given $X$. 
					$\Pr(X)$ stands for the probability of the event $X$ and $X^\setcomp$ represents the complementary event to $X$.  
					$\bA_{i,k}$ or $(\bA)_{i,k}$ represent the element of the matrix $A$ located in the $i$-th row and the $k$-th column. 
					$\bI_n$ writes for the identity matrix of size $n\times n$. 
					$\U_{M\times N}$ (resp. $\Z_{M\times N}$) represents the all-ones (resp. all-zeros) matrix of size $M\times N$. 
					$\norm{\bA}$ denotes the Frobenius norm of the matrix $\bA$. 

		\section{Problem Formulation}\label{se:sysmod}
	
			\subsection{Transmission and System Model}\label{se:sys_mod}
										
			We consider the Network MISO setting in which $M$ multi-antenna TXs jointly serve $K$ single-antenna RXs.  
			We suppose that TX~$j$ has $N_j$ antennas, and we denote the total number of transmit antennas as $N_T = \sum_{j=1}^M N_j$. 
			The received signal is given by
				\eqm{
						\yv \triangleq \sqrt{P}\bH\bW\sv + \nv,
				}
			where~$P$ is the transmit power,~$\yv \triangleq [\yn_1,\dots, \yn_K]^\Trans$ is the received signal vector, and $\yn_i$ is the received signal at RX~$i$. $\nv$ stands for the Additive White Gaussian Noise (AWGN) distributed as~$\CN(0,1)$, where $\CN(0,\Gamma)$ stands for the circularly-symmetric complex normal distribution with covariance matrix (or variance) $\Gamma$. The vector $\sv \triangleq [\sn_1,\dots, \sn_K]^\Trans$ is the vector of independent and identically distributed (i.i.d.) information symbols, where $s_i$ is the message to RX~$i$. The vector of information symbols   satisfies $\Exp[{\bs\bs^{H}}] = \bI$. 
			The channel matrix is given by 
				\eqm{
						 \bH \triangleq 
										\begin{bmatrix}
												\hv_1 \\
												\vdots \\
												\hv_K 
										\end{bmatrix}
								 \triangleq 
										\begin{bmatrix}
												\hv_{1,1} & \dots  & \hv_{1,M} \\
												\vdots 				& \ddots & \vdots \\
												\hv_{K,1} & \dots  & \hv_{K,M} 
										\end{bmatrix}	
								\in \Cb^{K\times N_T}. 
				}
			Hence, $\hv_i\in \Cb^{1\times N_T}$ denotes the global channel vector towards RX~$i$ 
			 and $\hv_{i,j}\in \Cb^{1\times N_j}$ is the channel vector from TX~$j$ to RX~$i$. 
			Note that we have defined the row vectors as $\hv_i$ and $\hv_{i,j}$ in place of the usual Hermitian notation $\hv^\He_i$ and $\hv^\He_{i,j}$. 
			This is done to ease the notation for the remainder of the document.  
			The channel coefficients are assumed to be i.i.d. as $\CN(0,1)$  
			such that all the channel sub-matrices are full rank with probability one. 
			The precoding matrix is given by
				\eqm{
						 \bW \triangleq 
										\begin{bmatrix}
												\bT_1 \\
												\vdots \\
												\bT_M 
										\end{bmatrix}
								 &\triangleq 
										\mu
										\begin{bmatrix}
												\wv_1 & \dots & \wv_K 
										\end{bmatrix} \\ 
								 &\triangleq 
										\mu
										\begin{bmatrix}
												\wv_{1,1} & \dots 	& \wv_{K,1} \\
												\vdots 		& \ddots  & \vdots \\
												\wv_{1,M} & \dots 	&\wv_{K,M} 
										\end{bmatrix}	
								\in \Cb^{N_T\times K}. \label{eq:setting_def}
				}
			Hence,  $\bT_j\in \Cb^{N_j\times K} $ represents the precoding matrix applied at TX~$j$, $\mu\wv_i\in \Cb^{N_T\times 1} $ denotes the global precoding vector for the information symbol of RX~$i$ ($s_i$), and $\mu\wv_{i,j}\in \Cb^{N_j\times 1} $ is the precoding vector applied at TX~$j$ for~$s_i$. 
			We further define $\wv_{i,j,n}$ as the coefficient at the $n$-th antenna of TX~$j$. 
			The parameter $0<\mu\leq 1$ is a power correction value that will be detailed afterwards. 
			We define $\bT_{j,n}\in \Cb^{1\times K}$ as the precoding vector applied at the $n$-th antenna of TX~$j$ for the vector $\bs$, with $n\in\Nb_{N_j}$. 
			We assume that the precoder has a per-antenna instantaneous unit-norm constraint,  such that 
				\eqm{
						\norm{\bT_{j,n}} \leq 1. \label{eq:power_contraint}
				}
			The results presented here also hold under the assumption of per-TX instantaneous constraint ($\norm{\bT_{j}} \leq 1$). 
			Note that, even if we set $\norm{\bT_{j}} = 1$, the transmit power varies over the time as the power of the information symbols~$s_i$ varies. 
			For sake of concision, we refer hereinafter to the unit-norm constraint of~\eqref{eq:power_contraint} as \emph{instantaneous power constraint}, although strictly speaking it is an \emph{instantaneous power constraint on the precoding vector} averaged over the information symbols. 
			
			\cmtblu{This is done in opposition to the average power constraint on the precoder ($\ExpB{\norm{\bT_{j}}^2} \leq 1$) which has been assumed in other works \cite{Bazco2018_WCL,Bazco2018_TIT}. This average power constraint, which results in a power normalization factor of $\nicefrac{1}{\sqrt{\Exp[{\norm{\wv_i}^2}]}}$ for each $\wv_i$, it is known to be less harmful for distributed CSIT settings. This fact follows because, under the instantaneous power constraint, the precoding must be carefully designed to avoid exceeding the power budget at every time, and this power normalization must be locally done at each TX  without an exact knowledge of the actions of the other TXs. On the other hand, under the average power constraint the normalization is based on  statistics and it is constant. 
			}
			
			\cmtblu{
			Thus, we consider the instantaneous power constraint because it is the most challenging case for a distributed setting where each TX must independently compute and agree on the applied power normalization. As it will be shown in the following, the constraint on the instantaneous power is one of the key aspects that could diminish the performance in the distributed setting. 
			}
			\subsection{Network MISO Setting with Distributed CSIT}\label{se:distr_mod}
		
		  The Distributed CSIT (D-CSIT) model is characterized by the consideration that each TX is endowed with a possibly different estimate. The key particularity of this setting is that, for any channel coefficient, there exist as many estimates as TXs, each of them \emph{locally available at a single TX}. 
			By way of example, we can think of a practical scenario in which there are some TXs with more precise knowledge for some channel coefficients and some other TXs with more precise information about other part of the channel matrix, while some sort of coarse information could be shared between them. 
			
			Although the assumption of Distributed CSIT may seem contradictory with the assumption of perfect sharing of user data symbols, which is inherent to the Network MISO setting,  both aspects co-exist in many scenarios of interest. 			
			More specifically, this model is inspired by the different timescales of latency that information data and CSI may experience in a range of emerging applications. 
			Indeed, CSI sharing is constrained by the channel coherence time (which can be very short in high-mobility scenarios) and by the possibly latency-limited backhaul connection; 
			on the other hand, many data applications have delivery time restrictions which are orders of magnitude weaker, such that the data can be pre-fetched or cached at the TXs and ready to be synchronously transmitted. 
			We refer to\cite{Bazco2018_TIT} for a detailed discussion and motivation on the  joint transmission with distributed CSIT and the practical scenarios in which it arises.

				In this work, we assume that a limited cooperation between TXs occurred before the transmission phase, leading to a certain CSIT precision configuration. 
				Hence, we assume hereinafter that the average CSIT precision at each TX remains fixed for a period of time that is long enough to be considered constant in our analysis. 
				\cmtblu{The problem of studying the best strategy of CSIT sharing with constrained links and/or delays is a very interesting research problem; yet, it is also a complex problem which would require another work on its own, and thus we do not discuss the exact CSIT acquisition and sharing mechanism. }

				\cmtblu{We focus on a particular CSIT configuration where each TX knows the whole channel matrix  with the same precision. 
				This setting is denoted as the \emph{Sorted CSIT setting}, as the TXs can be sorted by level of average precision. 
				Let us denote  the estimate  of the channel matrix $\bH$ available at TX~$j$  as $\bHH^{(j)}\in\Cb^{K\times N_T}$. Then, we model the D-CSIT configuration such that the estimate $\bHH^{(j)}$ is given by
					\eqm{\label{eq:dcsit_model0}
						\bHH^{(j)}	 \triangleq \sqrt{1 - Z^{{(j)}}}\ \bH	 + \sqrt{Z^{(j)}}\ \bDelta^{(j)},
					}
				where 
				$\bDelta^{(j)} \triangleq [\bdelta^{(j)}_1, \,\dots,\,\bdelta^{(j)}_K]^\Trans$ 
				is a random matrix that contains the additive estimation noise and whose elements are  i.i.d. as $\CN(0,1)$. The matrix $\bDelta^{(j)}$ is independent of $\bH$.
				$Z^{{(j)}}$ is a deterministic value that represents the variance scaling of the estimation noise with respect to the SNR. 
				Importantly, $Z^{{(j)}}$ characterizes the average precision of the estimate at TX~$j$, which is assumed to remain constant.  
				
				From~\eqref{eq:dcsit_model0}, we can see that the variance scaling $Z^{(j)}$ at a given TX is the same for all the channel coefficients. 
				This model encloses e.g. a scenario in which a main multi- or massive-antenna base station serves a set of users with the help of some single or multi-antenna remote radio-head or simple TXs, as depicted in Fig.~\ref{fig:scenario_case_2}. 
				
						We would like to remark that this configuration is the only one known to achieve the same DoF as the centralized setting where the most precise CSIT is shared. The analysis can be however extended to the case in which the CSIT precision order is preserved but the channel towards each user is known with different precision at a particular TX. 
						As we will see later, the precoding vectors are independently computed for each RX, and only the normalization parameter depends on all the user's vectors.\footnote{Moreover, it was shown in~\cite{Bazco2019_ISIT} that this normalization parameter can be calculated with a less restricting precision without affecting the asymptotic performance.} Because of this last point, we can restrict ourselves to the case in which the precision of each TX is characterized by a single parameter $Z^{{(j)}}$ for the sake of readability and concision,   as the extension to the setting where each RX channel vector is obtained with different precision at the same TX (and the order of TXs is preserved) would follow in a similar way.	}
						
						\cmtblu{Furthermore, this Sorted CSIT setting can model other CSIT configurations. For example, a setting where each TX obtains precise CSIT from a subset of the network (e.g., the full channel vector of some users that are associated with that specific TX, or the direct channel from all users to that specific TX), which is the most usual case in practice. Then, we could apply \emph{perfect} CSIT sharing only unidirectionally towards the main TX, whereas a coarse global CSIT estimate could be broadcast back to all the TXs from this main TX. This configuration could be enforced by tight delay constraints, and it has been already considered in practical analysis of cell-free Massive MIMO and \emph{radio stripes} systems\cite{interdonato2019ubiquitous,shaik2020mmse, Miretti2021_cellfree}. 
				Then, the only difference w.r.t. our Sorted CSIT setting would be that each TX has also the best estimate for the locally acquired CSIT, but this would provide no \emph{high-SNR} gain because the performance will be limited by the other TXs.   
				}

				\begin{remark}
					In contrast to the prior work in \cite{Bazco2019_ISIT}  for the $2\times 2$ setting, in which a distributed adaptation of the Random Vector Quantization (RVQ) feedback model of \cite{Jindal2006} was supposed, we assume in this work an additive Gaussian model for the estimation noise.  				
					This modification allows us to enlarge the contribution and verify that the previous result is not dependent on the feedback model. 
					\cmtblu{Indeed, the asymptotic results are expected to hold for a broad family of estimation models, since the key parameter that characterizes the asymptotic analysis is the scaling of the variance of the CSIT error $Z^{(j)}$, which is assumed to be constant, and not in the unit-variance random matrix~$\bDelta^{(j)}$.}
				\end{remark}

					\begin{figure}[t]\centering%
							\includegraphics[width=0.47\textwidth]{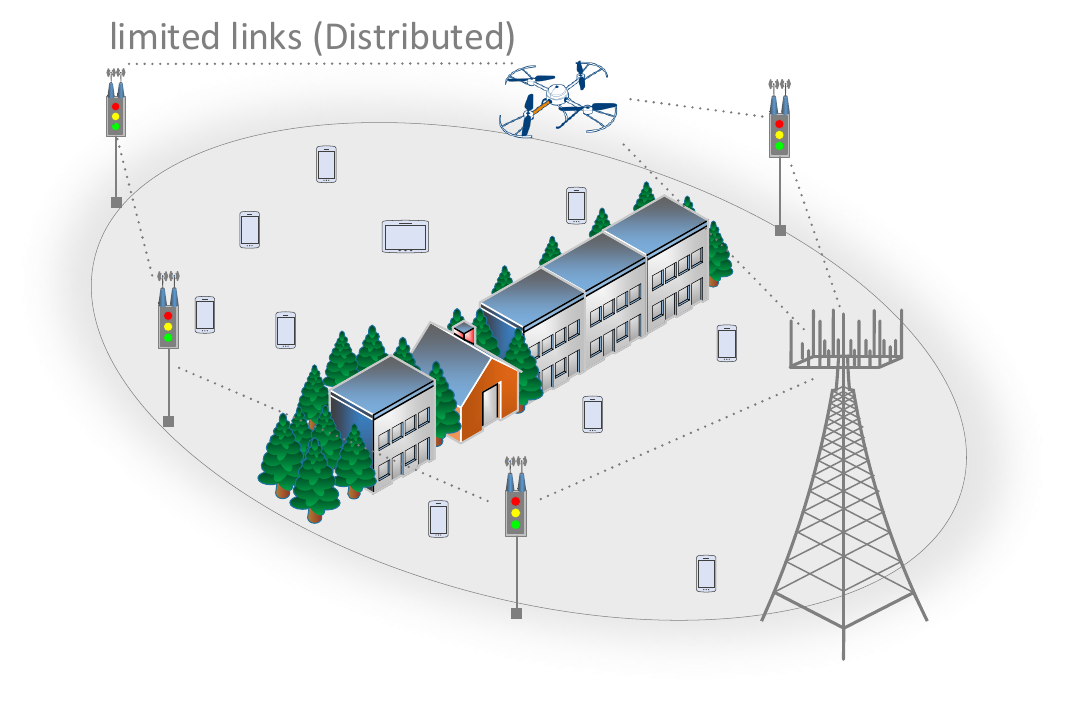}
							\caption{Master Base Station with  remote radio-heads. The Base Station obtains an estimate of the whole channel matrix, then it transmits noisy or compressed CSI to the auxiliary TXs.} 
							\label{fig:scenario_case_2}	%
					\end{figure}
		
			It is known that, for the centralized CSIT case in which all the transmit antennas share the same CSI, the variance of the estimation noise should scale as $P^{-\alpha}$ (provided that the channel estimate has unit variance) in order to obtain a multiplexing gain of $\alpha$ \cite{Jindal2006,Davoodi2018_TIT_BC}.
			As our focus is on the high-SNR regime, we consider a high-SNR modeling in which 
			the estimation error scales such that
				\eqm{\label{prop_exp_z}
						Z\expj=\Pb^{-\alpha\expj}, 
				}
			where $\Pb\triangleq\sqrt{P}$ and $0\leq\alpha\expj\leq 1$. 
			$\alpha\expj$ is the \emph{precision scaling} parameter that measures the average quality of estimation of the channel matrix at TX~$j$. 
			Hence, we can order the TXs w.l.o.g. such that
				\eqm{
						1\geq\alpha\expo>\alpha\expt\geq\dots\geq\alpha\expM\geq 0, \label{eq:order_alphas}
				}
			which implies that TX~$1$ is the \emph{best-informed} TX; in other words, the TX whose CSIT has the highest average precision. 
			\cmtblu{The reason for which the inequality $\alpha\expo>\alpha\expt$ is strict will be explained in Section~\ref{se:results_gaussians}}. 
			We define the set of precision parameters of the Distributed CSIT setting as  
				\eqm{
					\balpha_{M} = \{\alpha\expj\}_{j\in\Nb_M}. \label{eq:balpha_def_gen}
				}		
			For further use, we define the estimate at TX~$j$ for the channel of RX~$i$ as
				\eqm{
						 \bhh\expj_i\triangleq \zop\expj\hv_i + z\expj\ \bdelta\expj_i, \label{eq:dcsit}
				}		
			where $z\expj \triangleq \Pb^{-\alpha\expj}$, $\zop\expj \triangleq\sqrt{1 - (z\expj)^2}$, and $\bhh\expj_i$, $ \bdelta\expj_i$, are the $i$-th rows of the matrices $\bHH\expj$, $\bDelta\expj$, respectively. 
			As stated before, the precision parameters $\balpha_M$ are assumed to be long-term coefficients that vary slowly. Based on that, it is assumed that every TX knows the full set $\balpha_{M}$, as it only requires a sharing of few bits over a long period of time.

			\subsection{Ideal Centralized Setting}\label{se:centr_ideal}
			Finding purely distributed upper-bounds is a challenging subject that remains open, although some first results have been shown in\cite{Bazco2019_thesis}.
			However, any decentralized scenario with distributed estimates has an \emph{ideal} centralized counterpart in which a genie provides the best estimate of each parameter to every node. 
			Based on that, we define an ideal centralized scenario as follows.
				\begin{definition}[Ideal Centralized Setting] Consider the distributed setting introduced in Section~\ref{se:distr_mod}. 
				The \emph{Ideal Centralized Setting} is defined as the setting in which all the TXs are endowed with the estimate of best average precision for every channel coefficient. 
				\end{definition}
			Hereinafter, we denote the centralized channel estimate as $\bHH$ and the estimate for the channel vector of RX~$i$ as~$\bhh_i$. 
			We further denote the CSIT precision for the ideal centralized case as $\alpha^\star$. 
			Note that in the sorted setting, where $\alpha\expo>\dots>\alpha\expM$, the ideal centralized setting is equivalent to a MISO Broadcast Channel with $N_T$ transmit antennas, CSIT $\bHH$ equal to $\bHH\expo$, and $\alpha^\star = \alpha\expo$. 
			
			We will compare the rate achieved in the distributed scenario described in the previous section with the respective ideal centralized MISO BC counterpart.
			This provides us with a benchmark for the performance of ZF schemes in the Distributed CSIT setting.  
			In this way, we are able to analyze which is the impact of having distributed information or, in other words, \emph{the cost of not sharing the best CSI}. 
				\begin{remark}
					It is important to observe that the ideal centralized scenario is such that every TX owns the best estimate among the available estimates at any TX, \emph{instead of} its own estimate which by definition would have less precision.  
					This is in opposition to another genie-aided scenario, also assumed in the literature\cite{Bazco2018_WCL,Bazco2018_TIT}, in which each TX shares its CSIT with any other TX, such that every TX owns the set of M estimates of the M TXs. 
					The former model, here assumed, permits to compare the distributed scenario with the centralized counterpart. 
					The later, although it was shown in\cite{Bazco2018_TIT} that does not attain a greater DoF, would benefit from the fact that  the knowledge of $M$ estimates allows to reduce the noise variance by a factor proportional to $M$. \qed
				\end{remark}

			\subsection{Figure of Merit}\label{se:Fig_merit}
			Our figure of merit is \cmtblu{the ergodic sum rate, i.e., the expected sum rate over the joint process of fading realizations and estimates $\{\bH, \bHH\expo,\dotsc,\bHH\expM\}$\cite{Caire2007_found}. For the sake of readability, we omit hereinafter the reference to the joint process $\{\bH, \bHH\expo,\dotsc,\bHH\expM\}$ when denoting the expectation $\ExpHC{\cdot}$.} 
						
			Let us define the expected rate of RX~$i$ as $R_i \triangleq \ExpHC{r_i}$, where $r_i$ is the instantaneous rate of RX~$i$. In our $K$-user setting, $r_i$ writes as
				\begin{equation}
						 r_i\triangleq\log_2\bigg( 1+\frac{\PfracK|\hv_i^{\He}\tv_i|^2}{1+\sum_{j\neq i}\PfracK|\hv_i^{\He}\tv_{j}|^2}\bigg), 			\label{eq:fig_of_merit}
				\end{equation} 
			where $\tv_i$ denotes the precoder vector for the symbols of RX~$i$. Then, the expected sum rate is given by $R\triangleq \sum_{i=1}^K R_i$.
			Specifically, we study the linear approximation presented in~\eqref{eq:lin_approx} for the rate of the Distributed CSIT setting. 
			Let us denote the rate achieved in a Distributed CSIT setting characterized by $\balpha_{M}$ as $R(\balpha_{M})$. Hence, we aim to find the values $\DoF_d$, $\Rc^d_\infty$, such that
				\eqm{
						R(\balpha_{M}) = \DoF_d\log_2(P) - \Rc^d_\infty + o(1). 
				}

		\section{Zero-Forcing Precoding}\label{se:zf_def_exp}

			\subsection{Centralized Zero-Forcing Schemes (with ideal CSIT sharing)}\label{subse:scheme_centr}
			We restrict in this work to a  general type of ZF precoders that we rigorously characterize in the following.  
			First, in order to distinguish when the precoding vectors refer to the ideal centralized CSIT setting or the D-CSIT one, we denote the centralized precoding coefficients as $\vv_{i,k}$;  
			note that the counterpart vector for the D-CSIT setting in~\eqref{eq:setting_def} is denoted by~$\wv_{i,k}$. 
			In addition, $\bV$ and $\vv_i$ 
			are defined as the centralized counterpart of $\bW$ and $\wv_i$, respectively.  
			Hence, the  vectors $\vv_i$ are computed from any ZF precoding algorithm satisfying
				\eqm{
						1) ~\ & \bhh^{}_i\vv_{\ell} \ =\  0, \quad \forall \ell\neq i \quad\text{(Zero-Forcing condition)} \label{eq:cond_1} \tag{ZF1}\\
						2) ~\ & \E\left[{\norm{\vv_{i,j,n}}^{-1}}\right] \ =  \Theta\LB 1\RB\quad\qquad\ 
								\label{eq:cond_2} \tag{ZF2}\\
						3) ~\ & f_{\norm{\vv_{i}}} \leq f^{\max}_{\norm{\vv_{i}}} < \infty  \label{eq:cond_3} \tag{ZF3} 
				}
			Note that~\eqref{eq:cond_1} is nothing but the condition that defines ZF schemes,~\eqref{eq:cond_2} implies that the probability of precoding with a vanishing power is negligible, and~\eqref{eq:cond_3} that the precoding vector has a bounded probability density function (pdf), i.e., that it is neither predetermined nor constant. \cmtblu{This last property follows from the assumption that $\hv^{}_i$ and $\bhh^{}_i$ are drawn from Gaussian distributions and that the ZF precoding is a continuous map from estimates $\bhh^{}_i$ to vectors $\vv_{\ell}$.} 
			Hereinafter, we assume that the centralized precoding scheme satisfies \eqref{eq:cond_1}, \eqref{eq:cond_2}, \eqref{eq:cond_3}. 
			Furthermore, we assume that the precoding vectors and matrices can be expressed as a combination of summations, products, and generalized inverses\footnote{\cmtblu{The \emph{generalized inverse} of an $M\times N$~matrix $\bX$ is any matrix $\bX^{-}$ satisfying that $\bX\bX^{-}\bX = \bX$ \cite{Wiesel2008,rao1972generalized}. For any full-rank matrix $\bX$, the generalized inverse is any matrix $\bX^{-}$ that satisfies the fact that $\bX^{-}\bX = \bI$.}} of the channel estimates. 
			As an example, we can use the typical choice of the projection of the matched filters onto the null spaces of the interfered users, i.e.,
				\eqm{
						\vv_i = 
						\frac{\orthP{\iop}\bhh^\He_i}{\norm{\orthP{\iop}\bhh^\He_i}}
				}
			where $\orthP{\iop}$ is defined as 
				\eqm{
						\orthP{\iop} \triangleq \LB \bI_{N_T} - \bHH^\He_\iop(\bHH_\iop\bHH^\He_\iop)^{-1}\bHH_\iop \RB,\label{eq:centr_prec}  				
				}
			and where the matrix  $\bHH_\iop$ 
			stands for the global channel matrix with the $i$-th row removed. 
			Note that the inversion in~\eqref{eq:centr_prec} can be regularized in order to avoid degenerate cases and increase the performance at low SNR. 
			However, as conventional regularized schemes converge to their non-regularized counterpart at high SNR, we omit any reference to regularized inverses. 
			We further model the precoding scheme as a function of the CSIT, such that $\Vc$ denotes the function applied to the channel estimate:
				\eqm{
						\Vc : \Cb^{K\times N_T}\rightarrow \Cb^{N_T\times K}\qquad\text{and}\qquad \bV = \Vc\big(\bHH\big).  
				}

			\subsection{Zero-Forcing on Distributed CSIT Settings}\label{subse:centr_2_distr}
			\cmtblu{When we consider a distributed CSIT setting, we face a new challenge that does not exist in the centralized CSIT counterpart. This challenge is whether the TXs should use their CSIT even if they are among the TXs with worst CSIT precision, and to which extend they should use it. 
			Hereinafter, we will use the term ``\emph{naive} precoder'' at a certain transmitter to refer to any precoding strategy that considers that  such transmitter (naively) assumes that all the other TXs have obtained exactly its same CSIT, as if the TXs were in a centralized CSIT setting. 			
			Let us mention two simple and intuitive ZF approaches that can be applied in the distributed CSIT setting and their shortcomings.} 
				\enb
					\item	\cmtblu{\emph{Naive ZF precoder}: In this strategy, each TX assumes that all the TXs have obtained its same (imperfect) CSIT, i.e., the TXs consider that they are in a centralized CSIT setting. 			
					This approach could prove efficient at low-to-intermediate SNR for the considered CSIT error model, because the exponential relation of the CSIT error with~$P$ (cf.~\eqref{prop_exp_z})  implies that the noise variance at most transmitters will not be significantly different at such SNR ranges.
					Yet, this strategy fails at high SNR.  
					Indeed, it is known that the high-SNR performance of centralized ZF schemes collapses when they are naively applied on Distributed CSIT settings\cite{dekerret2012_TIT,Bazco2018_TIT}.  
					The main reason is that the interference cancellation achieved through~\eqref{eq:cond_1} is proportional to the worst CSIT precision among the TXs, $\alpha\expM$ in the sorted case. 
			
					Thus, the question is how to prevent the least precise TXs from harming the transmission. 
					Intuitively, the TXs should not act naively, but rather they should take into account the action of the other TXs. 
			} 
					\item \emph{Active-Passive ZF (AP-ZF):} The second intuitive idea is that the TXs whose low CSIT precision is harming the transmission should not exploit their instantaneous CSI for precoding. 
					Instead, they could transmit with a fixed or known precoder based on statistical information. 
					This solution, coined \emph{Active-Passive ZF (AP-ZF)}, achieves the centralized DoF under the less restrictive average power constraint $\ExpB{\norm{\bT_j}^2} \leq 1$ for the $2\times 2$ setting\cite{Bazco2018_WCL} and for some regimes of the $K\times K$ setting\cite{Bazco2018_TIT}. 
					\cmtblu{Under the instantaneous power constraint ${\norm{\bT_{j,n}}^2} \leq 1$ that we consider in this work, this approach also achieves the centralized DoF, but at the expense of a power back-off that allows the most precise TXs to have enough power to realign with high-enough probability the interference generated by the TXs that apply a \emph{fixed} precoder\cite{Bazco2019_ISIT,dekerret2012_TIT}. 
					The required power back-off scales as $\log_2(P)$, and thus it does not vanish at high SNR. Consequently,  it incurs  an important rate penalty that prevents the distributed scheme from achieving centralized performance.
					This case is a illustrative example of the limitation of DoF analysis, as it does not take into account such power back-off (because the penalty is $o(\log_2(P)$), and thus the DoF analysis is oblivious to the loss of performance that the power back-off creates. 
					}
				\ene
			\cmtblu{These two approaches are the opposed and extreme solutions for the question of whether the TXs should use their instantaneous CSIT: While in the first case the TXs naively consider their CSIT to be perfect, in the second case these TXs completely disregard their CSIT.} 
			In this work, we bridge the gap between the two strategies, which will prove instrumental to achieve our main results. 
			
			We present in the following several definitions that help to emphasize the two main limitations of distributed precoding. 
			
				\begin{definition}[Consistency]\label{def:consistency}
					Consider two TXs, each endowed with a different CSI. 
					Suppose that the precoder of TX~$1$ depends on the decision at TX~$2$ such that we can write $\bT_1$ as $\bT_1 = f(\bHH\expo,\ g(\bHH\expt))$, where $f,\, g$ are generic functions. 
					In a D-CSIT setting, TX~$1$ has no access to $\bHH\expt$. 
					Let us assume that TX~$1$ estimates $g(\bHH\expt)$ from $\bHH\expo$ as $\hat{g}(\bHH\expo)$. 
					Then, the computation is said to be \emph{Consistent} if and only if $\hat{g}(\bHH\expo)=g(\bHH\expt)$.
					Otherwise, it is said to be \emph{Inconsistent}.
				\end{definition}			
				\begin{definition}[Power Outage]\label{def:powerOutage}
					\cmtblu{Let $\bT_j$ denote the linear precoding matrix computed at TX~$j$. 
					For any precoder $\bT_j$, a power outage occurs if the instantaneous power constraint is violated.}	 
				\end{definition}		
			In other words, the concept of \emph{consistency} in Definition~\ref{def:consistency} stresses the fact that the TXs should precode in a coherent manner with respect to the actions of the other TXs. 
			In turn, the concept of \emph{Power Outage} reflects that even if a certain TX acquires a perfect knowledge of the precoder applied at all the other TXs, it may not be able to reduce the interference if the power required for that action overpasses the instantaneous power constraint. 
			Throughout this document, we will be interested in the cases in which the precoding is \emph{both consistent} and \emph{feasible}:
				\begin{definition}[Feasible Consistency]\label{def:feasible_consistency}
					Several TXs apply a \emph{Feasible Consistent} precoder if the precoding coefficients are \emph{Consistent} and there is no \emph{Power Outage}. 
				\end{definition}			
			One of our contributions is to show that these limitations can be overcome by encouraging \emph{consistency} among the different TXs, at the cost of reducing the precision of precoding at some~TXs.

		\section{Main Results}\label{se:results_gaussians}%
		Our main contributions rely on a novel ZF-type precoding scheme coined \emph{Consistent Decentralized ZF} (CD-ZF), which is presented in detail in Section~\ref{se:scheme_distr}. 
		In short, 
		this scheme is an adaptation to distributed scenarios of the aforementioned centralized ZF precoding, such that the precoding applied at each TX is different if the TX is the best informed one or not. 
		Let $R^{}(\balpha_M)$ be the expected sum rate for our D-CSIT setting. 
		Similarly, let $R^\star(\alpha^{(1)})$ be the expected sum rate achieved by a Zero-Forcing scheme on the ideal centralized CSIT setting as described in Section~\ref{subse:scheme_centr}. 					
			Accordingly, the rate gap between those settings is defined as   
				\eqm{
						{\DeltaR} \triangleq {R^{\star}(\alpha^{(1)})}-	{R^{}(\balpha_M)}.	\label{eq:def_deltaR}
				}
			We can now state our main result.
				\begin{theorem}[]\label{theo:rate_convergence} In the Network MISO setting with Distributed CSIT, with $N_{1}\geq K-1$ and $\alpha\expM>0$, the expected sum rate achieved by ZF-type schemes in the ideal centralized CSIT setting is asymptotically achieved, i.e.,  
						\eqm{
								\limpf {R^{\star}(\alpha^{(1)})}-	{R^{}(\balpha_M)} = 0.  \label{eq:theo_eq}
						}
				\end{theorem}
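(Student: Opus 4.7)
The plan is to exhibit the Consistent Decentralized ZF (CD-ZF) scheme---detailed in Section~\ref{se:scheme_distr}---and to show that, with probability approaching one as $P\to\infty$, its effective global precoder coincides exactly with the centralized ZF precoder applied to $\bHH\expo$. Since $N_1\geq K-1$, TX~$1$ alone has enough spatial degrees of freedom to enforce the zero-forcing constraints. I will therefore let TXs~$2,\dots,M$ transmit a \emph{quantized} version of their local precoder, on a grid whose cell diameter shrinks polynomially in $\Pb$ at a rate tied to $\alpha\expj$. TX~$1$ will then predict each TX~$j$'s quantization bin using its own, strictly more accurate estimate $\bHH\expo$, and design its own contribution so as to complete the global precoder into the ideal centralized one.

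The second step is to control the probability that TX~$1$'s prediction of the other TXs' quantized precoders is correct, in the sense of Definition~\ref{def:consistency}. The error between the local estimates at TX~$1$ and at TX~$j$ is of order $\Pb^{-\alpha\expj}$, so choosing the quantization cell-diameter at TX~$j$ slightly coarser than this lets the probability of inconsistency decay as a negative power of $\Pb$ controlled by the positive gap $\alpha\expo-\alpha\expj$. In parallel, I need to check feasibility (in the sense of Definition~\ref{def:powerOutage}): the magnitude of the correction that TX~$1$ must inject has the order of the residual quantization of the other TXs, i.e. $\Pb^{-\alpha\expM}$, so that a vanishing power back-off suffices to keep TX~$1$ within the per-antenna constraint~\eqref{eq:power_contraint} except on a set of vanishing probability. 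Let $\Omega$ denote the resulting feasible-consistency event of Definition~\ref{def:feasible_consistency}.

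The final step is to decompose $R(\balpha_M)=\Pr(\Omega)\,\ExpHC{r\mid\Omega}+\Pr(\Omega^\setcomp)\,\ExpHC{r\mid\Omega^\setcomp}$ and the analogous expression for $R^{\star}(\alpha\expo)$, then exploit that on $\Omega$ the effective CD-ZF precoder equals the centralized ZF precoder computed from $\bHH\expo$: the instantaneous rates coincide, so the two $\Omega$-terms cancel in $\DeltaR$. On $\Omega^\setcomp$ I only need a crude $\Oc(\log P)$ upper bound on the per-user rate loss; since $\Pr(\Omega^\setcomp)$ decays polynomially in $\Pb$, its contribution to $\DeltaR$ is $o(1)$. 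The bounded-density and bounded-inverse-norm properties~\eqref{eq:cond_2}--\eqref{eq:cond_3} of the centralized ZF are precisely what is needed to rule out pathological rounding events that would make the correction at TX~$1$ diverge, and to ensure the conditional rate on $\Omega^\setcomp$ does not grow faster than $\log P$.

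The main obstacle, in my view, is the joint control of quantization resolution, consistency, and power outage under the instantaneous per-antenna constraint~\eqref{eq:power_contraint}, which is strictly more restrictive than the average constraint used in the earlier Distributed CSIT works~\cite{Bazco2018_WCL,Bazco2018_TIT}. Proving that TX~$1$'s correction stays inside the unit-norm ball on a set of overwhelming probability relies crucially on~\eqref{eq:cond_3} and on the assumption $\alpha\expM>0$, which guarantees that the perturbation coming from the less-informed TXs genuinely vanishes with $P$; this is also the reason why the threshold $N_1\geq K-1$ appears, as it is exactly what allows TX~$1$ to single-handedly absorb the correction without sacrificing the beamforming gain directed to its own served user.
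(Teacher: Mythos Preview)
Your high-level architecture---quantize at TXs~$2,\dots,M$, have TX~$1$ predict the bins and correct, split the expectation over the feasible-consistency event $\Omega$, and use a crude $\Oc(\log P)$ bound on $\Omega^\setcomp$---matches the paper's proof. However, there is a genuine gap in your treatment of the $\Omega$-term.

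You assert that on $\Omega$ ``the effective CD-ZF precoder equals the centralized ZF precoder computed from $\bHH\expo$: the instantaneous rates coincide, so the two $\Omega$-terms cancel in $\DeltaR$.'' This is not true, and cannot be made true by any choice at TX~$1$. TX~$1$ controls only $\wv_{\ell,1}\in\Cb^{N_1}$; the remaining block $\wv_{\ell,\bar 1}$ is fixed by TXs~$2,\dots,M$ and differs from $\vv_{\ell,\bar 1}$ by the quantization residual, so $\wv_\ell\neq\vv_\ell$ even when TX~$1$ predicts every bin correctly. What TX~$1$ \emph{can} enforce (and what the paper's scheme does, see~\eqref{eq:matrix_exp}--\eqref{eq:eq_1b}) is the $K-1$ linear constraints $\bhh\expo_i\wv_\ell=\bhh\expo_i\vv_\ell$ for $i\neq\ell$; this uses exactly the $N_1\geq K-1$ degrees of freedom. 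But the \emph{actual} interference is $\hv_i\wv_\ell$, not $\bhh\expo_i\wv_\ell$, and since $\hv_i=\zop\expo_{\inv}\bhh\expo_i-z\expo_n\bdelta\expo_i$ one gets $\hv_i\wv_\ell=\hv_i\vv_\ell-z\expo_n\bdelta\expo_i(\wv_\ell-\vv_\ell)$, which is \emph{not} equal to $\hv_i\vv_\ell$ on $\Omega$. The desired-signal term $\hv_i\wv_i$ is likewise perturbed, and there is the additional factor $\mu<1$. Hence the $\Omega$-terms do \emph{not} cancel.

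The missing piece is precisely the content of the paper's Lemmas~\ref{lem:eta_convergence} and~\ref{lem:eta_convergence2}: one must show that, conditioned on $\Omega$, the interference ratio $\Fc_\Ic$ and the total-received-power ratio $\Fc_\Dc$ in~\eqref{eq:eq_proof_1} each satisfy $\Pr(|\Fc-1|\geq\Pb^{-\varepsilon})=o(1/\log P)$ for some $\varepsilon>0$, which in turn requires the estimate $\ExpB{\norm{\wv_\ell-\vv_\ell}}=\Oc(\Pb^{-\alpha_q})$ from Lemma~\ref{lem:convergence_error_tx2}. Without this step your $\Omega$-contribution to $\DeltaR$ is not shown to vanish, and the proof is incomplete. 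A secondary remark: the inconsistency probability is governed by the gap between the quantization step exponent $\alpha_q$ and the estimate accuracy $\alpha\expj$ (one needs $\alpha_q<\alpha\expj$ for all $j$), not by $\alpha\expo-\alpha\expj$ as you wrote; correspondingly, the correction magnitude at TX~$1$ is $\Oc(\Pb^{-\alpha_q})$, not $\Oc(\Pb^{-\alpha\expM})$.
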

				\begin{proof} 
						The proof is relegated to Section~\ref{se:proof_theo}, and it builds on the proposed CD-ZF precoding scheme, which is presented in Section~\ref{se:scheme_distr}. 
				\end{proof}		
					\begin{corollary}[Rate-Offset under Distributed CSIT]
							It holds from Theorem~\ref{theo:rate_convergence} that the rate offset $\Rc^d_\infty\!$ (defined in~\eqref{eq:lin_approx}) of ZF with Distributed CSIT is the same as for the ideal centralized setting, whose rate offset was shown in~\cite{Jindal2006} to be constant for the case of $\alpha^\star=1$ with respect to Perfect CSIT ZF---and thus with respect to the capacity-achieving Dirty Paper Coding. 
					\end{corollary}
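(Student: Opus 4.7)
The plan is to read off the corollary directly from Theorem~\ref{theo:rate_convergence} by subtracting the two affine expansions and then inserting the centralized characterization of~\cite{Jindal2006}.

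First, I would write down the affine expansions of~\eqref{eq:lin_approx} for both settings,
\[
R^{\star}(\alpha^{(1)}) = \DoF^{\star}\log_2(P) - \Rc^{\star}_{\infty} + o(1), \qquad R(\balpha_M) = \DoF_d\log_2(P) - \Rc^{d}_{\infty} + o(1),
\]
and subtract them to obtain
\[
R^{\star}(\alpha^{(1)}) - R(\balpha_M) = (\DoF^{\star} - \DoF_d)\log_2(P) + (\Rc^{d}_{\infty} - \Rc^{\star}_{\infty}) + o(1).
\]
Theorem~\ref{theo:rate_convergence} guarantees that the left-hand side vanishes as $P\to\infty$. Since $\log_2(P)$ diverges, the coefficient of $\log_2(P)$ must be zero, yielding $\DoF_d = \DoF^{\star}$; substituting this back and passing to the limit forces $\Rc^{d}_{\infty} = \Rc^{\star}_{\infty}$. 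In other words, the whole affine expansion of the D-CSIT rate coincides, up to $o(1)$, with that of the ideal centralized counterpart.

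Second, I would import the characterization of the centralized rate offset. In the Sorted CSIT setting of Section~\ref{se:distr_mod}, the ideal centralized scenario reduces to a MISO BC with $N_T$ transmit antennas and CSIT accuracy $\alpha^{\star}=\alpha^{(1)}$. For $\alpha^{\star}=1$, \cite{Jindal2006} shows that the rate offset of ZF with such an estimation quality differs from the Perfect-CSIT ZF offset only by a bounded constant. Since under perfect CSIT DPC and linear ZF share the same DoF and differ only by a constant rate offset (cf.~\cite{Lee2007}), combining these two facts with the identity $\Rc^{d}_{\infty} = \Rc^{\star}_{\infty}$ established above transfers the constant-offset property to the distributed setting, concluding the corollary.

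The only genuine obstacle is the bookkeeping of the vanishing terms: one has to verify that the $o(1)$ gap given by~\eqref{eq:theo_eq} dominates the individual $o(1)$ residues of the two affine expansions, so that matching the constant coefficients is unambiguous and not an artefact of different sub-leading behaviours. Beyond this routine check, no further analytical ingredient than Theorem~\ref{theo:rate_convergence} together with the centralized result of~\cite{Jindal2006} is needed.
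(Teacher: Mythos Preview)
Your argument is correct and is exactly the reasoning the paper has in mind: the corollary is stated without a separate proof, as an immediate consequence of Theorem~\ref{theo:rate_convergence}, and your subtraction of the two affine expansions is the standard way to make this explicit.

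One remark on your ``genuine obstacle'': there is in fact no bookkeeping issue. Once Theorem~\ref{theo:rate_convergence} gives $R^{\star}(\alpha^{(1)}) - R(\balpha_M) = o(1)$, you can simply write
\[
R(\balpha_M) = R^{\star}(\alpha^{(1)}) - o(1) = \DoF^{\star}\log_2(P) - \Rc^{\star}_{\infty} + o(1),
\]
which directly exhibits an affine expansion for $R(\balpha_M)$ with $\DoF_d=\DoF^{\star}$ and $\Rc^{d}_{\infty}=\Rc^{\star}_{\infty}$; no separate matching of sub-leading terms is required.
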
			
				Remarkably, Theorem~\ref{theo:rate_convergence} implies\footnote{\cmtblu{We assume (see~\eqref{eq:order_alphas}) that the inequality $\alpha\expo>\alpha\expt$ is strict. This follows because, if $\alpha\expo=\alpha\expt$, since both channel realizations and noise are Gaussian, we can write $\bHH\expt$ as a noisy version of $\bHH\expo$, with a noise variance $\sqrt{2}P^{-\alpha\expo}$. Hence, the error generated when zero-forcing  the interference will have $\sqrt{2}$ times more variance than in the centralized case, which precludes the convergence presented in Theorem~\ref{theo:rate_convergence}.}  } %
				that it is possible to achieve not only the multiplexing gain but also the beamforming gain achieved by the ideal $N_T$-antenna MISO BC, even if only $N_1$ antennas are endowed with the maximum precision. 									
				\cmtblu{We would like to remark that the constraint $N_{1}\geq K-1$, i.e., that the TX with the most precise CSI has a number of antennas at least equal to the number of interfered RXs, comes from the fact that, if $N_{1} < K-1$, the use of only ZF is not enough to achieve the DoF of the centralized setting\cite{Bazco2018_TIT}, and thus $\limpf {R^{\star}(\alpha^{(1)})}-	{R^{}(\balpha_M)} = \infty$. 
				Despite that, it was shown in\cite{Bazco2018_TIT} that for certain regimes of the parameters $\alpha\expj$ it is possible to reach the ideal centralized DoF. 
				This is achieved by means of a transmission scheme which comprises interference quantization and retransmission, superposition coding at the TXs, and successive decoding at the RXs. 
				Since in this work we focus on a simple ZF transmission, we restrict the analysis to the DoF-achieving regime $N_{1} \geq K-1$, because it is the only regime in which the DoF (and thus the rate gap) of the centralized CSIT setting can be achieved. For the other cases, the distributed CSIT setting does not achieve the centralized DoF, and thus the analysis here considered is not meaningful.} 
				
				It is known that the optimal DoF of the centralized CSIT setting with precision $\alpha^{(1)}$ is $1 + (K-1)\alpha^{(1)}$\cite{Davoodi2018_TIT_BC}, which is attained by means of superposition coding where a common message is broadcast and intended to be decoded before the zero-forced messages. 
				It is remarkable that, in the regime of interest, $N_{1} \geq K-1$, the distributed CSIT setting performance still converges asymptotically to the centralized performance even if superposition coding is applied. 
				This comes from the fact that the instantaneous power applied converges to the one used in the centralized setting (as we prove at a later stage), such that the common symbol broadcast 
				can be sent with the same rate.

			\subsection{Achievability: Some Insights} \label{se:scheme_distr_intro}
			Theorem~\ref{theo:rate_convergence} evidences that the issues associated with \emph{feasible consistency} between the TXs (which are enunciated in Section~\ref{subse:centr_2_distr}) can be overcome. 
			Intuitively, the strategies mentioned in Section~\ref{subse:centr_2_distr} are extreme cases of consistency.
			Particularly, the Naive ZF represents the extreme in which consistency is not considered, whereas the Active-Passive ZF embodies the extreme with perfect consistency but limited precision and possible Power Outage, such that there may not be \emph{feasible} consistency.  
			The block diagrams of Naive ZF and Active-Passive ZF are shown in Fig.~\ref{fig:naive_zf_diagram} and Fig.~\ref{fig:apzf_zf_diagram}, respectively. 
			The main question is whether a good compromise can be found. 
			
			We can build on the idea first presented in~\cite{Bazco2019_ISIT} for the simple $2\times 2$ setting, i.e., that discretizing the decision space of the TXs helps to enforce consistency.  
			Yet, the application of this idea is  not straightforward, as in~\cite{Bazco2019_ISIT} the only source of inconsistency was a single scalar power parameter, and no beamforming was possible. 
			Specifically, the strategy in~\cite{Bazco2019_ISIT} was to design the ZF precoder such that the TX having worse channel estimate for a certain user (i.e., smaller $\alpha\expj$) precodes with a  single real value, and thus it does not consider the possible phase of the coefficient. 
			Then, the other TX (the \emph{best-informed} one) fully controls the phase tuning necessary to cancel out the interference, such that  
			both TXs need to agree only on a single value of transmit power---which must yet be agreed upon with high precision. 
			In the general $M\times K$ setting here considered, however, beamforming and phase precoding must be applied at  every TX equipped with more than one antenna, which rules out the aforementioned strategy. 
			
			Nevertheless, the main insight is still valid: By means of discretizing the decision space of the TXs that do not have the best CSI, we construct a probabilistic hierarchical setting in which the best informed TX is able to estimate correctly the action taken by the other TXs with a certain probability. 
			Interestingly, this discretization can be applied to either the available information (the channel matrix) or the output parameters (the precoding vector). 
			Both cases are illustrated in Fig.~\ref{fig:quantized_zf}.  
			This flexibility is due to the asymptotic nature of our analysis and the properties of linear systems \cmtblu{with respect to error propagation}.\footnote{\cmtblu{These properties will be detailed in Appendix~\ref{app:lemma_precoder_tx2}. There, it is shown that the error scaling of the output of a system that applies additions, products, and matrix pseudo-inverses is the same as the error scaling of the input.}}  
			It is however clear that the performance at low-to-medium SNR can importantly differ  for each of the cases. 
			In this document, we focus on the scheme that quantizes the channel matrix for the sake of a better understanding, as the proof is less devious, and because the proof for the other case (quantizing the precoder) follows the same approach and steps. 
			
			\cmtblu{Actually, we will see that this quantization is crucial for the results. Indeed, if TX~1 attempts to estimate the information used at the other TXs without any quantization, the estimation error will scale as the error variance of the other TX. Thus, TX~1 would be unable to compensate and cancel out the interference generated by the other TXs. }
			
			The key for attaining the surprising result of~Theorem~\ref{theo:rate_convergence} is the proposed precoding scheme, whose rigorous description  is presented in the following section. 
			The proof of Theorem~\ref{theo:rate_convergence} relies on a simple idea: Let $A$ be a set enclosing the \emph{feasible consistent} cases in which the precoders transmit coordinately, and $A^\setcomp$ its complementary event. 
			Hence, the rate gap ${\DeltaR} \triangleq {R^{\star}(\alpha^{(1)})}-	{R^{}(\balpha_M)}$ can be expressed as
				\eqm{
						{\DeltaR} = \DeltaR_A \Pr(A) + \DeltaR_{A^\setcomp} \Pr(A^\setcomp).
				}
			The transmission scheme has to be both feasible consistent ($\Pr(A^\setcomp)\rightarrow 0$) and, for the consistent cases, it has to be precise ($\DeltaR_A \rightarrow 0$). 
			Withal, it turns out that these two conditions are not enough to obtain Theorem~\ref{theo:rate_convergence}. 
			In particular, we need not only that the scheme is feasible consistent ($\Pr(A^\setcomp)\rightarrow 0$) but also that it attains consistency \emph{fast enough} with respect to the CSI scaling, i.e., that $\DeltaR_{A^\setcomp} \Pr(A^\setcomp)\rightarrow 0$. 
			
			As a matter of example, consider that the rate gap for the inconsistent cases ($\DeltaR_{A^\setcomp}$) scales proportionally to $\log(P)$ (as it will proven later on) such that $\DeltaR_{A^\setcomp} = \Theta(\log(P))$. 
			Then, if the probability of inconsistency $\Pr(A^\setcomp)$ approached to $0$ at a convergence rate proportional to $\frac{1}{\log(P)}$ ($\Pr(A^\setcomp)=\Theta(\frac{1}{\log(P)})$), the term $\DeltaR_{A^\setcomp} \Pr(A^\setcomp)$ would not vanish and Theorem~\ref{theo:rate_convergence} would not hold. 
			Fortunately, we can design the achievable scheme so as to reduce the probability of cases without consistency \emph{and} without power outage faster than $\Theta(\frac{1}{\log(P)})$, \emph{while providing a precise precoding and vanishing gap $\DeltaR_A$ in the consistent cases}. 
			This last condition is critical: As $P$ increases, the precision of the precoder in the consistent cases must increase faster than $\Oc(P)$. Otherwise, $\DeltaR_A$ would not vanish. 
			The key to attaining  sufficient precision for the consistent cases while reducing at the same time the probability of inconsistent cases is that they depend on different scaling of $P$ ($\Theta(\log(P))$ versus $\Theta(P)$). 
			This is rigorously shown in Section~\ref{se:proof_theo}. 
				
				\begin{figure}[t]\centering%
					\begin{subfigure}[b]{\columnwidth} 
						    \begin{tikzpicture}[>=latex',scale=.93, every node/.style={transform shape}]
        \tikzset{
					block/.style= {draw, rectangle, align=center,minimum width=2cm,minimum height=1cm},
        }
				
				
        \node   (tx2) {TX 2 :};
        \node [right = .25cm of tx2]  (h2) {$\bHH\expt$};

        \node [block, right = 0.5cm of h2]  (prec2) {						
					\begin{tabular}{c} 
								ZF\\[-0.05ex] 
							\!\!\!	Precoding \!\!\!
						\end{tabular} 
				};			

        \node [above right=-4ex and 2ex of prec2]  (out_zf2) {$\wv_1^{{{\textbf{(2)}}}}$};
        \node [below right=-4ex and 2ex of prec2]  (out_pw2) {$\wv_2^{{{\textbf{(2)}}}}$};
				
        \node [block, right=9ex of prec2, minimum width = .5cm]  (in_pw2) {						
					\begin{tabular}{c} 
								Power\\[-.05ex]
							\!\!\!	Control \!\!\!
						\end{tabular} 
				};				

        \node [right = .5cm of in_pw2]  (out2) {$\bT\expt_{2}$};

				
        \node [above = 6ex of tx2]  (tx1) {TX 1 :};
        \node [right = .25cm of tx1]  (h1) {$\bHH\expo$};

        \node [block, right = 0.5cm of h1, minimum width = .5cm]  (prec1) {						
					\begin{tabular}{c} 
								ZF\\[-.05ex]
							\!\!\!	Precoding \!\!\!
						\end{tabular} 
				};	
				
        \node [above right=-4ex and 2ex of prec1]  (out_zf1) {$\wv_1^{{{\textbf{(1)}}}}$};
        \node [below right=-4ex and 2ex of prec1]  (out_pw1) {$\wv_2^{{{\textbf{(1)}}}}$};
				
        \node [block, right=9ex of prec1, minimum width = .5cm]  (in_pw1) {						
					\begin{tabular}{c} 
								Power\\[-0.05ex]
							\!\!\!	Control \!\!\!
						\end{tabular} 
				};			

        \node [right = .5cm of in_pw1]  (out1) {$\bT\expo_{1}$};
								
	

	%


        \path[draw, ->]
            (h1) edge (prec1)
            (prec1) edge (out_zf1)
            (prec1) edge (out_pw1)

            (h2) edge (prec2)
            (prec2) edge (out_zf2)
            (prec2) edge (out_pw2)

						(out_zf2) edge (in_pw2)
            (out_pw2) edge (in_pw2)		
						
						(out_zf1) edge (in_pw1)
            (out_pw1) edge (in_pw1)														

            (in_pw1) edge (out1)														
            (in_pw2) edge (out2)														
                    ;
										
    \end{tikzpicture}\vspace{-2ex}
						\caption{Block diagram of conventional ZF applied naively in the $2\times2$ Distributed CSIT scenario (No Consistency).}\vspace{2ex}
						\label{fig:naive_zf_diagram}
					\end{subfigure}
					\hspace{1ex} 
					\begin{subfigure}[b]{0.47\textwidth} 
						    \begin{tikzpicture}[>=latex',scale=0.92, every node/.style={transform shape}]
        \tikzset{
					block/.style= {draw, rectangle, align=center,minimum width=2cm,minimum height=1cm},
        }
				
				
        \node   (tx2) {TX 2 :};
        \node [above right =-1.5ex and .25cm of tx2]  (h2) {$\bHH\expt$};
        \node [below right =-1.5ex and .25cm of tx2]  (t2) {$\alpha\expo, \alpha\expt$};

        \node [right = 3.3ex of h2] (prec2){\!\!\!{\color{red}{\textbf{x}}}};
				
        \node [block, right=22ex of tx2, minimum width = .05cm]  (in_pw2) {						
					\begin{tabular}{c} 
							\!\!\!\!\!	Power 	Control \!\!\! \!\!\! \!\!\! \!\!\! \\[-0.05ex]
							\!\!\!\!\!\!\!	(Constant) \!\!\!		\!\!\!		\!\!\!					

						\end{tabular} 
				};				
				
        \node [coordinate,  below left = -2.0ex and 0cm of in_pw2] (a2pw2){};

        \node [right = 3.3ex of in_pw2]  (out2) {$\bT_{2}$};

				
        \node [above = 6ex of tx2]  (tx1) {TX 1 :};
        \node [right = .25cm of tx1]  (h1) {$\bHH\expo$};

        \node [block, right = 3.3ex of h1]  (prec1) {						
					\begin{tabular}{c} 
								ZF\\[-0.05ex] 
							\!\!\!	Precoding \!\!\!
						\end{tabular} 
				};
				
        \node [above right=-4ex and 2ex of prec1]  (out_zf1) {$\wv_1^{{{\textbf{(1)}}}}$};
        \node [below right=-4ex and 2ex of prec1]  (out_pw1) {$\wv_2^{{{\textbf{(1)}}}}$};
				
        \node [block, right=9ex of prec1, minimum width = .25cm]  (in_pw1) {						
					\begin{tabular}{c} 
								Power\\[-0.05ex]
							\!\!\!	Control \!\!\!
						\end{tabular} 
				};				

        \node [right = 3.3ex of in_pw1]  (out1) {$\bT\expo_{1}$};
								
	

				



				
        \path[draw, ->]
            (h1) edge (prec1)
            (prec1) edge (out_zf1)
            (prec1) edge (out_pw1)

            (h2) edge (prec2)
						
            (t2) edge (a2pw2)		
						
						(out_zf1) edge (in_pw1)
            (out_pw1) edge (in_pw1)														

            (in_pw1) edge (out1)														
            (in_pw2) edge (out2)														
                    ;
										
    \end{tikzpicture}\vspace{-2ex}
						\caption{Block diagram of AP-ZF applied in the $2\times2$ Distributed CSIT scenario (Full Consistency but possible infeasibility).}
						\label{fig:apzf_zf_diagram}							
					\end{subfigure}
					\caption{Simple strategies for distributed precoding. }
						\label{fig:diagrams}
				\end{figure}
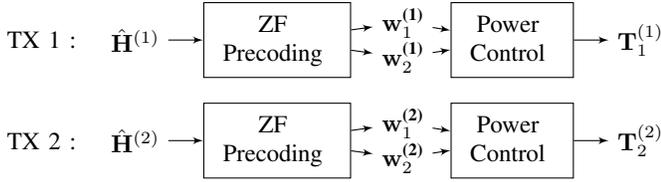
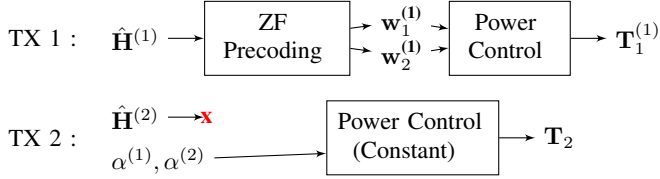
			
			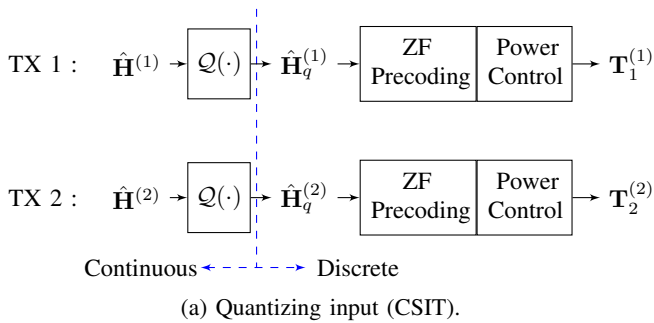
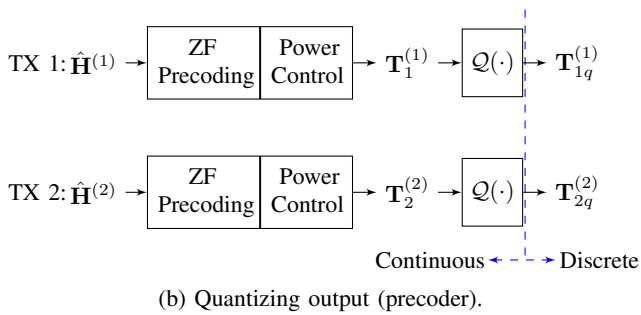
\begin{figure}[t]\centering%
				\begin{subfigure}[b]{0.47\textwidth} 
					    \begin{tikzpicture}[>=latex',scale=0.93, every node/.style={transform shape}]
        \tikzset{
					block/.style= {draw, rectangle, align=center,minimum width=2cm,minimum height=1cm},
        }
				
				
        \node   (tx2) {TX 2 :};
        \node [right =.25cm of tx2]  (h2) {$\bHH\expt$};
				
        \node [block, right = .25cm of h2, minimum width = 1ex]  (q2) {$\Qc(\cdot)$};

        \node [right = .3cm of q2]  (h2q) {$\bHH\expt_q$};

				\node [block, right = 2ex of h2q, minimum width=0.5ex] (prec2){
					\begin{tabular*}{1.3cm}{c} 
								\!\!\! ZF \\[-0.05ex] 
							\!\!\!\!\!\!\,	Precoding  \!\!\!\!\!\!
						\end{tabular*} 				
				};        
				\node [block, right = 0cm of prec2, minimum width=0.5ex] (pw2){
					\begin{tabular*}{1cm}{c} 
							  \!\!\!\!	 Power  \!\!\!\!\!\! \\[-0.05ex] 
							 \!\!\!\!\!	Control \!\!\!\!\!\! 
						\end{tabular*} 				
				};
        \node [right = 2.5ex of pw2]  (qout2) {$\bT\expt_2$};
				




				
        \node [above = 9ex of tx2]  (tx1) {TX 1 :};
        \node [right = .25cm of tx1]  (h1) {$\bHH\expo$};

        \node [block, right = .25cm of h1, minimum width = 1ex]  (q1) {$\Qc(\cdot)$};

        \node [right = .3cm of q1]  (h1q) {$\bHH\expo_q$};

        \node [block, right = 2ex of h1q, minimum width=0.5ex] (prec1){
					\begin{tabular*}{1.3cm}{c} 
								\!\!\! ZF \\[-0.05ex] 
							\!\!\!\!\!\!\,	Precoding  \!\!\!\!\!\!
						\end{tabular*} 				
				};        
				\node [block, right = 0cm of prec1, minimum width=0.5ex] (pw1){
					\begin{tabular*}{1cm}{c} 
							  \!\!\!\!	 Power  \!\!\!\!\!\! \\[-0.05ex] 
							 \!\!\!\!\!	Control \!\!\!\!\!\! 
						\end{tabular*} 				
				};
        \node [right = 2.5ex of pw1]  (qout1) {$\bT\expo_1$};

				

								

												
	

	\draw [dashed, blue] (3.05,2.7) -- (3.05,-1);

	\draw [dashed, blue, <->] (2.25,-1) -- (3.75,-1);
		
	\draw (1.4,-1) node {Continuous};
	\draw (4.5,-1) node {Discrete};


        \path[draw, ->]
            (h1) edge (q1) 	
            (q1) edge (h1q)
            (h1q) edge (prec1)
            (pw1) edge (qout1)
						
            (h2) edge (q2) 	
            (q2) edge (h2q)
            (h2q) edge (prec2)
            (pw2) edge (qout2);
												

										
    \end{tikzpicture}~~~~~\vspace{-3ex}
					\caption{Quantizing input (CSIT).} 
					\label{fig:diagram_input}
				\end{subfigure}\vspace{3ex}
				\hspace{1ex} 
				\begin{subfigure}[b]{0.47\textwidth} 
					    \begin{tikzpicture}[>=latex',scale=0.9, every node/.style={transform shape}]
        \tikzset{
					block/.style= {draw, rectangle, align=center,minimum width=2cm,minimum height=1cm},
        }
				
				
        \node   (tx2) {TX 2:};
        \node [right =-.2cm of tx2]  (h2) {$\bHH\expt$};				

				\node [block, right = 2ex of h2, minimum width=0.5ex] (prec2){
					\begin{tabular*}{1.3cm}{c} 
								\!\!\! ZF \\[-0.05ex] 
							\!\!\!\!\!\!\,	Precoding  \!\!\!\!\!\!
						\end{tabular*} 				
				};        
				\node [block, right = 0cm of prec2, minimum width=0.5ex] (pw2){
					\begin{tabular*}{1cm}{c} 
							  \!\!\!\!	 Power  \!\!\!\!\!\! \\[-0.05ex] 
							 \!\!\!\!\!	Control \!\!\!\!\!\! 
						\end{tabular*} 				
				};
        \node [right = .35cm of pw2]  (qout2) {$\bT\expt_2$};
				
        \node [block, right = .35cm of qout2, minimum width = 1ex]  (q2) {$\Qc(\cdot)$};

        \node [right = .35cm of q2]  (h2q) {$\bT\expt_{2q}$};





				
        \node [above = 9ex of tx2]  (tx1) {TX 1:};
        \node [right = -.2cm of tx1]  (h1) {$\bHH\expo$};

        \node [block, right = 2ex of h1, minimum width=0.5ex] (prec1){
					\begin{tabular*}{1.3cm}{c} 
								\!\!\! ZF \\[-0.05ex] 
							\!\!\!\!\!\!\,	Precoding  \!\!\!\!\!\!
						\end{tabular*} 				
				};        
				\node [block, right = 0cm of prec1, minimum width=0.5ex] (pw1){
					\begin{tabular*}{1cm}{c} 
							  \!\!\!\!	 Power  \!\!\!\!\!\! \\[-0.05ex] 
							 \!\!\!\!\!	Control \!\!\!\!\!\! 
						\end{tabular*} 				
				};
				
        \node [right = .35cm of pw1]  (qout1) {$\bT\expo_1$};

        \node [block, right = .35cm of qout1, minimum width = 1ex]  (q1) {$\Qc(\cdot)$};

        \node [right = .35cm of q1]  (h1q) {$\bT\expo_{1q}$};

				

								

												
	

	\draw [dashed, blue] (7.2,2.7) -- (7.2,-1);

	\draw [dashed, blue, <->] (6.65,-1) -- (7.65,-1);
		
	\draw (5.8,-1) node {Continuous};
	\draw (8.3,-1) node {Discrete};
			

        \path[draw, ->]
            (h1) edge (prec1) 	
            (pw1) edge (qout1)
            (qout1) edge (q1)
            (q1) edge (h1q)
						
            (h2) edge (prec2) 	
            (pw2) edge (qout2)
            (qout2) edge (q2)
            (q2) edge (h2q);
												

										
    \end{tikzpicture}~~~~~\vspace{-3ex}
					\caption{Quantizing output (precoder). } 
					\label{fig:diagram_output}							
				\end{subfigure}
				\caption{Two manners of discretizing decision space: At the input (information) or at the output (action).} 
				\label{fig:quantized_zf}
			\end{figure}

			\subsection{Proposed Transmission Scheme: Consistent Distributed ZF}
			\label{se:scheme_distr}
			We present in the following the Consistent Distributed ZF (CD-ZF) precoding scheme, where the CSIT of the TXs that do not have access to the most precise estimate is quantized to improve the consistency of the scheme.   
			Later, we analyze the feasibility of the proposed precoder. 
			The CD-ZF scheme presents an uneven structure, such that each TX applies a different strategy depending on who has higher average  precision. 
			Furthermore, the proposed scheme independently computes the precoder for the symbols of different RXs, except for the final power normalization.
			We recall that the precoding vector for the message intended at RX~$i$ is defined in~\eqref{eq:setting_def} as $\mu\wv_i\in \Cb^{N_T\times 1}$, and the segment of  $\mu\wv_i$ applied at TX~$j$ is given by $\mu\wv_{i,j}\in \Cb^{N_j\times 1} $.

				\subsubsection{Quantizing the CSIT}\label{subsubse:scheme_channel}%
			
				The block diagram of this precoding scheme is depicted in Fig.~\ref{fig:map_diagram}. 
				Because of the uneven structure of the precoder, we separately describe  the precoding strategy at the best-informed TX (TX 1) and at any  other TX (TX~$2$ to TX~$M$). 
				Let us consider first the later. 						
				The main limitation of the distributed precoding is not the error variance at the restricting TXs, but the impossibility at TX~1 
				of knowing what the other TXs are going to transmit. 
				In order to overcome this problem, all the TXs but TX~1 quantize their estimate of the channel matrix with a known quantizer~$\Qc$.  Hence, for any $j>1$, TX~$j$ does not directly use its CSIT $\bHH\expj$ to precode, but first pre-processes it. In other words, TX~$j$ applies	
					\eqm{
							\bHH\expj_q = \Qc(\bHH\expj).
					}
				The characteristics of the quantizer $\Qc$ will be detailed later. 
				Then, TX~$j$ naively applies a centralized ZF scheme as described in Section~\ref{subse:scheme_centr} but in a distributed manner (based on $\bHH\expj_q$).  
				Since the quantization transforms the continuous variable $\bHH\expj$ into a discrete one, it facilitates that the setting becomes a hierarchical setting in which the information available at other TXs is estimated without explicit communication. 
				
			\begin{figure}[t]\centering
					    \begin{tikzpicture}[>=latex',scale=.78, every node/.style={transform shape}]
        \tikzset{
					block/.style= {draw, rectangle, align=center,minimum width=2cm,minimum height=1cm},
        }

				
        \node   (tx2) {TX 2 :};
        \node [above = 1.2cm of tx2]  (tx1) {TX 1 :};


				\node [right = .2cm of tx1]  (h1) {$\bHH\expo$};								
				
        \node [coordinate, right = 0.25cm of h1] (h1tozfa){};	
        \node [coordinate, above = 0.4cm of h1tozfa] (h1tozfb2){};	
        \node [coordinate, below = 0.4cm of h1tozfa] (h1tozfb1){};	
								
        \node [block, right= 3ex of h1tozfb1, minimum width=0cm, minimum height=0cm]  (map1) {$\text{MAP}(\bHH\expt_q)$};

				\node [right = 3ex of map1]  (h1map) {$\bHH^{(2)\leftarrow(1)}_q$};								
				
        \node [block, right = 5.2cm of h1, minimum height=1.2cm] (prec1){ZF Precoding\\[-3ex] ~};

        \node [coordinate, above left = -0.2cm and 0cm of prec1] (h1tozfc2){};	
        \node [coordinate, below left = -0.2cm and 0cm of prec1] (h1tozfc1){};	
								
        \node [right=2ex of prec1]  (out_zf1) {${{\bT\expo_1}}$};
				
				
        \node [right = .2cm of tx2]  (h2) {$\bHH\expt$};

        \node [block, right= 8ex of h2, minimum width=0cm, minimum height=0cm]  (q2) {$\Qc(\cdot)$};
        \node [right= 6.5ex of q2]  (out_q2) {$\bHH\expt_q$};
				
        \node [block, right = 6.7ex of out_q2, minimum height=1.2cm] (prec2){ZF Precoding};

        \node [right=2ex of prec2]  (out_zf2) {${{\bT\expt_2}}$};				
	

	\draw [dashed, blue, <->] (5.2,-1) -- (4.65,-1) --(4.65,1.7) -- (6.8, 1.7) -- (8, .85)  -- (10.5,.85) -- (10.5,.3);

	\draw [dashed, blue, ->] (4.65,-1) -- (4,-1);
		
	\draw (3,-1) node {Continuous};
	\draw (5.9,-1) node {Discrete};


        \path[draw, ->]
            (h1) -- (h1tozfa) -- (h1tozfb1) -- (map1);
        \path[draw, ->]
            (h1tozfa) -- (h1tozfb2) -- (h1tozfc2);

        \path[draw, ->]							
            (h1map) edge (h1tozfc1)
            (map1) edge (h1map)
						
            (h2) edge (q2)						
            (q2) edge (out_q2)
            (out_q2) edge (prec2)
														
            (prec1) edge (out_zf1)
						
            (prec2) edge (out_zf2);

    \end{tikzpicture}
					\caption{Block diagram of Consistent Distributed ZF applied in the 2x2 Distributed CSIT scenario}%
					\label{fig:map_diagram}%
			\end{figure}
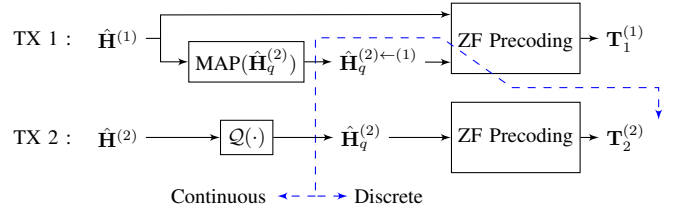	
								
				We focus now on the  precoder at the best-informed TX (TX~1), which attempts to correct the error of the previous TXs. 
				Let TX~$1$ estimate $\bHH\expj_q$ based on its own information $\bHH\expo$, e.g. by computing the Maximum A Posteriori estimator (MAP) of $\bHH\expj_q$: 
					\eqm{
							\bHH^{(j)\leftarrow(1)}_q = \argmax_{\bHH\expj_q\in\Qc(\Cb^{K\times N_T})} \Pr\LB\bHH\expj_q \mid \bHH\expo\RB.
					}
				It is important to notice that the quantized value $\bHH\expj_q$ is not intended to be transmitted, but it is aimed at helping TX~1 to estimate the CSIT used at TX~$j$, \emph{without any explicit communication between them}.
								
				\subsubsection{Canceling the interference}\label{subsubse:scheme_channel2b}
				The CD-ZF scheme naively assumes  that  TX~1 correctly estimates the CSIT at TX~$j$, for any $j$ in $\Nb_M$, such that  
					\eqm{
							\bHH^{(j)\leftarrow(1)}_q = \bHH\expj_q. \label{eq:prob_agree1}
					}
				The probability that~\eqref{eq:prob_agree1} holds true will be evaluated in the following section.  
				Therefore, the fact that~\eqref{eq:prob_agree1} is fulfilled implies that the setting is \emph{consistent}---although it does not guarantee feasibility, which will be ensured by the parameter $\mu$. 
				The goal of TX~$1$ is to imitate the interference cancellation performance that the centralized ZF scheme would achieve if every other TX also had access to $\bHH^{(1)}$.\footnote{We could develop more convoluted schemes in order to  increase the achievable rate at low or medium SNR regimes. 
				As a matter of example, we could analyze the minimum mean-square error (MMSE) precoding as in~\cite{Miretti2021_cellfree}, or design several layers of quantization such that, for every~$j$, TX~$(j-1)$ tries to correct the interference generated by TX~$j$, in a similar manner to the algorithm presented in~\cite{dekerret2016_asilomar}. Nevertheless, in this work we focus on the asymptotic regime, and thus we maintain the scheme in its simplest expression.} 
				In order to provide some insights, we first describe the single-antenna case with 2 TXs and 2 RXs.
				
					\paragraph{$2\times 2$ case}\label{par:distributed_scheme_tx1_2tx}					
						Let $\iop \triangleq i\pmod{2} + 1$. Mathematically, the goal is to have $\absn{\smash{\bhh\expo_{i}\wv_{\iop}}} = \absn{\smash{\bhh\expo_{i}\vv_{\iop}}}$, which can be rewritten as
							\eqm{
								\absn{\smash{\bhh\expo_{i,1}\wv_{\iop,1} + \bhh\expo_{i,2}\wv_{\iop,2}}} = \absn{\smash{\bhh\expo_{i,1}\vv_{\iop,1} + \bhh\expo_{i,2}\vv_{\iop,2}}}. \label{eq:def_prec_distr_1}
							}
						We remind the reader that $\wv$ stands for the distributed precoder whereas $\vv$ stands for the centralized precoder.  Under the assumption that TX~$1$ correctly estimates $\bHH\expt_q$, it knows $\wv_{\iop,2}$. Then, TX~1 computes its precoder such that
							\eqm{
								\wv_{\iop,1}  
										& =  \vv_{\iop,1} + \underbrace{(\bhh\expo_{i,1})^\dagger\bhh\expo_{i,2}(\vv_{\iop,2}-\wv_{\iop,2})}_{\bphi_{\iop}}, \label{eq:eq_1}
							}
						where $(\xv)^\dagger$ denotes the pseudo-inverse\footnote{$(\xv)^\dagger$ could also represent the regularized pseudo-inverse.} of $\xv$, which is known to have minimal Frobenius norm among all the generalized inverses \cite{Wiesel2008}. The term $\bphi_{i}$ represents the correction term that TX 1 has to apply in order to compensate the error introduced by TX 2; note that~\eqref{eq:eq_1} satisfies~\eqref{eq:def_prec_distr_1}.

					\paragraph{$M\times K$ case}\label{par:distributed_scheme_tx1_Mtx} 
					The generalization from the $2\times 2$ case to the general $M\times K$ case  needs one more step. 
					Let TX~1 have $N_1\geq K-1$ antennas. 
					The goal is again to obtain the same interference cancellation as for the centralized precoder, such that, for any RX~$i\in \Nb_K$,  
						\eqm{
							\sum_{\ell\in\Nb_K\backslash i} \absn{\smash{\bhh\expo_{i}\wv_{\ell}}}^2 = \sum_{\ell\in\Nb_K\backslash i}\absn{\smash{\bhh\expo_{i}\vv_{\ell}}}^2.\label{eq:firt_eq_intr_can_ew}
						}
					The equality in~\eqref{eq:firt_eq_intr_can_ew} is attained if (but not only if) $\bhh\expo_{i}\wv_{\ell} = \bhh\expo_{i}\vv_{\ell}$, for any $i,\ell\in\Nb_K$ such that $i\neq\ell$. 
					Let us split the precoding and channel vectors in two parts: $\vv_{\ell,{1}}$, $\wv_{\ell,{1}}$ and $\bhh\expo_{i,{1}}$ denote the sub-vector corresponding to the antennas of TX~1, whereas $\vv_{\ell,\bar{1}}$, $\wv_{\ell,\bar{1}}$ and $\bhh\expo_{i,\bar{1}}$ represent the  sub-vector corresponding to the antennas from TX~2 to TX~$M$.
					The sub-matrices $\bHH_{\ellop,1}$ and $\bHH_{\ellop,\bar{1}}$ are defined in the same manner, and they are illustrated in Fig.~\ref{fig:matrix_division_h} for the sake of better comprehension. 
					Note that we have introduced the notation  $N_{\bar{1}} \triangleq  N_T - N_1$. 
					We can expand the condition $\bhh\expo_{i}\wv_{\ell} = \bhh\expo_{i}\vv_{\ell}$ as a matrix equation in which $\wv_{\ell,1}$ has to satisfy 
						\eqm{
							\bHH_{\ellop,1}\wv_{\ell,1} 
									= \bHH_{\ellop,1}\vv_{\ell,1} 
											+ \bHH_{\ellop,\bar{1}}(\vv_{\ell,\bar{1}} - \wv_{\ell,\bar{1}}).
							\label{eq:matrix_exp}																			
						} 				
					The precoding vector at TX~$1$ is then selected as
						\eqm{
							 \wv_{\ell,1}  = \vv_{\ell,1} + \underbrace{\bHH_{\ellop,1}^\dagger\bHH_{\ellop,\bar{1}}(\vv_{\ell,\bar{1}} - \wv_{\ell,\bar{1}})}_{\bphi_{\ell}} \label{eq:eq_1b}.
						}					 
						\begin{remark}
							\cmtblu{We can see from the dimensionality of the linear system in~\eqref{eq:matrix_exp} that we need $N_1\geq K-1$ in order to make~\eqref{eq:matrix_exp}  feasible with high-probability. This is another way of understanding why we need the restriction that $N_1\geq K-1$. 
							However,  the main reason to have this constraint comes from the limitations of the Distributed CSIT setting. Specifically, it is known that the use of ZF precoding (or other linear precodings) in the Distributed CSIT setting can only achieve the \emph{DoF} of its centralized counterpart when this constraint is satisfied~\cite{Bazco2018_TIT}, and thus $\limpf {R^{\star}(\alpha^{(1)})}-	{R^{}(\balpha_M)} = \infty$ otherwise.							
							}
						\end{remark}
					In~\eqref{eq:matrix_exp}, it is ensured that the interference received is the same as for the centralized setting. 
					It is possible also to ensure that the receive signal  $\bhh\expo_{i}\wv_{i}$ is equal to the one of the centralized setting. 
					This would add an extra equation to the linear system that would require an extra antenna at TX~1. However, it is not necessary as the received intended signal turns out to be statistically equivalent in both distributed and centralized~settings.  
					
						\begin{figure}[t]%
							\centering
								    \begin{tikzpicture}[>=latex',scale=0.8, every node/.style={transform shape}]
        \tikzset{
					block/.style= {draw=white, rectangle, align=center,minimum width=2cm,minimum height=1cm},
        }
				

				\node [block] {\small \renewcommand\arraystretch{0.0}
					$
						\left(\; \begin{array}{cccc}
							\bhh\expo_{1,1} & \bhh\expo_{1,2} & \cdots & \bhh\expo_{1,M}  \\[1ex]
							\vdots & \vdots & \ddots & \vdots  \\[1ex]
							\bhh\expo_{\ell-1,1} &  \bhh\expo_{\ell-1,2} & \cdots & \bhh\expo_{\ell-1,K}\\[3ex]
							\tikzmark{left3}
							\bhh\expo_{\ell,1} &  \bhh\expo_{\ell,2} &  \cdots & \bhh\expo_{\ell,K}\tikzmark{right3} \\[3ex]
							\bhh\expo_{\ell+1,1} &  \bhh\expo_{\ell+1,2} &  \cdots & \bhh\expo_{\ell+1,K}  \\[1ex]
							\vdots &  \vdots &  \ddots &  \vdots \\[1ex]
							\bhh\expo_{K,1} &  \bhh\expo_{K,2} &  \cdots &  \bhh\expo_{K,K} 
						\end{array}\; \right)
					$
				};
					
		\path[draw=blue, dashed] (-2.2,2.2) -- (-1.1,2.2)  -- (-1.1, .44) -- (-2.1, .44) -- (-2.1, -.45) -- (-1.1, -.45) -- (-1.1, -2.2)-- (-2.2, -2.2) -- cycle;
		\path[draw=red!70!black, densely dotted] (-.9, 2.2)  -- (-.9, .44) -- (2.175, .44) -- (2.175, -.45) -- (-.9, -.45) -- (-.9, -2.2) -- (2.3, -2.2) -- (2.3, 2.2) -- cycle;
								
		\node [draw=none,inner sep=0pt,minimum size=0.3cm] at (-4.0, 0) { $\,\bHH_{\bar{\ell},{1}}\in\Cb^{K-1\times N_1}$};
		\node [draw=none,inner sep=0pt,minimum size=0.3cm] at (4.3, 0) { $\,\bHH_{\bar{\ell},\bar{1}}\in\Cb^{K-1\times N_{\bar{1}}}$};
		\path[->, draw=blue, dashed] (-2.2,-1.5) -- (-3.2, -.2);
		\path[->, draw=red!70!black, densely dotted] (2.3,-1.5) -- (3.2, -.2);
										
		\tikzdrawbox[(0pt,0pt)]{3}{thick,green!70!black}
										
				
        \node [block]  (prec2) {						
					\begin{tabular}{c} 
								~\\~\\~\\~\\~\\~\\~\\
						\end{tabular} 
				};			
						
    \end{tikzpicture}\vspace{3ex}
								\caption{Illustration of the channel sub-matrices $\bHH_{\ellop,1}$ and $\bHH_{\ellop,\bar{1}}$. }\label{fig:matrix_division_h}%
						\end{figure}
								
				\subsubsection{Feasibility and consistency}\label{subsubse:distributed_scheme_tx1}
				In the previous description of the scheme, it has been assumed that TX~1 obtains a feasible consistent precoder, such that it correctly estimates the CSI at the other TXs, and that the obtained precoding vector can be used for transmission. 
				In fact, the CD-ZF scheme naively considers that the precoder is \emph{always} feasible and consistent.

				However, the transmission scheme will suffer from the two main issues described in Section~\ref{subse:centr_2_distr}: 
				Power outage, since it has to satisfy that $\norm{\bT_{1,n}} \leq 1$ for any~$n\in\Nb_{N_1}$, and \emph{Consistency}, as the quantization of the CSI at TX~$j$ allows TX~1 to obtain the quantized CSI only with a certain probability. 
				Further, 
								as explained in Section~\ref{se:scheme_distr_intro}, we also need to demonstrate that the interference cancellation is precise enough in the feasible consistent cases.  
				In the following we present some properties that will prove instrumental in dealing with those limitations. 
				For that, we introduce a set of quantizers that are essential in the proof of Theorem~\ref{theo:rate_convergence}. We first tackle the probability of inconsistency.			
					\begin{definition}[Asymptotically Consistent Quantizers] 
							A quantizer $\Qc$ is said to be  \emph{Asymptotically Consistent} if the probability of correct estimation of the MAP estimator at TX~1 satisfies
								\eqm{
										 \Pr\left(\bHH^{(j)\leftarrow(1)}_q \neq\bHH\expj_q\right)=o\LB\frac{1}{\log_2(P)} \RB\!\!\,, \quad \forall j\in\Nb_M.  \label{eq:cond_agreement} \tag{P1}
								}
					\end{definition}			
				Property~\eqref{eq:cond_agreement} implies that it is possible to induce that the probability of having \emph{inconsistent} precoding among TXs vanishes faster than $1/{\log_2(P)}$. 
				This fact implies that the rate impact of inconsistent precoding events vanishes asymptotically, as mentioned in Section~\ref{se:scheme_distr_intro}.  
				Clearly, it remains to prove that there exists some quantizer $\Qc_c$ that satisfies~\eqref{eq:cond_agreement} while also ensuring both feasibility and an adequate precision for the consistent cases. 
				Surprisingly, very simple quantizers as the scalar uniform quantizer satisfy the requirements. 
				To prove this statement, we first show that this quantizer is an Asymptotically Consistent Quantizer. 
					\begin{lemma}\label{lem:uniform_quantizer} 
							Let $\Qc_u(\bX)$ be a scalar uniform quantizer with quantization step $q = \Pb^{-\alpha_q}$, where $\alpha_q$ is such that $\alpha\expj>\alpha_q>0$, for all~$j\in\Nb_M$. 
							Then, $\Qc_u$ is an \emph{Asymptotically Consistent Quantizer} and 
								\eqm{
										\Pr\LB\bHH^{(j)\leftarrow(1)}_q  \neq \bHH\expj_q \RB = o\LB\frac{1}{\log_2(P)}\RB,\quad \forall j\in\Nb_M.
								}
					\end{lemma}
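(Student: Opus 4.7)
The plan is to reduce the matrix statement to a single complex coefficient via a union bound, then exploit the fact that the posterior of the true estimate $\bHH^{(j)}$ given $\bHH^{(1)}$ is concentrated on a scale much finer than the quantization step.

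First I would write down, for each scalar entry $\hh^{(j)}$ and $\hh^{(1)}$ of the estimates at TXs $j$ and $1$, the joint Gaussian law implied by \eqref{eq:dcsit}. Because $(\hh^{(j)},\hh^{(1)})$ are jointly $\CN$, the conditional law of $\hh^{(j)}$ given $\hh^{(1)}$ is Gaussian with mean $\mu_{j|1} \triangleq \zop^{(j)}\zop^{(1)}\hh^{(1)}$ and conditional variance $\sigma_{j|1}^2 = 1-(\zop^{(j)}\zop^{(1)})^2 = (z^{(j)})^2 + (z^{(1)})^2 - (z^{(j)}z^{(1)})^2 = \Theta(\Pb^{-2\alpha^{(j)}})$, using the ordering \eqref{eq:order_alphas}. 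Since the quantization step satisfies $q=\Pb^{-\alpha_q}$ with $\alpha^{(j)}>\alpha_q$, the posterior standard deviation is $o(q)$.

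Second, I would argue that the MAP bin $\bHH^{(j)\leftarrow(1)}_q$, computed entry-wise, is exactly the bin containing $\mu_{j|1}$, up to a negligible boundary event where $\mu_{j|1}$ sits within distance $O(\sigma_{j|1})$ of a bin edge, and in that event the bound below already captures the error. Thus on the complementary event, $\bHH^{(j)\leftarrow(1)}_q \neq \bHH^{(j)}_q$ for a given coefficient requires $\hh^{(j)}$ and $\mu_{j|1}$ to lie in different bins, which forces $\mu_{j|1}$ to lie within distance $|\hh^{(j)}-\mu_{j|1}|$ of a bin boundary.

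Third, conditioning on $\hh^{(1)}$, the location of $\mu_{j|1}$ modulo $q$ is essentially uniform (its unconditional density is bounded, so the induced density on $[0,q)$ is bounded by a constant independent of $P$). Hence, integrating over the Gaussian tail of $\hh^{(j)}-\mu_{j|1}$ of standard deviation $\Theta(\Pb^{-\alpha^{(j)}})$, the per-coefficient error probability is bounded by a constant times $\Pb^{-\alpha^{(j)}}/q = \Pb^{-(\alpha^{(j)}-\alpha_q)} = P^{-(\alpha^{(j)}-\alpha_q)/2}$. Summing real and imaginary parts and applying a union bound across the $K N_T$ entries of $\bHH^{(j)}$, and then across $j\in\Nb_M$, the total probability of inconsistency is $O(P^{-\epsilon})$ for some $\epsilon>0$, which decays polynomially in $P$ and is therefore $o(1/\log_2 P)$, establishing \eqref{eq:cond_agreement}.

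The main delicate point is step two: justifying that the MAP bin coincides with the bin of $\mu_{j|1}$ uniformly in $P$. The cleanest way I would execute it is to compare posterior masses of adjacent bins. Since the posterior density is Gaussian with standard deviation $\Pb^{-\alpha^{(j)}}$ and bin width $\Pb^{-\alpha_q}\gg\Pb^{-\alpha^{(j)}}$, the bin containing $\mu_{j|1}$ dominates its neighbours unless $\mu_{j|1}$ is within $O(\sigma_{j|1}\sqrt{\log P})$ of a boundary, an event of probability $O(\Pb^{-(\alpha^{(j)}-\alpha_q)}\sqrt{\log P})$, still $o(1/\log_2 P)$. With that handled, the rest is direct tail bounds on Gaussians and a union bound over coefficients, both routine.
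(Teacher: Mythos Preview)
Your approach is correct and reaches the same $O(P^{-\epsilon})$ bound, but it differs from the paper's in one structural way that is worth noting. The paper bypasses the MAP analysis entirely by the one-line observation
\[
\Pr\big(\bHH^{(j)\leftarrow(1)}_q \neq \bHH^{(j)}_q\big)\ \le\ \Pr\big(\Qc_u(\bHH^{(1)}) \neq \Qc_u(\bHH^{(j)})\big),
\]
since the MAP estimator is optimal among all estimators. From there it does an edge/center split of each quantization cell (edge width $\Pb^{-c_e\alpha_q}$, $1<c_e<\alpha^{(j)}/\alpha_q$) and bounds the center term by Markov on $|\h^{(1)}-\h^{(j)}|$, whose law it computes directly. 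Your route instead characterises the MAP bin as the bin containing the posterior mean $\mu_{j|1}$ and then controls $\Pr\big(\Qc_u(\mu_{j|1})\neq\Qc_u(\hh^{(j)})\big)$ via the posterior concentration. Both yield the same polynomial rate; the paper's version is more elementary because the ``delicate point'' you flag in step two simply never arises, while yours is closer to the actual estimator and gives slightly tighter constants if one tracked them. One small imprecision: the density of $\mu_{j|1}\bmod q$ on $[0,q)$ is \emph{not} bounded by a constant independent of $P$---it is $\Theta(1/q)=\Theta(\Pb^{\alpha_q})$---but this is exactly what makes your ratio $\sigma_{j|1}/q=\Pb^{-(\alpha^{(j)}-\alpha_q)}$ come out right, so the conclusion stands once you phrase that step via an edge/center split (or a truncation at $|\mu_{j|1}|\le R$ with $R$ growing slowly) rather than a bounded-density claim.
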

					\begin{proof}
							The proof is relegated to Appendix~\ref{app:proof_consistent}. 
					\end{proof}				
				Note that $\Qc_u$ is a scalar quantizer. 
				Thus, for a matrix $\bA$, the notation $\bA_q = \Qc_u(\bA)$  denotes (with an abuse of notation) that $\bA_q$ is composed of the independent scalar quantization of the real and imaginary part of each element in $\bA$.  
				Obviously, using vector quantization would improve the performance. 
				However, as the proof of Theorem~\ref{theo:rate_convergence} is constructive, we are interested in an intuitive example, and thus we use $\Qc_u$ for the sake of simplicity. The analysis at medium and low SNR, which requires an optimization on the applied quantizer, is an interesting research topic that is however out of the scope of this work. 
				
				We focus now on the probability of \emph{power outage}.  
				Let us denote the event of power outage as~$\pset$ and recall that the precoder at the $n$-th antenna of TX~$j$ is expressed by $\bT_{j,n}$. 
				In that case, 	
					\eqm{
							\pset\triangleq \bigcup_{\substack{n\in\Nb_{N_j}\\ j\in\Nb_M}} \big\{\norm{\bT_{j,n}} > 1 \big\}.
					}   
				The power outage probability is handled by the parameter $\mu$ of the precoding vector $\mu\wv_i\in \Cb^{N_T\times 1}$. 
				The value of $\mu$ acts as power back-off that can be tuned  to achieve the required scaling, as stated in the following lemma. 
				\begin{lemma}\label{lem:outage} 
						Let $\mu = 1 - \varpi$, where $\varpi > 0,\ \varpi = \Theta(\Pb^{-\alpha_{\mu}})$, and $\alpha_{\mu} < \alpha_q$. Then, 
							\eqm{\label{eq:lem2_p2}
									\Pr\LB\pset\RB = o\LB\frac{1}{\log_2(P)}\RB.\tag{P2}
							}
				\end{lemma}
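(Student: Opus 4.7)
The plan is to reduce the outage event $\pset$ to a tail event on the per-antenna correction vector at TX~1, and then to show that a moment bound makes this tail decay polynomially in $\Pb$, which is strictly faster than $1/\log_2 P$.

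First, a union bound over the antennas of all TXs lets me dispose of TXs $j\geq 2$ immediately. At these TXs the applied precoder is the centralized ZF map $\Vc$ evaluated on the quantized channel $\Qc(\bHH\expj)$. Since the centralized ZF class of Section~\ref{subse:scheme_centr} is normalized to satisfy the per-antenna unit-norm constraint~\eqref{eq:power_contraint} for any full-rank input channel, the back-off $\mu=1-\varepsilon<1$ deterministically guarantees $\|\mu\bT_{j,n}\|\leq 1$ for every $j\geq 2$ and every antenna $n$. Hence $\Pr(\pset)\leq N_1\,\max_{n\in\Nb_{N_1}}\Pr\!\bigl(\|\mu\bT_{1,n}\|>1\bigr)$. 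Invoking the CD-ZF identity $\wv_{\ell,1}=\vv_{\ell,1}+\bphi_\ell$ and stacking over users, I define $\bV_{1,n},\bPhi_n\in\Cb^K$ as the $n$-th antenna slices across all users of the centralized precoder and of the correction, respectively. The triangle inequality then yields $\|\mu\bT_{1,n}\|\leq\mu(\|\bV_{1,n}\|+\|\bPhi_n\|)$; since the centralized per-antenna constraint gives $\|\bV_{1,n}\|\leq 1$, a power outage at TX~1 requires $\|\bPhi_n\|>\varepsilon/(1-\varepsilon)=\Theta(\Pb^{-\alpha_\mu})$.

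The core of the proof then amounts to controlling $\|\bphi_\ell\|\leq\|\bHH_{\ellop,1}^\dagger\|\,\|\bHH_{\ellop,\bar{1}}\|\,\|\vv_{\ell,\bar{1}}-\wv_{\ell,\bar{1}}\|$. The inner discrepancy is the output gap of the rational map $\Vc$ evaluated at $\bHH\expo$ versus $\Qc(\bHH\expj)$ for $j\geq 2$. A first-order perturbation bound on $\Vc$, combined with the probability-one full-rank property of the relevant channel sub-matrices, gives $\|\vv_{\ell,\bar{1}}-\wv_{\ell,\bar{1}}\|=O(\|\bHH\expo-\Qc(\bHH\expj)\|)$. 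The input gap decomposes into a CSI mismatch of order $\Pb^{-\alpha\expM}$ (in distribution) and a deterministic quantization step of magnitude $\Pb^{-\alpha_q}$; since $\alpha_q<\alpha\expj$ for all $j$, the quantization error dominates the scaling, so all moments of $\|\bphi_\ell\|$ scale at most as $\Pb^{-k\alpha_q}$. Markov's inequality at order $k$ then yields
\[
\Pr\!\bigl(\|\bPhi_n\|>c\,\Pb^{-\alpha_\mu}\bigr)\ \leq\ \frac{\Exp[\|\bPhi_n\|^k]}{c^k\,\Pb^{-k\alpha_\mu}}\ =\ O\!\left(\Pb^{-k(\alpha_q-\alpha_\mu)}\right).
\]
Since $\alpha_q-\alpha_\mu>0$ by hypothesis, choosing $k$ large enough makes the right-hand side decay polynomially in $\Pb=\sqrt{P}$, which is $o(1/\log_2 P)$. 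Combined with the union bound of Step~1, this establishes \eqref{eq:lem2_p2}.

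The main technical obstacle is ensuring sufficiently many finite joint moments of $\|\bHH_{\ellop,1}^\dagger\|$ and $\|\bHH_{\ellop,\bar{1}}\|$, and in particular of the pseudo-inverse in the tight case $N_1=K-1$, where $\bHH_{\ellop,1}$ is square complex Gaussian and its inverse has heavier tails at the small-singular-value end. I would handle this either by truncating the rare event $\{\sigma_{\min}(\bHH_{\ellop,1})<\Pb^{-\beta}\}$ for a small $\beta>0$ (on whose complement the pseudo-inverse norm is controlled polynomially in $\Pb$, while the truncated event itself has polynomially small probability via the bounded density of $\sigma_{\min}$ at zero), or by invoking the known scaling of small-singular-value densities of Gaussian matrices directly; in either case the flexibility afforded by the strict gap $\alpha_q-\alpha_\mu>0$ allows $k$ to be picked large enough to close the bound.
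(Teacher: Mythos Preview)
Your strategy is correct and essentially parallels the paper's: reduce $\pset$ to TX~1, reduce the outage event there to a tail of the correction $\bphi_\ell$, and bound that tail by combining a Markov-type inequality with the fact that $\bphi_\ell$ inherits the $\Pb^{-\alpha_q}$ scaling from the quantization step via Lemma~\ref{lem:convergence_error_tx2}. The details differ in two places. First, you bound $\|\mu\bT_{1,n}\|\leq\mu+\mu\|\bPhi_n\|$, producing the \emph{deterministic} threshold $\varepsilon/(1-\varepsilon)$; the paper instead compares each coordinate $\mu\wv_{i,1,1}$ to the \emph{random} threshold $\vv_{i,1,1}$, which then forces an appeal to assumption~\eqref{eq:cond_2} to control $\E[\|\vv_{1,1,1}\|^{-1}]$. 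Your route is slightly cleaner here and does not need~\eqref{eq:cond_2}. Second, the paper stops at $k=1$ (via Cauchy--Schwarz on $\|\bHH_{\ellop,1}^\dagger\bHH_{\ellop,\bar{1}}\|$ and $\|\vv_{\ell,\bar{1}}-\wv_{\ell,\bar{1}}\|$), obtaining $\Pr(\pset)=O(\Pb^{\alpha_\mu-\alpha_q})$ directly; this already gives $o(1/\log_2 P)$, so going to higher $k$ is unnecessary---and, as you implicitly recognize in your last paragraph, larger $k$ only \emph{worsens} the pseudo-inverse moment requirement rather than helping. Your explicit flagging of the $N_1=K-1$ pseudo-inverse tail (and the truncation fix) is a point the paper's proof glosses over by simply declaring $g_m=\sqrt{\E[\|\bHH_{\iop,1}^\dagger\bHH_{\iop,\bar{1}}\|^2]}$ to be $\Theta(1)$; your treatment is more careful here.
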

						\begin{proof}
								See Appendix~\ref{app:proof_outage}.
						\end{proof}	
				Similarly to  property~\eqref{eq:cond_agreement}, property~\eqref{eq:lem2_p2} in Lemma~\ref{lem:outage} implies that  \emph{power outage}  events are negligible in terms of asymptotic rate. 
				The only TX that may incur in  power outage is TX~$1$, as the other TXs apply the naive centralized precoder and hence they will always satisfy the power constraint. 
				
				To this extent, we have shown that the uniform scalar quantizer $\Qc_u$ enables us to reach the requirements regarding the probability of the cases that are not \emph{feasible consistent}. Hence, it remains to prove that it also attains high enough precision in the feasible consistent cases. 
				This will be proven in Section~\ref{se:proof_theo}. 
				Hereinafter, we assume that the uniform quantizer of Lemma~\ref{lem:uniform_quantizer} is applied in the CD-ZF. 				
				\subsection{Hierarchical CSIT Setting}\label{subse:hierar}
				Theorem~\ref{theo:rate_convergence} shows that it is possible to asymptotically attain  the rate of the centralized setting. 
				Its performance at low-to-medium SNR is however limited by the probability of obtaining a \emph{feasible consistent} precoder.  
				This probability depends on the quantizer applied, the power back-off considered, and the values of $\alpha\expj$, and hence it is challenging to obtain. 
				As shown in Section~\ref{se:scheme_distr}, the precoder is computed assuming a correct estimation of the CSI at the other TXs. 
				Consequently, if the probability of \emph{consistency} is low, the scheme does not perform properly. 
				Moreover, this probability decreases as the network size increases, since TX~1 needs to correctly estimate more parameters.
				
				This limitation is inherent to the D-CSIT setting here assumed, in which each TX only knows its own CSI. 
				However, there exist another practical setting with distributed CSI but in which there is more structure in the network CSI: 
				The Hierarchical CSIT setting (H-CSIT).  
				In this setting, each TX is endowed with its own multi-user CSI $\bHH\expj$, as in the D-CSIT setting, but it is also endowed with the CSI of the TXs having less precision than itself. 
				Namely, in the sorted CSI scenario with $\alpha\expo>\dots\geq\alpha\expM$, TX~$j$ has access to $\{\bHH\expj,\bHH^{(j+1)},\dots,\bHH\expM\}$.
				
				This scenario, although it may seem less practical, may arise in many heterogeneous networks. 
				Fig.~\ref{fig:scenario_case_2} depicts an example:  
				Suppose that the RXs are all connected to the same main TX (e.g. TX~1), and the other TXs are remote radio-heads that receive a coarse version of the CSI by means of a wireless link from TX~1. 
				In this use case, TX~1 will know the CSI available at each other TX. 
				If the CSI sharing is done through dedicated links for each TX, each TX would receive CSI with precision proportional to its own link. 
				If the CSI is broadcast, they may obtain an estimate with different precision if  layered encoding \cite{Ng2009} or analog feedback \cite{ElAyach2012a} is used. 
				
				\begin{corollary}
					Theorem~\ref{theo:rate_convergence} also holds in the Hierarchical CSIT setting, and hence $\limpf {R^{\star}(\alpha^{(1)})}-	{R^{}(\balpha_M)} = 0$.
				\end{corollary}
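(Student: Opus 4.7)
The plan is to reuse the CD-ZF precoding scheme of Section~\ref{se:scheme_distr}, exploiting the fact that in the Hierarchical CSIT setting TX~1 has direct access to the estimates $\{\bHH^{(j)}\}_{j=2}^{M}$ used by all the other TXs. I would keep the behavior of TXs $j>1$ unchanged: each quantizes its estimate via $\Qc_u$ and precodes naively from $\bHH^{(j)}_q$. The only adaptation would be at TX~1: rather than inferring the MAP estimator $\bHH^{(j)\leftarrow(1)}_q$ from $\bHH^{(1)}$, TX~1 applies $\Qc_u$ directly to the known $\bHH^{(j)}$ and recovers $\bHH^{(j)}_q$ exactly, then forms the interference-correcting precoder~\eqref{eq:eq_1b} using these exact quantized matrices.

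With this adaptation, Property~\eqref{eq:cond_agreement} becomes trivial, since $\Pr(\bHH^{(j)\leftarrow(1)}_q \neq \bHH^{(j)}_q) = 0$ for every $j \in \Nb_M$, which is certainly $o(1/\log_2(P))$; in particular, Lemma~\ref{lem:uniform_quantizer} holds a fortiori. Lemma~\ref{lem:outage} also carries over: the precoder at TX~1 keeps the same functional form as in the D-CSIT case---if anything its statistics are sharper, since the correction term $\bphi_\ell$ is now computed from exact rather than estimated values of $\bHH^{(j)}_q$---so the same power back-off $\mu = 1-\varepsilon$ with $\varepsilon \in \Theta(\Pb^{-\alpha_\mu})$ guarantees $\Pr(\pset) = o(1/\log_2(P))$.

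The final step would be to mirror the closing argument of Theorem~\ref{theo:rate_convergence}: writing $\DeltaR = \DeltaR_A \Pr(A) + \DeltaR_{A^\setcomp} \Pr(A^\setcomp)$ with $A$ the feasible consistent event, the second term becomes identically zero under H-CSIT, while the first term vanishes by the same accuracy analysis used in the D-CSIT proof, which depends only on the scaling $z^{(j)} = \Pb^{-\alpha^{(j)}}$ and on the statistics of the quantization error---both unchanged between the two settings.

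I do not anticipate a hard step here: the main observation is simply that H-CSIT makes the consistency event deterministic, eliminating one of the two sources of loss while preserving all other structural ingredients. The only point requiring care is to verify that no step of the original proof implicitly used genuine randomness of $\bHH^{(j)\leftarrow(1)}_q$ in a non-trivial way; since all the accuracy bounds in the D-CSIT proof will be established conditionally on the consistent event, they remain valid verbatim, so the corollary follows.
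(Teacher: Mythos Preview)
Your proposal is essentially correct and matches the paper's approach, which simply states that the corollary ``follows directly from the proof of Theorem~\ref{theo:rate_convergence}.'' The paper's surrounding discussion even notes that in the H-CSIT case quantization at TXs~$j>1$ is no longer necessary (TX~1 can use $\bHH^{(j)}$ directly), though your choice to retain it is harmless.

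One small slip to fix: in your final paragraph you write that the second term $\DeltaR_{A^\setcomp}\Pr(A^\setcomp)$ ``becomes identically zero under H-CSIT.'' That is not quite right, since $A^\setcomp=\pset\cup\Hc_{\neq}$ still contains the power-outage event $\pset$, which you yourself handle via Lemma~\ref{lem:outage} in the preceding paragraph. The correct statement is that $\Pr(\Hc_{\neq})=0$ while $\Pr(\pset)=o(1/\log_2(P))$, so the second term is $o(1)$ rather than identically zero.
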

				\begin{proof}The proof follows directly from the proof of Theorem~\ref{theo:rate_convergence} in Section~\ref{se:proof_theo}.
				\end{proof}
				
				In this setting, TX 1 already knows $\bHH\expj$ for any~$j \in \Nb_M$. 
				Hence, the discretization of the variables at the other TXs is not needed, and the precoders are \emph{consistent} with probability 1. 
				In fact, the idea of quantizing the CSI in the D-CSIT setting boils down to making the CSIT setting asymptotically hierarchical with a high enough probability. 
				The only effect that may restrain TX 1 to achieve the centralized performance is the power outage. 
				Therefore, the performance at medium SNR will improve with respect to the general Distributed CSIT case, and moreover, this performance is not affected by the size of the network, as we will see in the numerical examples of Section~\ref{se:numerical}.

			\section{Proof of Theorem~\ref{theo:rate_convergence}}\label{se:proof_theo}
				In order to prove Theorem~\ref{theo:rate_convergence}, we need to demonstrate that the user rate gap $\DeltaR_i = R^{\star}_i(\alpha^{(1)})-	R^{}_i(\balpha_M)$ vanishes, which directly yields that $\DeltaR = \sum_{i\in\Nb_K}\DeltaR_i$ will also vanish. 
				The proof is divided in several steps: First, armed with Lemma~\ref{lem:uniform_quantizer} and Lemma~\ref{lem:outage}, we show that both \emph{power outage} and \emph{inconsistent precoding} events can be made negligible in terms of rate loss. 
				Then, we prove that the rate gap also vanishes  in the feasible consistent cases thanks to the fine precision of the CD-ZF scheme in these cases. 
				We demonstrate this by showing that both the interference received and the total power received in the distributed setting converge to their counterparts of the centralized setting. 
				
				\subsection{Neglecting Non-consistent Events}\label{subsubse:neglect_issues}
		%
			The proof of Theorem~\ref{theo:rate_convergence} builds on Lemma~\ref{lem:uniform_quantizer} and Lemma~\ref{lem:outage}. 
			Indeed, the proposed scheme will perform poorly in the cases in which the precoder is not feasible consistent, as the scheme is built on the naive assumption that it is \emph{always} feasible  and consistent. 
			Nevertheless, both Lemma~\ref{lem:uniform_quantizer} and Lemma~\ref{lem:outage} illustrate that those events can be made very unlikely (in particular, the probability of these events is $o(\frac{1}{\log(P)})$).  
			Let $\Hc_{\neq}$ denote the set of \emph{inconsistent} events, i.e., $\Hc_{\neq} \triangleq  \bigcup_{2\leq j\leq M }\big\{\bHH^{(j)\leftarrow(1)}_q  \neq \bHH\expj_q \big\}$. 
			Hence, the probability of having \emph{feasible consistent} precoding is $\Pr\big(\pset^\setcomp\cap\Hc_{\neq}^\setcomp\big)$. 
			By means of the law of total expectation, we can split the expected rate gap for RX~$i$ ($\DeltaR_i$) as\footnote{Given a certain feasible event $A$ and its complementary event $A^\setcomp$, the law of total expectation states that $\Exp[X] = \Pr(A)\Exp[X| A] + \Pr(A^\setcomp)\Exp[X|A^\setcomp]$. 
			Furthermore, For any two events $A_1$, $A_2$, and union event $A=A_1\cup A_2$, it follows that $A^\setcomp = (A_1\cup A_2)^\setcomp = (A_1^\setcomp\cap A_2^\setcomp)$.}
				\eqmo{
						\DeltaR_i & = \Pr\LB\pset\cup\Hc_{\neq}\RB \DeltaR_{i\mid\pset\cup\Hc_{\neq}} \\
						 & \quad {}+ \Pr\LB\pset^\setcomp\cap\Hc_{\neq}^\setcomp\RB \DeltaR_{i\mid\pset^\setcomp\cap \Hc_{\neq}^\setcomp}. 
				}
			Let us focus on $\DeltaR_{i\mid\pset\cup\Hc_{\neq}}$. 
			The rate gap can be upper-bounded by setting the rate of the D-CSIT setting to 0, such that $\DeltaR_{i\mid\pset\cup\Hc_{\neq}} \leq R^\star_{i\mid\pset\cup\Hc_{\neq}}(\alpha\expo)$. 
			By neglecting the received interference, we can write that 
				\eqmo{
					&R^\star_{i\mid\pset\cup\Hc_{\neq}}(\alpha\expo)
						 \leq \Exp_{\mid\pset\cup\Hc_{\neq}}\left[\log_2\LB 1 + \PfracK\abs{\hv_i\vv_i}^2\RB\right]\\ 
						&\qquad\quad \leq \log_2( \PfracK) + \Exp_{\mid\pset\cup\Hc_{\neq}}\left[\log_2\LB1+\abs{\hv_i\vv_i}^2\RB\right]\!.\label{eq:bound_rate_notfeascons}
				}				
			Moreover, the set $\pset\cup\Hc_{\neq}$ depends on the different estimation noise at each TX, which is absent in the centralized setting. 
			Accordingly,~\eqref{eq:bound_rate_notfeascons} implies that $R^\star_{i\mid\pset\cup\Hc_{\neq}}(\alpha\expo) = \Theta(\log_2(P))$. 
			Hence, it follows from Lemma~\ref{lem:uniform_quantizer} and Lemma~\ref{lem:outage} that
				\eqmo{
						\Pr\LB\pset\cup\Hc_{\neq}\RB \DeltaR_{i\mid\pset\cup\Hc_{\neq}} &= o\LB\frac{1}{\log_2(P)}\RB \Theta(\log_2(P))\notag
				}
			and consequently
				\eqm{
					\DeltaR_i & = \DeltaR_{i\mid\pset^\setcomp\cap \Hc_{\neq}^\setcomp } + o(1) \label{eq:first_simplif}.
				}					
			Thus, in the remainder of the proof we assume w.l.o.g. that TX~1 knows $\bHH\expj_q$ for any~$j\in\Nb_M$, and that there is no power outage, as both cases become negligible at high SNR. 
			This assumption implies that the setting becomes hierarchical, because TX~$1$ correctly estimates the quantized CSI of the other TXs.   
			It is important to remark that this simplification is only possible because of the proposed scheme, in which we apply a correct power back-off and quantization step. 
			Indeed, the more important outcome of this work---and the main purpose of the careful design of the scheme---is not~\eqref{eq:first_simplif} but the fact that $\DeltaR_{i\mid\pset^\setcomp\cap \Hc_{\neq}^\setcomp}$ converges to the centralized setting rate. 
										
				\subsection{Reformulating the Rate Gap}\label{subsubse:reformul_rategap}
				We can rewrite the rate gap for RX~$i$ as				
					\eqmo{
							\DeltaR_i 
									& = \Exp\left[\log_2\LB 1 + \frac{\PfracK\abs{\hv_i\vv_i}^2}{1 + \PfracK\sum_{\ell\neq i}\abs{\hv_i\vv_{\ell}}^2 }\RB\right]\\
											&\quad\ - \Exp\left[\log_2\LB 1 + \frac{\PfracK\abs{\mu\hv_i\wv_i}^2}{1 + \PfracK\sum_{\ell\neq i}\abs{\mu\hv_i\wv_{\ell}}^2 }\RB\right] \\
									& = \Exp\bigg[\log_2\Big( \underbrace{\frac{1 + \PfracK \sum_{\ell\in\Nb_K\!}\abs{\hv_i\vv_{\ell}}^2 }{1 + \PfracK\sum_{\ell\in\Nb_K\!}\abs{\mu\hv_i\wv_{\ell}}^2 }}_{\Fc_\Dc}\Big)\bigg] \\
											&\quad\ + \Exp\bigg[\log_2\Big( \underbrace{\frac{1 + \PfracK \sum_{\ell\neq i}\abs{\mu\hv_i\wv_{\ell}}^2 }{1 + \PfracK\sum_{\ell\neq i}\abs{\hv_i\vv_{\ell}}^2 }}_{\Fc_\Ic}\Big)\bigg]. \label{eq:eq_proof_1}
					}
				This rewriting of $\DeltaR_i$ allows us to separate the ratio of received interference power ($\Fc_\Ic$) and the ratio of total received power ($\Fc_\Dc$). 
				In the following, we will prove that $\limpf \DeltaR_i = 0$ by showing that $\limpf\ExpB{\log_2(\Fc_i)}=0$ for both $\Fc_\Dc$ and $\Fc_\Ic$. We start with $\Fc_\Ic$ for simplicity, and later we apply a similar argument to $\Fc_\Dc$.
							
				\subsection{Analysis of the Interference Ratio ($\Fc_\Ic$)}\label{subsubse:interf_can_ratio}
				We prove the convergence by upper and lower-bounding $\Fc_\Ic$, and then showing that both bounds converge to 0. We recall that we assume that TX~$1$ is able to transmit the desired precoding vector of~\eqref{eq:eq_1} since the opposite case only yields an $o(1)$ rate contribution. 
					\subsubsection{Upper-bounding $\Exp\left[\log_2\LB \Fc_\Ic\RB\right]$}\label{subsubse:interf_can_ratioa}
					Note that, since  $\mu \leq 1$,   
						\eqmo{
							& \Exp\left[\log_2\LB \frac{1 + \PfracK \sum_{\ell\neq i}\abs{\mu\hv_i\wv_{\ell}}^2 }{1 + \PfracK\sum_{\ell\neq i}\abs{\hv_i\vv_{\ell}}^2 }\RB\right] \\
									&~\hspace{10ex} \leq \Exp\bigg[\log_2\Big( \underbrace{\frac{1 + \PfracK \sum_{\ell\neq i}\abs{\hv_i\wv_{\ell}}^2 }{1 + \PfracK\sum_{\ell\neq i}\abs{\hv_i\vv_{\ell}}^2 }}_{\Fc'_\Ic}\Big)\bigg],    
						}	
					where we have introduced the notation $\Fc'_\Ic$ for the sake of readability. 
					Let $\eta$ be a scalar satisfying $0\leq \eta \leq 1$.  
					We can split the expectation based on whether the term  $\Fc'_\Ic$ is smaller than $1+\eta$ or not. 
					Therefore, 
						\eqmo{
								\!\!\!\!\Exp\left[\log_2\LB \Fc'_\Ic\RB\right] 
										& \!= \Pr\LB\Fc'_\Ic<1+\eta\RB\Exp_{\Fc'_\Ic<1+\eta}\left[\log_2\LB \Fc'_\Ic \RB\right] \\
										& ~~ + \Pr\LB\Fc'_\Ic\geq1+\eta\RB\Exp_{\Fc'_\Ic\geq1+\eta}\left[\log_2\LB \Fc'_\Ic\RB\right]\!. \label{eq:split_expectation1}
						}					
					Now we present a useful lemma. 
						\begin{lemma}\label{lem:eta_convergence}
							Let $\eta = \Pb^{-\varepsilon}$, with $\alpha_q> \varepsilon>0$  and $\varepsilon$ arbitrarily small. Then,
								\eqm{
										\Pr\LB\Fc'_\Ic\geq 1+\eta	\RB = o\LB\frac{1}{\log_2(P)}\RB
								}
							and
								\eqm{
										\Pr\LB\frac{1}{\Fc'_\Ic} \geq 1+\eta	\RB = o\LB\frac{1}{\log_2(P)}\RB.  
								}
						\end{lemma}
						\begin{proof}
							The proof is relegated to Appendix~\ref{subse:prooflemma_eta1}.
						\end{proof}
					Let $\eta = \Pb^{-\varepsilon}$, with $\alpha_q> \varepsilon>0$ and $\varepsilon$ arbitrarily small. 
					Then,~\eqref{eq:split_expectation1} becomes
						\eqmo{
								\Exp\left[\log_2\LB \Fc'_\Ic\RB\right] 
										& \leq \Exp_{\Fc'_\Ic<1+\eta}\left[\log_2\LB \Fc'_\Ic \RB\right] \\
										& \quad\ + o\LB\frac{1}{\log_2(P)}\RB \Exp_{\Fc'_\Ic\geq1+\eta}\left[\log_2\LB \Fc'_\Ic\RB\right] \\ 
										& \leq \log_2(1+\eta) \ + \ o\LB 1\RB \label{eq:exp_bound_2}  
						}					
					since $\Exp_{\Fc'_\Ic\geq1+\eta}\left[\log_2\LB \Fc'_\Ic\RB\right]  = \Oc(\log_2(P))$. In order to prove that  $\Exp_{\Fc'_\Ic\geq1+\eta}\left[\log_2\LB \Fc'_\Ic\RB\right]  = \Oc(\log_2(P))$, note that
						\eqm{
						  &\!\!\!\Exp_{\Fc'_\Ic\geq1+\eta}\left[\log_2\LB \Fc'_\Ic\RB\right]  
								= \frac{1}{\Pr\LB\Fc'_\Ic\geq 1+\eta	\RB}\Big(\!\Exp_{}\left[\log_2\LB \Fc'_\Ic\RB\right] \Big. \notag\\
							&\hspace{8ex}\quad \Big. - \Pr\LB\Fc'_\Ic< 1+\eta	\RB\Exp_{\Fc'_\Ic<1+\eta}\left[\log_2\LB \Fc'_\Ic\RB\right]\Big) \label{eq:fffss} \\
								&\qquad\leq \frac{1}{\Pr\LB\Fc'_\Ic\geq 1+\eta	\RB}\Exp_{}\left[\log_2\LB \Fc'_\Ic\RB\right].\notag 
						}
					Furthermore, 
						\eqmo{
							\!\!\!\!\Exp_{}\left[\log_2\LB \Fc'_\Ic\RB\right] 
								& \leq \Exp\bigg[\log_2\Big( {1 + \PfracK \sum_{\ell\neq i}\abs{\hv_i\wv_{\ell}}^2 }\Big)\bigg]\\
								& \leq \log_2( \PfracK) + \Exp\!\bigg[\!\log_2\!\Big( {1 + \sum_{\ell\neq i}\abs{\hv_i\wv_{\ell}}^2 }\Big)\bigg]\!.\!\! \label{eq:fff}
						}			
					From \eqref{eq:fffss}, \eqref{eq:fff}, and the fact that $\Pr\LB\Fc'_\Ic\geq 1+\eta	\RB = \Theta(1)$, we obtain that $\Exp_{\Fc'_\Ic\geq1+\eta}\left[\log_2\LB \Fc'_\Ic\RB\right]  = \Oc(\log_2(P))$.
					
					\subsubsection{Lower-bounding $\Exp\left[\log_2\LB \Fc_\Ic\RB\right] $}\label{subsubse:interf_can_ratiob}
					Let us now lower-bound the expectation. Note that
						\eqm{
								\Exp\left[\log_2\LB \Fc_\Ic\RB\right] 
										& \geq \log_2(\mu^2) + \Exp\big[\log_2\big( \Fc'_\Ic\big)\big]. \label{eq:eq_ref_lem1}
						}	
					Furthermore, lower-bounding~\eqref{eq:eq_ref_lem1} is equivalent to upper-bounding $\Exp\Big[\log_2\Big( {\frac{1}{\Fc'_\Ic}}\Big)\Big]$.
					By applying Lemma~\ref{lem:eta_convergence} and in a similar way as in~\eqref{eq:exp_bound_2}, we obtain that $\Exp\left[\log_2\LB \tfrac{1}{\Fc'_\Ic}\RB\right] \leq \log_2(1+\eta) \ + \ o\LB 1\RB$ and hence								
						\eqm{
							\Exp\left[\log_2\LB \Fc_\Ic\RB\right] 										
									& \geq \log_2(\mu^2) - \log_2(1+\eta) \ + \ o\LB 1\RB.
						}					
					Consequently, the term $\Exp\left[\log_2\LB \Fc_\Ic\RB\right]$ can be bounded as
						\eqmo{
								& \log_2(\mu^2) - \log_2(1+\eta)  + o\LB 1\RB \\
								& \qquad\qquad\qquad \leq \Exp\left[\log_2\LB \Fc_\Ic\RB\right] 	\leq \log_2(1+\eta)  +  o\LB 1\RB.
						}
					Since $\limpf \mu = 1$ and $\limpf \eta = 0$, 
					it follows that
						\eqm{
								\limpf \Exp\left[\log_2\LB \Fc_\Ic\RB\right] = 0.
						}						
						
				\subsection{Analysis of the Received Signal Ratio ($\Fc_\Dc$)}\label{subsubse:desired_can_ratio}
				It remains to prove that the first expectation in~\eqref{eq:eq_proof_1} also converges to zero. As for $\Fc_\Ic$, we can write
					\eqmo{
						\ExpB{\log_2(\Fc_\Dc)} 
							& \leq \log_2\LB\frac{1}{\mu^2}\RB \\
							& \quad + \Exp\bigg[\log_2\Big( \underbrace{\frac{1 + \PfracK \sum_{\ell\in\Nb_K\!}\abs{\hv_i\vv_{\ell}}^2 }{1 + \PfracK\sum_{\ell\in\Nb_K\!}\abs{\hv_i\wv_{\ell}}^2  }}_{\Fc'_{\Dc}}\Big)\bigg].
					}		
				Moreover, the equivalent to Lemma~\ref{lem:eta_convergence} also holds for $\Fc'_\Dc$. 
					\begin{lemma}\label{lem:eta_convergence2}
						Let $\eta = \Pb^{-\varepsilon}$, with $\alpha_q> \varepsilon>0$  and $\varepsilon$ arbitrarily small. Then,
							\eqm{
								\Pr\LB\Fc'_\Dc\geq 1+\eta	\RB = o\LB\frac{1}{\log_2(P)}\RB
							}
						and 
							\eqm{
								\Pr\LB\frac{1}{\Fc'_\Dc} \geq 1+\eta	\RB = o\LB\frac{1}{\log_2(P)}\RB.  
							}
					\end{lemma}
					\begin{proof}
						The proof is relegated to Appendix~\ref{subse:prooflemma_eta2}.
					\end{proof}
										
				Thus, applying the same step as in~\eqref{eq:exp_bound_2} yields
					\eqm{
						\Exp\big[\log_2\big( \Fc'_{\Dc}\big)\big] 
							& \leq \log_2(1+\eta) + o(1).
					}
				We can similarly lower-bound $\ExpB{\log_2(\Fc_\Dc)}$  to obtain that												
					\eqmo{
							 & - \log_2(1+\eta)  +  o\LB 1\RB \leq \ExpB{\log_2(\Fc_\Dc)} 	\\
							 & \quad\qquad\qquad\qquad\leq \log_2\LB\nicefrac{1}{\mu^2}\RB +  \log_2(1+\eta)  +  o\LB 1\RB.
					}													
				The fact that $\limpf \mu = 1$ and $\limpf \eta = 0$ leads to 
					\eqm{
							\limpf \ExpB{\log_2(\Fc_\Dc)} = 0.
					}
					
				\subsection{Merging Previous Sections}\label{subsubse:end_proof}
				Given that $\limpf \DeltaR = \limpf \sum_{i=1}^K \DeltaR_i$, we obtain that 
					\eqmo{
							\limpf \DeltaR 
								& = \limpf K(\ExpB{\log_2(\Fc_\Dc)}  + \ExpB{\log_2(\Fc_\Ic)}) \\
								& =  0,\label{eq:eq_proof_end}
					}				
				which concludes the proof of Theorem~\ref{theo:rate_convergence}.\qed

			\section{Numerical Results}\label{se:numerical}
			In this section, we provide some performance analysis for the previous asymptotic results. 
			We consider a scenario in which the most-informed TX has a CSI precision scaling parameter $\alpha\expo = 1$ for the whole channel matrix, and the rest of TXs have a CSI precision scaling parameter $\alpha\expj = 0.6$, for any $j>1$.  
			Intuitively, this configuration can model a setting in which a main TX receives a quantized CSI feedback from all the RXs, and then it shares a compressed version of the CSI to the other auxiliary transmit antennas. 
			We present the performance of several schemes:  
				\begin{itemize}[noitemsep] \itemsep0em 
						\item  The ideal centralized CSIT setting, in which all the TXs are endowed with the CSI of TX~1. 
						\item  The CD-ZF scheme with Hierarchical CSIT (TX~1 knows the other TXs' CSI).
						\item  The CD-ZF, AP-ZF, and Naive ZF schemes when the CSIT is non hierarchical (general D-CSIT setting).
						\item  The performance of transmitting only from TX~1 and turning off the other TXs.
				\end{itemize}			
			In Fig.~\ref{fig:M2_11_K2}, we show the rate performance for a setting with 2~single-antenna TXs and 2~RXs under the assumption of instantaneous power constraint for the precoder. 
			Several insights emerge from the figure. 
			
			First, we observe how the proposed CD-ZF scheme performs almost as good as the ideal centralized CSIT setting for the Hierarchical CSIT configuration. 
			This fact holds for any setting configuration and size, yet considering that $N_1\geq K-1$. 
			Besides this, the CD-ZF scheme is shown to tend towards the centralized rate also for the general D-CSIT setting, where the CSI at TX~2 is not available at TX~1. 
			However, we can see how the convergence is slow, and at low SNR  the CD-ZF scheme outperforms the single-TX transmission or the Naive ZF only by a slight gap. 
			This is an aftermath of the scheme definition, as it is tailored for the asymptotic high-SNR regime. 
			Indeed, the CD-ZF scheme performs in an almost optimal manner if TX~1 correctly estimates the CSI at the other TXs; however, the probability of correct estimation increases slowly. 
			Thus, the performance at medium SNR is limited. 
			
			It is important to note that the CD-ZF scheme here presented is not optimized, as our objective was to show the asymptotic behavior. 
			For example, we assume a scalar quantizer that independently quantizes every real and imaginary part of each channel coefficient. 
			Considerably higher probabilities of consistency would be obtained if the quantization phase is optimized, as it can be seen in~\cite{Bazco2019_GRETSI}. We could, for example, use vector quantization or  consider more complex schemes such as the ones proposed in\cite{dekerret2016_asilomar,Atzeni2018}. 
			Nevertheless, the aforementioned points show how important it is to provide the CSI with structure (or hierarchy), as it has been proven indispensable to boost the performance. 
			Moreover, this CSI structure is sometimes given by the network configuration, such that it does not imply an extra aspect to develop. 
			
			Another point to be considered is that CD-ZF allows to obtain centralized performance with one informed antenna less than the single-TX transmission. This consideration can be seen in Fig.~\ref{fig:M2_11_K2}, as the single-TX transmission does not even achieve  the centralized DoF. 
			\begin{figure}[t]\centering
						\centering
	\definecolor{mycolor1}{rgb}{0.00000,0.44700,0.74100}%
	\definecolor{mycolor2}{rgb}{0.85000,0.32500,0.09800}%
	\definecolor{mycolor3}{rgb}{0.92900,0.69400,0.12500}%
	\definecolor{mycolor4}{rgb}{0.49400,0.18400,0.55600}%
	\definecolor{mycolor5}{rgb}{0.300000,0.64314,0.00000}
\begin{tikzpicture}
\begin{axis}[%
width=1.65*0.5\columnwidth,
height=1.65*.45\columnwidth,
at={(0.0\textwidth,0.0\textwidth)},
scale only axis,
xmin=0.000,
xmax=80.000,
xlabel style={font=\color{white!15!black}},
xlabel={P [dB]},
ymin=0.000,
ymax=25.000,
ylabel={\small Rate [bits/s/Hz]},
y label style={at={(axis description cs:0.02,.51)},rotate=0,anchor=south,font=\color{white!15!black}},
ylabel near ticks,
axis background/.style={fill=white},
axis x line*=bottom,
axis y line*=left,
xmajorgrids,
ymajorgrids,
legend style={at={(0.025,0.6)}, anchor=south west, legend cell align=left, align=left, draw=white!15!black}
]
\addplot [color=black, dashed,line width=0.75pt]
  table[row sep=crcr]{%
0.000	0.539\\
11.429	2.702\\
22.857	5.730\\
34.286	9.393\\
45.714	13.216\\
57.143	17.003\\
68.571	20.930\\
80.000	24.617\\
};
\addlegendentry{Centralized CSIT}

\addplot [color=mycolor1, mark=triangle, mark options={scale=1.2, solid, rotate=180, mycolor1},line width=0.65pt]
  table[row sep=crcr]{%
0.000	0.494\\
11.429	2.540\\
22.857	5.443\\
34.286	9.020\\
45.714	12.942\\
57.143	16.804\\
68.571	20.782\\
80.000	24.551\\
};
\addlegendentry{CD-ZF H-CSIT}

\addplot [color=mycolor1, mark=o, mark options={scale=1.2,solid, mycolor1},line width=0.65pt]
  table[row sep=crcr]{%
0.000	0.519\\
11.429	2.302\\
22.857	4.488\\
34.286	7.373\\
45.714	10.550\\
57.143	14.749\\
68.571	19.180\\
80.000	23.280\\
};
\addlegendentry{CD-ZF D-CSIT}

\addplot [color=mycolor2, mark=square, mark options={scale=1.2,solid, mycolor2},line width=0.65pt]
  table[row sep=crcr]{%
0.000	0.411\\
11.429	2.138\\
22.857	4.523\\
34.286	7.339\\
45.714	10.483\\
57.143	13.882\\
68.571	17.391\\
80.000	20.773\\
};
\addlegendentry{AP-ZF}

\addplot [color=mycolor4, dashdotted,line width=0.65pt]
  table[row sep=crcr]{%
0.000	0.520\\
11.429	2.303\\
22.857	4.489\\
34.286	7.074\\
45.714	9.581\\
57.143	11.944\\
68.571	14.517\\
80.000	16.742\\
};
\addlegendentry{Naive ZF}

\addplot [color=mycolor5, densely dotted,line width=0.75pt]
  table[row sep=crcr]{%
0.000	0.435\\
11.429	1.626\\
22.857	3.361\\
34.286	5.294\\
45.714	7.168\\
57.143	9.086\\
68.571	10.955\\
80.000	12.877\\
};
\addlegendentry{Single TX (TX 1)}

\end{axis}

\end{tikzpicture}
					\caption{Performance of different precoding schemes for a setting with 2 single-antenna TXs and 2 RXs with instantaneous power constraint and $\alpha\expo=1$, $\alpha\expt=0.6$.} 
					\label{fig:M2_11_K2}%
			\end{figure}
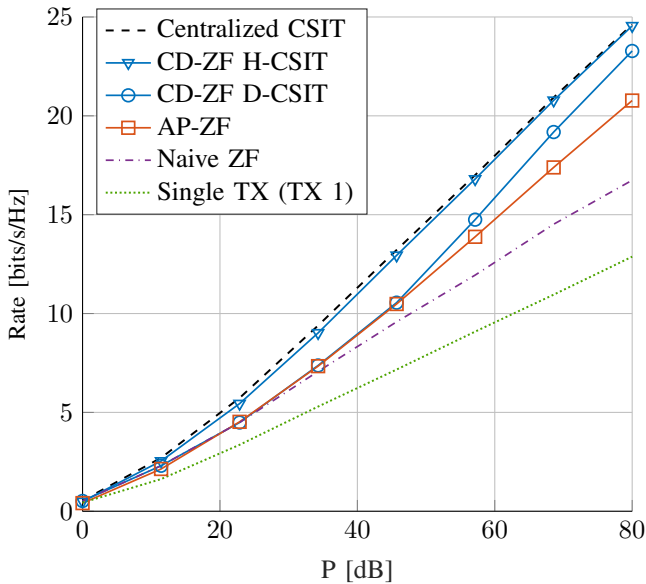

		\section{Conclusions}\label{se:conclusions}
		We have analyzed the achievable rate through linear precoding in a Network MISO setting with Distributed CSIT. 
		We have developed an achievable scheme that asymptotically attains the achievable rate of an ideal centralized setting where every TX is endowed with the best estimate among all the TXs. 
		\cmtblu{For the case in which the CSIT is distributed following a hierarchical structure among TXs}, the previous asymptotic insight is shown to be valid also at the moderate-SNR regime, where the performance obtained at both distributed and centralized settings are alike. 
		This result implies that we are not only able to reach the DoF performance of the ideal centralized setting but also its \emph{beamforming gain} when there is precise CSIT only at a subset of the transmit antennas.   
		Thus, the performance degradation on account of the CSI mismatches between TXs can be overcome by a properly designed precoding scheme which is aware of the distributed nature of the setting.  
		Besides this, it has been shown that the quantization of the information available at certain nodes is helpful as it facilitates the consistency of the decision at all the transmitters. 
		This last result could be applied to a broad set of distributed problems, in which the trade-off between global consistency and local precision has not been deeply analyzed yet. \cmtblu{Furthermore, this work fosters the question of whether we should \emph{enforce} heterogeneous CSIT allocation when designing the CSI sharing mechanism. }

\begin{appendices}

		\section{A useful Lemma}\label{app:precoder_useful_lemmas}
		We present in the following a lemma that is key for the next proofs. 
		The impact of the channel quantization on the precoder design can be asymptotically computed as shown below.
			\begin{lemma}\label{lem:convergence_error_tx2}
					Let TX~$j$, $2\leq j\leq M$, quantize its CSIT with a scalar uniform quantizer with quantization step $q = \Pb^{-\alpha_q}$, $\alpha\expM>\alpha_q>0$. The naive precoder of Section~\ref{se:scheme_distr} at TX~j for any $i\in\Nb_K$ satisfies 
						\eqm{
								\Exp\left[ \norm{\vv_{i,j}-\wv_{i,j}}\ \,\right]   & = \Oc\LB \Pb^{-\alpha_q}\RB \label{eq:lemma_diff}\\
								\Exp\left[\norm{\vv_{i,j}-\wv_{i,j}}^2\right] & = \Oc\LB P^{-\alpha_q}\RB.\label{eq:lemma_diff_2}
						}
			\end{lemma}
			\begin{proof} 
					The proof is provided in Appendix~\ref{app:lemma_precoder_tx2}.
			\end{proof}	
		Lemma~\ref{lem:convergence_error_tx2} is based on error propagation properties of linear systems. 
		Thus, it is expected to hold for a broad set of noisy estimation models whose error variance scales as $P^{-a}$ for any $a>0$. 
		For example, it holds for the quantized feedback model of\cite{Jindal2006}, in which random vector quantization is assumed and the number of quantization bits scales with~$P$, as shown in \cite{Bazco2019_ISIT}. 
		 Furthermore,  Lemma~\ref{lem:convergence_error_tx2}  leads to the following corollary. 
			\begin{corollary}\label{lem:convergence_error_global}
					Let TX~$j$, $2\leq j\leq M$, quantize its CSIT with a scalar uniform quantizer with quantization step $q = \Pb^{-\alpha_q}$, $\alpha\expM>\alpha_q>0$. The global precoder  of Section~\ref{se:scheme_distr} satisfies 
						\eqm{
								\Exp\left[\norm{\vv_{i}-\wv_{i}}\ \,\right] 	&= \Oc\LB \Pb^{-\alpha_q}\RB \label{eq:cor_diff_app_1}\\ 
								\Exp\left[\norm{\vv_{i}-\wv_{i}}^2\right] &= \Oc\LB P^{-\alpha_q}\RB. \label{eq:cor_diff_app_2}
						}
			\end{corollary}
			\begin{proof} 
				The proof is relegated to Appendix~\ref{app:lemma_precoder_tx2}.
			\end{proof}

		\section{Proof of Lemma~\ref{lem:eta_convergence} and  Lemma~\ref{lem:eta_convergence2}}\label{se:proofs}
		In this section we prove  Lemma~\ref{lem:eta_convergence} and  Lemma~\ref{lem:eta_convergence2}, which are instrumental for the proof of Theorem~\ref{theo:rate_convergence}. 
			\subsection{Proof of Lemma~\ref{lem:eta_convergence}}\label{subse:prooflemma_eta1}
			We aim to prove that, for any $\eta = \Pb^{-\varepsilon}$, with $\alpha_q> \varepsilon>0$  and $\varepsilon$ arbitrarily small, it holds that
				\eqm{
						\Pr\LB\Fc'_\Ic \geq 1+\eta	\RB  &= o\LB\frac{1}{\log_2(P)}\RB
				}
			and%
				\eqm{
					\Pr\LB\frac{1}{\Fc'_\Ic} \geq 1+\eta	\RB &= o\LB\frac{1}{\log_2(P)}\RB.
				}							
			We start by noting that $\hv_i$ can be written as $\hv_i = \frac{1}{\zop\expj}(\bhh\expj_i - z\expj\ \bdelta\expj_i)$ from the definition of the estimate in~\eqref{eq:dcsit}. Let us introduce the notations $\zop\expj_{\inv}=\frac{1}{\zop\expj}$ and $z\expj_{n}=\frac{z\expj}{\zop\expj}$. Hence, for any $i\neq\ell$, it follows that
				\eqm{
					\abs{\hv_i\wv_{\ell}} 
						&\overset{\mathclap{(a)}}{=} \abs{\smash{\zopi\bhh\expo_i\wv_{\ell} - \zzop\bdelta\expo_i\wv_{\ell}}} \nonumber \\
						&\overset{\mathclap{(b)}}{=} \abs{\smash{\zopi\bhh\expo_i\vv_{\ell} - \zzop\bdelta\expo_i\wv_{\ell} + \zzop\bdelta\expo_i\vv_{\ell}  -  \zzop\bdelta\expo_i\vv_{\ell}}}  \nonumber\\ 
						&\overset{\mathclap{(c)}}{=} \abs{\smash{\hv_i\vv_{\ell} - \zzop\bdelta\expo_i(\wv_{\ell} - \vv_{\ell})}},\!\!\!\!\!\!\!\!\!\!\!\!\!\!\!\!\!\label{eq:change_vw}
				}
			where $(a)$ and $(c)$ come from the D-CSIT model of~\eqref{eq:dcsit} and $(b)$ from the precoder definition in~\eqref{eq:def_prec_distr_1} since $\bhh\expo_i\vv_{\ell} = \bhh\expo_i\wv_{\ell}$. 
			Let us define $\dv^{\wv,\vv}_{\ell} \triangleq \wv_{\ell}-\vv_{\ell}$ for the sake of readability and space. 
			Hence, from the triangular inequality it follows that
				\eqmo{
					&\frac{1 +  \PfracK\sum_{\ell\neq i}\abs{\hv_i\wv_{\ell}}^2 }{1 + \PfracK\sum_{\ell\neq i}\abs{\hv_i\vv_{\ell}}^2} \\ 
					&\ \leq 1 + \frac{\PfracK}{1 + \PfracK\sum_{\ell\neq i}\abs{\hv_i\vv_{\ell}}^2}
						\sum_{\ell\neq i} \Big(\absn{\zzop\bdelta\expo_i\dv^{\wv,\vv}_{\ell}}^2 \Big. \\
						& \hspace{22ex} \Big. + 2\PfracK\absn{\hv_i\vv_{\ell}}\absn{\zzop\bdelta\expo_i\dv^{\wv,\vv}_{\ell}}\Big). 
					\label{eq:eq_min_1}
				}			
			Let us recall that
				\eqmo{
					\Pr\Big(\sum_{k=1}^{K} A_k \geq c\Big) \leq \sum_{k=1}^{K}\Pr\Big(A_k\geq \frac{c}{K}\Big). \label{eq:sum_prob_ineq}
				}  
			From~\eqref{eq:eq_min_1} and~\eqref{eq:sum_prob_ineq}, it follows that
				\eqm{
					&\Pr\LB\Fc'_\Ic \geq 1 + \eta\RB \notag\\
							& \leq\!\sum_{\ell\neq i}\!\Pr\!\Big(\!\frac{\PfracK  \big(\absn{\zzop\bdelta\expo_i\dv^{\wv,\vv}_{\ell}}^2\! +\! 2\PfracK\absn{\hv_i\vv_{\ell}}\absn{\zzop\bdelta\expo_i\dv^{\wv,\vv}_{\ell}}\big)}{1 + \PfracK\sum_{\ell\neq i}\abs{\hv_i\vv_{\ell}}^2}\!\geq\! \frac{\eta}{K}\!\Big) \notag\\			
							& \leq   \sum_{\ell\neq i}\Pr\LB \frac{\absn{\zzop\bdelta\expo_i\dv^{\wv,\vv}_{\ell}}^2 + 2\absn{\hv_i\vv_{\ell}}\absn{\zzop\bdelta\expo_i\dv^{\wv,\vv}_{\ell}}}{\abs{\hv_i\vv_{\ell}}^2}  \geq  \frac{\eta}{K} \RB \notag\\
							&\overset{(a)}{=}\!\!(K-1)\!\Pr\!\Big( \frac{\absn{\zzop\bdelta\expo_i\dv^{\wv,\vv}_{\ell}}^2\!+\!2\absn{\hv_i\vv_{\ell}}\absn{\zzop\bdelta\expo_i\dv^{\wv,\vv}_{\ell}}}{\abs{\hv_i\vv_{\ell}}^2}\!\geq\!\frac{\eta}{K} \!\Big)  \notag\\
							& \overset{(b)}{\leq} (K-1)\Bigg(\Pr\bigg( \frac{\absn{\zzop\bdelta\expo_i\dv^{\wv,\vv}_{\ell}}^2}{\abs{\hv_i\vv_{\ell}}^2}  \geq  \frac{\eta}{2K} \bigg) \notag\\
							&\quad \hspace{12ex}		+ \Pr\bigg( \frac{ 2\absn{\hv_i\vv_{\ell}}\absn{\zzop\bdelta\expo_i\dv^{\wv,\vv}_{\ell}}}{\abs{\hv_i\vv_{\ell}}^2}  \geq  \frac{\eta}{2K}\bigg) \Bigg)\notag \\
							& \overset{(c)}{\leq} 2(K-1) \Pr\bigg( \frac{\absn{\zzop\bdelta\expo_i(\wv_{\ell}-\vv_{\ell})}}{\abs{\hv_i\vv_{\ell}}}  \geq  \frac{\eta}{4K} \bigg) \label{eq:proof_lemma_eta_Z}
				}
			where $(a)$ comes from symmetry, $(b)$ from~\eqref{eq:sum_prob_ineq}, and $(c)$ because $\eta<1$ and $\dv^{\wv,\vv}_{\ell} \triangleq \wv_{\ell}-\vv_{\ell}$. 
			Let us now introduce  a parameter $\gamma\in\Rb$. 
			We can continue as  
				\eqmo{
					& \Pr\bigg( \frac{\absn{\zzop\bdelta\expo_i(\wv_{\ell}-\vv_{\ell})}}{\abs{\hv_i\vv_{\ell}}}  \geq  \frac{\eta}{4K} \bigg)\\
						&\qquad  = \Pr\LB  {\absn{\bdelta\expo_i(\wv_{\ell}-\vv_{\ell})} } \geq \frac{\eta}{4K}{\absn{\bdelta\expo_i\vv_{\ell}}}\RB \\	
						&\qquad \leq \Pr\LB \absn{\bdelta\expo_i\vv_{\ell}} < \Pb^{-\gamma}\RB \\
						&\qquad \quad +  \int_{\absn{\bdelta\expo_i\vv_{\ell}}\geq\Pb^{-\gamma}}\!\!\!\!  \frac{\ExpB{{\absn{\bdelta\expo_i(\wv_{\ell}-\vv_{\ell})} }}}{\frac{\eta}{4K}{y}}f_{\absn{\bdelta\expo_i\vv_{\ell}}}(y)\dd y\!\!\!\!\!\label{eq:eq_bound_int_1}
				}							
			where the first equality comes from the fact that  ${\absn{\hv_i\vv_{\ell}}} = \zzop{\absn{\bdelta\expo_i\vv_{\ell}}}$, and the last inequality from the Law of Total Probability and Markov's Inequality. $f_{\absn{\bdelta\expo_i\vv_{\ell}}}$ stands for the probability density function of $\absn{\bdelta\expo_i\vv_{\ell}}$. 
			Let us focus first on the first term of~\eqref{eq:eq_bound_int_1}, $\Pr\LB \absn{\bdelta\expo_i\vv_{\ell}} < \Pb^{-\gamma}\RB$, which satisfies the following proposition. 
				\begin{proposition}\label{lem:prob_zero_order}
					Let $\gamma>0$. Then, 
						\eqm{
								\Pr\LB \absn{\bdelta\expo_i\vv_{\ell}} < \Pb^{-\gamma}\RB  =  o\LB\frac{1}{\log_2(P)}\RB. \label{eq:eq_lem2}  
						}
				\end{proposition}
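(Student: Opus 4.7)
\begin{proofcl}
My plan is to condition on the centralized estimate $\bHH\expo$, so that the precoder $\vv_\ell$ becomes deterministic while the ZF property~\eqref{eq:cond_1} enforces $\bhh\expo_i\vv_\ell = 0$ for $\ell \neq i$, and then to exploit the resulting Gaussian structure of the scalar $\bdelta\expo_i\vv_\ell$.

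First I would compute the conditional distribution of $\bdelta\expo_i$ given $\bHH\expo$. Because the channel rows are mutually independent and $\bDelta\expo$ is independent of $\bH$, the rows $\bhh\expo_j$ with $j\neq i$ are independent of $\bdelta\expo_i$, so the conditioning reduces to conditioning on $\bhh\expo_i$ alone. A direct Gaussian calculation using the joint model of~\eqref{eq:dcsit} yields $\bdelta\expo_i \mid \bhh\expo_i \sim \CN(z\expo \bhh\expo_i,\,(\zop\expo)^2 \bI)$. Combined with $\bhh\expo_i\vv_\ell=0$, this shows that $\bdelta\expo_i\vv_\ell \mid \bHH\expo$ is a scalar complex Gaussian with zero mean and variance $(\zop\expo)^2 \norm{\vv_\ell}^2$, so $|\bdelta\expo_i\vv_\ell|^2 \mid \bHH\expo$ is exponentially distributed. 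The elementary bound $1-e^{-x}\leq x$ then gives
\eqm{
\Pr\!\LB \absn{\bdelta\expo_i\vv_\ell} < \Pb^{-\gamma} \,\big|\, \bHH\expo \RB \ \leq\ \frac{\Pb^{-2\gamma}}{(\zop\expo)^2\,\norm{\vv_\ell}^2}.
}

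The main obstacle is that $\norm{\vv_\ell}^2$ is random and could in principle be arbitrarily small, so one cannot simply average the conditional bound. I would handle this by fixing some $\delta$ with $0<\delta<2\gamma$ and splitting the expectation over $\bHH\expo$ according to the event $\{\norm{\vv_\ell}^2 > \Pb^{-\delta}\}$ and its complement. On the former, the conditional bound gives at most $\Pb^{-(2\gamma-\delta)}/(\zop\expo)^2$, which decays polynomially in $\Pb$ since $\zop\expo\to 1$. On the complement, the regularity hypothesis~\eqref{eq:cond_3} -- that the pdf of $\norm{\vv_\ell}$ is uniformly bounded -- immediately yields $\Pr(\norm{\vv_\ell}\leq \Pb^{-\delta/2}) \leq f^{\max}_{\norm{\vv_{i}}}\,\Pb^{-\delta/2}$, again polynomially small. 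Summing the two contributions produces an overall bound of order $\Pb^{-c}$ for some $c>0$, which is trivially $o(1/\log_2(P))$ and closes the argument. Condition~\eqref{eq:cond_3} is precisely what rules out the near-degenerate precoder regime; without it the small-norm event could not be controlled with polynomial decay.
\end{proofcl}
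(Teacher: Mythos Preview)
Your argument is correct, and it is genuinely different from the paper's proof. The paper does not condition on $\bHH\expo$; instead it introduces the perfect-CSI precoder $\uv_\ell = \Vc(\bH)$, which is truly independent of $\bdelta\expo_i$, and splits via the reverse triangle inequality into $\absn{\bdelta\expo_i\uv_\ell}$ and $\absn{\bdelta\expo_i(\vv_\ell-\uv_\ell)}$. The first term is handled by isotropy of $\bdelta\expo_i$ together with a threshold split on $\norm{\uv_\ell}$ (via~\eqref{eq:cond_3}); the second term requires a separate perturbation analysis (essentially an instance of Lemma~\ref{lem:convergence_error_tx2}) to show it is small with high probability. Your route is considerably shorter because you exploit the exact conditional Gaussianity of $\bdelta\expo_i$ given $\bhh\expo_i$ and the ZF identity $\bhh\expo_i\vv_\ell=0$ to obtain a zero-mean complex Gaussian directly, so the small-ball probability is an explicit exponential CDF and no perturbation step is needed. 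The paper's detour through $\uv_\ell$ would still work under weaker noise models where conditional Gaussianity fails, whereas your argument is tied to the Gaussian estimation model of~\eqref{eq:dcsit}; under that model, however, your proof is the more elementary one. Both proofs ultimately rely on~\eqref{eq:cond_3} in the same way, to control $\Pr(\norm{\vv_\ell}\leq \Pb^{-\delta/2})$ (respectively $\Pr(\norm{\uv_\ell}\leq \Pb^{-\epsilon_\beta})$), and both yield a polynomial $\Oc(\Pb^{-c})$ bound, which is stronger than the stated $o(1/\log_2(P))$.
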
		
				\begin{proof} 
					The proof is relegated to Appendix~\ref{app:proof_propositions}.
				\end{proof} 
			On the other hand, the integral term of~\eqref{eq:eq_bound_int_1} can be rewritten~as
				\eqmo{
					&\!\!\!\! \int_{{\absn{\bdelta\expo_i\vv_{\ell}}\geq\Pb^{-\gamma}}}^{}  \ExpB{{\absn{\bdelta\expo_i(\wv_{\ell}-\vv_{\ell})}}}\frac{f_{\absn{\bdelta\expo_i\vv_{\ell}}}(y)}{\frac{\eta}{4K}{y}}\dd y\\
							&\  = \frac{4K}{\eta}\ExpB{{\absn{\bdelta\expo_i(\wv_{\ell}-\vv_{\ell})} }} \ExpBs{\mid \absn{\bdelta\expo_i\vv_{\ell}}\geq\Pb^{-\gamma}}{\frac{1}{\absn{\bdelta\expo_i\vv_{\ell}}}} \\
							&\   \leq \frac{4K}{\eta}\ExpB{{\absn{\bdelta\expo_i(\wv_{\ell}-\vv_{\ell})} }} \Pb^{\gamma}.
				}							
			Now, we introduce another useful proposition, whose proof is also relegated to Appendix~\ref{app:proof_propositions}. 
				\begin{proposition}\label{lem:exp_noise_order}
					It holds that
						\eqm{
							\ExpB{{\absn{\bdelta\expo_i(\wv_{\ell}-\vv_{\ell})} }} 
									& = \Oc(\Pb^{-\alpha_q}).
						}
				\end{proposition}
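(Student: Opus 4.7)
The plan is to bound the scalar quantity $\bdelta\expo_i(\wv_\ell - \vv_\ell)$ by the product of norms via Cauchy--Schwarz, then apply Cauchy--Schwarz again at the expectation level, and finally invoke Corollary~\ref{lem:convergence_error_global} for the precoder discrepancy together with the Gaussianity of $\bdelta\expo_i$ for the noise term. A delicate point is that $\vv_\ell$ and $\wv_\ell$ both depend on $\bdelta\expo_i$ through $\bhh\expo_i$, so the two factors are not independent; however, Cauchy--Schwarz does not require independence, which sidesteps the issue cleanly.

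First, I would apply the (vector) Cauchy--Schwarz inequality pointwise to write
\[
\absn{\bdelta\expo_i(\wv_\ell - \vv_\ell)} \leq \norm{\bdelta\expo_i}\,\norm{\wv_\ell - \vv_\ell}.
\]
Taking expectations and then applying Cauchy--Schwarz in $L^2$ yields
\[
\ExpB{\absn{\bdelta\expo_i(\wv_\ell - \vv_\ell)}} \leq \sqrt{\ExpB{\norm{\bdelta\expo_i}^2}\,\ExpB{\norm{\wv_\ell - \vv_\ell}^2}}.
\]

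Next, I would evaluate each factor separately. Since the entries of $\bdelta\expo_i$ are i.i.d.\ $\CN(0,1)$ (by the D-CSIT model in~\eqref{eq:dcsit}), we have $\ExpB{\norm{\bdelta\expo_i}^2} = N_T = \Theta(1)$. For the precoder discrepancy, Corollary~\ref{lem:convergence_error_global} gives $\ExpB{\norm{\wv_\ell - \vv_\ell}^2} = \Oc(P^{-\alpha_q})$. Combining these two bounds,
\[
\ExpB{\absn{\bdelta\expo_i(\wv_\ell - \vv_\ell)}} \leq \sqrt{N_T \cdot \Oc(P^{-\alpha_q})} = \Oc(P^{-\alpha_q/2}).
\]

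Finally, I would translate back to the $\Pb$ notation used throughout the paper. Since $\Pb \triangleq \sqrt{P}$, we have $P^{-\alpha_q/2} = \Pb^{-\alpha_q}$, which yields exactly the claimed scaling $\Oc(\Pb^{-\alpha_q})$. There is no genuine obstacle here beyond ensuring that Corollary~\ref{lem:convergence_error_global} is applicable (which it is, as it was stated under the same uniform scalar quantizer $\Qc_u$ with step $q=\Pb^{-\alpha_q}$ used to establish Proposition~\ref{lem:exp_noise_order}); the two successive Cauchy--Schwarz applications do the rest.
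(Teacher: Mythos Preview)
Your proof is correct and follows essentially the same approach as the paper: pointwise Cauchy--Schwarz on the inner product, then control of $\ExpB{\norm{\bdelta\expo_i}\norm{\wv_\ell-\vv_\ell}}$, and finally an appeal to Corollary~\ref{lem:convergence_error_global} together with $\ExpB{\norm{\bdelta\expo_i}^2}=N_T$. The only cosmetic difference is that the paper bounds the expectation of the product via the decomposition $\E[XY]=\cov(X,Y)+\E[X]\E[Y]$ and then uses $\sigma_X+\E[X]\leq 2\sqrt{\E[X^2]}$, whereas you apply Cauchy--Schwarz in $L^2$ directly; your route is slightly more streamlined and yields the same $\Oc(\Pb^{-\alpha_q})$ scaling.
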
								
			By applying Proposition~\ref{lem:prob_zero_order} and Proposition~\ref{lem:exp_noise_order} into~\eqref{eq:eq_bound_int_1}, it is straightforward to see that 
				\eqmo{
					\Pr\LB \Fc'_\Ic \geq 1 + \eta\RB 
							& \leq o\LB\frac{1}{\log_2(P)}\RB \\
							& \quad + \frac{8K(K-1)}{\eta} \Oc(\Pb^{-\alpha_q}) \Pb^{\gamma}.
				}	
			Since $\eta = \Pb^{-\varepsilon}$, with $\alpha_q> \varepsilon>0$, 
				\eqm{
					\Pr\LB \Fc'_\Ic  \geq 1 + \eta\RB 
							& \leq o\LB\frac{1}{\log_2(P)}\RB + \Pb^{\varepsilon} \Oc(\Pb^{-\alpha_q}) \Pb^{\gamma}. \label{eq:eq_bound_int_5}
				}			
			Let us select $\gamma$ such that $\gamma > 0$ and $\varepsilon+\gamma-\alpha_q<0$. 
			Then, it follows that
				\eqmo{
					\Pr\LB \Fc'_\Ic \geq 1 + \eta\RB 
							& = o\LB\frac{1}{\log_2(P)}\RB,
				}		
			which concludes the proof of the first statement of Lemma~\ref{lem:eta_convergence}. 
			We prove in the following the second statement, i.e.,					
				\eqm{
						\Pr\LB\frac{1}{\Fc'_\Ic}\geq 1+\eta	\RB = o\LB\frac{1}{\log_2(P)}\RB.
				}
			This is obtained by switching the vectors $\vv_{\ell}$ and  $\wv_{\ell}$ and applying the same steps as in the proof of the first statement. 
			To begin with, by following the steps in~\eqref{eq:proof_lemma_eta_Z} we can  easily obtain that 
				\eqm{
						&\Pr\LB\frac{1}{\Fc'_\Ic}\geq 1+\eta	\RB
							 = \Pr\LB\frac{1 + \PfracK\sum_{\ell\neq i}\abs{\hv_i\vv_{\ell}}^2}{1 +  \PfracK\sum_{\ell\neq i}\abs{\hv_i\wv_{\ell}}^2 }  \geq 1 + \eta\RB \notag\\  
							&\qquad\  \leq 2(K-1) \Pr\bigg( \frac{\absn{\zzop\bdelta\expo_i(\wv_{\ell}-\vv_{\ell})}}{\abs{\hv_i\wv_{\ell}}}  \geq  \frac{\eta}{4K} \bigg).\!  \label{eq:proof_lemma_eta2_Z} 
				}
			Furthermore, 
			the final expression in~\eqref{eq:proof_lemma_eta2_Z} is equal to the one in~\eqref{eq:proof_lemma_eta_Z} except for the fact that the denominator is~$\abs{\hv_i\wv_{\ell}}$ instead of $\abs{\hv_i\vv_{\ell}}$. 
			Hence, continuing as in \eqref{eq:eq_bound_int_1}-\eqref{eq:eq_bound_int_5}, we obtain that
				\eqm{
					 \Pr\LB \frac{1}{\Fc'_\Ic}  \geq 1 + \eta\RB 
							& = o\LB\frac{1}{\log_2(P)}\RB,
				}		
			which concludes the proof of Lemma~\ref{lem:eta_convergence}.\qed

			\subsection{Proof of Lemma~\ref{lem:eta_convergence2}}\label{subse:prooflemma_eta2}
			We aim to prove that, for any $\eta = \Pb^{-\varepsilon}$, with $\alpha_q> \varepsilon>0$  and $\varepsilon$ arbitrarily small, it holds that
				\eqm{
						\Pr\LB\Fc'_\Dc\geq 1+\eta	\RB = o\LB\frac{1}{\log_2(P)}\RB\label{eq:lemma4_first}
				}
			and 
				\eqm{
					\Pr\LB\frac{1}{\Fc'_\Dc} \geq 1+\eta	\RB = o\LB\frac{1}{\log_2(P)}\RB.  
				}					
			Firstly, we focus on~\eqref{eq:lemma4_first}. Note that, applying similar steps as in~\eqref{eq:change_vw}, it holds that
				\eqmo{
					\abs{\hv_i\vv_{\ell}}^2 & \leq \absn{\hv_i\wv_{\ell}}^2 + \absn{\zzop\bdelta\expo_i(\wv_{\ell}-\vv_{\ell})}^2 \\
							&\qquad + 2\absn{\hv_i\wv_{\ell}}\absn{\zzop\bdelta\expo_i(\wv_{\ell}-\vv_{\ell})}, \label{eq:ineq_3}
					}
				\eqmo{
					\abs{\hv_i\vv_{{i}}}^2 		 & \leq \abs{\hv_i\wv_{{i}}}^2 		+ \absn{\hv_i(\wv_{{i}}-\vv_{{i}})}^2  \\
						  &\qquad + 2\abs{\hv_i\wv_{{i}}}\abs{\hv_i(\wv_{{i}}-\vv_{{i}})}.~~~~~~ \label{eq:ineq_4}
				}
			Hence, following the steps applied in~\eqref{eq:eq_min_1}-\eqref{eq:proof_lemma_eta_Z}, we can write that
				\eqmo{
					&\Pr\LB\frac{1 + \PfracK \sum_{\ell\in\Nb_K\!}\abs{\hv_i\vv_{\ell}}^2}{1 + \PfracK\sum_{\ell\in\Nb_K\!}\abs{\hv_i\wv_{\ell}}^2}\geq 1 + \eta\RB\\
							&\qquad\qquad\qquad\qquad\leq \Pr\LB \Dc_{1}+\Dc_{2}+\Dc_{3}+\Dc_{4}  \geq  {\eta} \RB  \\
							&\qquad\qquad\qquad\qquad\leq \sum_{i=1}^4\Pr\LB \Dc_{i}  \geq  \frac{\eta}{4}\RB, \label{eq:proof_fd_22}
				}				
			where we have introduced the notations
				\eqm{
					\Dc_{1} &\triangleq \sum_{\ell\neq i}\frac{ \absn{\zzop\bdelta\expo_i(\wv_{\ell}-\vv_{\ell})}^2}{\abs{\hv_i\wv_i}^2 } \\
					\Dc_{2} &\triangleq \sum_{\ell\neq i}\frac{2\abs{\hv_i\wv_{\ell}}\absn{\zzop\bdelta\expo_i(\wv_{{i}}-\vv_{\ell})}}{\abs{\hv_i\wv_i}^2 }\\
					\Dc_{3} &\triangleq \frac{\abs{\hv_i(\wv_{{i}}-\vv_{{i}})}^2}{\abs{\hv_i\wv_i}^2 } 		\\
					\Dc_{4} &\triangleq \frac{2\abs{\hv_i\wv_{{i}}}\abs{\hv_i(\wv_{{i}}-\vv_{{i}})}}{\abs{\hv_i\wv_i}^2}.
				}		
			The first inequality in~\eqref{eq:proof_fd_22} is obtained by applying~\eqref{eq:ineq_3}-\eqref{eq:ineq_4} and eliminating the term $1 + \PfracK\sum_{\ell\neq i}\abs{\hv_i\wv_{\ell}}^2$ from the denominator. 		 
			From the analysis of $\Fc_\Ic$ in the previous section (see~\eqref{eq:eq_bound_int_1}), it follows easily that, if $\eta = \Pb^{-\varepsilon}$, with $\alpha_q> \varepsilon>0$, then
				\eqm{
						\Pr\LB \Dc_{1} \geq \frac{\eta}{4}\RB 
							& \leq \sum_{\ell\neq i} \Pr\LB \zzop\frac{ \absn{\bdelta\expo_i(\wv_{\ell}-\vv_{\ell})}^2}{\abs{\hv_i\wv_i}^2 } \geq \frac{\eta}{4K}\RB  \notag \\
							& = o\bigg(\frac{1}{\log_2(P)}\bigg). 
				}
			Similarly, 
				\eqm{
						\Pr\LB \Dc_{2} \geq \frac{\eta}{4}\RB 
							& =\! \sum_{\ell\neq i}\Pr\!\Big(\! \zzop\frac{2\abs{\hv_i\wv_{\ell}}\absn{\bdelta\expo_i(\wv_{{i}}-\vv_{\ell})}}{\abs{\hv_i\wv_i}^2 } \geq \frac{\eta}{4K}\!\Big) \notag  \\
							& = o\bigg(\frac{1}{\log_2(P)}\bigg). 								
				}
			For the two remaining terms, $\Dc_3$ and $\Dc_4$, note that
				\eqmo{
					&\Pr\LB \Dc_{3} \geq \frac{\eta}{4}\RB + \Pr\LB \Dc_{4} \geq \frac{\eta}{4}\RB \\
						&\hspace{18ex} = \Pr\LB {\frac{\abs{\hv_i(\wv_{{i}}-\vv_{{i}})}^2}{\abs{\hv_i\wv_i}^2 }}\geq \frac{\eta}{4}\RB\\
							&\hspace{18ex} \quad\ +  \Pr\LB {\frac{2\abs{\hv_i(\wv_{{i}}-\vv_{{i}})}}{\abs{\hv_i\wv_i}}} 	\geq \frac{\eta}{4}\RB \\
						&\hspace{18ex} \leq 2 \Pr\LB {\frac{\abs{\hv_i(\wv_{{i}}-\vv_{{i}})}}{\abs{\hv_i\wv_i}}} 	\geq \frac{\eta}{16}\RB \\
						&\hspace{18ex} = 2 \Pr\LB {\frac{\absn{\thv_i(\wv_{{i}}-\vv_{{i}})}}{\absn{\thv_i\wv_i}}} 	\geq \frac{\eta}{16}\RB. \label{eq:proof_lemma_d_7}
				}	
			where $\thv = \frac{\hv}{\norm{\hv}}$  is unit-norm and it is isotropically distributed  on the $N_T$-dimensional unit-sphere\cite{Jindal2006}. We can continue as in~\eqref{eq:eq_bound_int_1} to write
				\eqmo{
						 &\Pr\LB {\frac{\absn{\thv_i(\wv_{{i}}-\vv_{{i}})}}{\absn{\thv_i\wv_i}}} 	\geq \frac{\eta}{16}\RB
							  \leq \Pr\LB \absn{\thv_i\wv_i} < \Pb^{-\gamma}\RB \\
							&\qquad\  \quad + \int_{\absn{\thv_i\wv_i}\geq\Pb^{-\gamma}}^{}  \frac{\ExpB{\smash{{\absn{\thv_i(\wv_{{i}}-\vv_{{i}})} }}}}{\frac{\eta}{16}{y}}f_{\absn{\thv_i\wv_i}}(y)\dd y \\
							&\qquad  \leq \Oc(\Pb^{-\gamma}) + 16 \Pb^{\varepsilon}\ExpB{\smash{{\absn{\thv_i(\wv_{{i}}-\vv_{{i}})} }}}\Pb^{\gamma}.\label{eq:bound_prob_fd_4}
				}   
			The fact that $\norm{\thv_i} = 1$ implies that $\ExpB{\smash{{\absn{\thv_i(\wv_{{i}}-\vv_{{i}})} }}}  \leq \ExpB{\smash{{\norm{\wv_{{i}}-\vv_{{i}}}}}}$. Moreover,    
			Corollary~\ref{lem:convergence_error_global} states that $\ExpB{\smash{\norm{\wv_{i}-\vv_{i}}}}= \Oc(\Pb^{-\alpha_q})$.  			
			Consequently, by selecting $\gamma$ such that $\gamma > 0$ and $\varepsilon+\gamma-\alpha_q<0$, 
			it follows from~\eqref{eq:bound_prob_fd_4} that
				\eqm{
					\Pr\LB {\frac{\absn{\thv_i(\wv_{{i}}-\vv_{{i}})}}{\absn{\thv_i\wv_i}}} 	\geq \frac{\eta}{16}\RB
							&  \leq \Oc(\Pb^{-\gamma}) + \Pb^{\varepsilon}\Oc(\Pb^{-\alpha_q})\Pb^{\gamma} \nonumber\\
							& = o\LB\frac{1}{\log_2(P)}\RB. \label{eq:proof_lemma_d_end} 
				}		
			We  introduce the result of~\eqref{eq:proof_lemma_d_end} into~\eqref{eq:proof_lemma_d_7} to obtain from~\eqref{eq:proof_fd_22}~that
				\eqm{
						\Pr\LB\Fc'_\Dc\geq 1+\eta	\RB = o\LB\frac{1}{\log_2(P)}\RB.
				}
			It would remain to prove that $\Pr\LB\tfrac{1}{\Fc'_\Dc}\geq 1+\eta	\RB = o\LB\frac{1}{\log_2(P)}\RB$. To do so, we just need to apply the same previous steps, but  interchanging $\wv$ and $\vv$. Following those steps and following a similar argument as in the proof for $\Pr\Big(\tfrac{1}{\Fc'_\Ic}\Big)$, we directly obtain the result. 
			For this reason, and for the sake of concision, we omit the derivation.  \qed

		\section{Proof of~Lemma~\ref{lem:outage} (Probability of Power Outage)}\label{app:proof_outage}
				
		  We denote the event of power outage as $\pset$. 
			Note that
				\eqm{
					\Pr\LB\pset\RB  \leq N_1\Pr\LB\norm{\bT_{1,1}}>1\RB,
				}
			and $\bT_{1,1} = \mu[\wv_{1,1,1},\wv_{2,1,1},\dots,\wv_{K,1,1}]$, where $\wv_{i,j,n}$ represents the $n$-th element of the precoding vector at TX~$j$ for the data symbols of RX~$i$.  
			Therefore, 
				\eqmo{
					\Pr\LB\pset\RB 
							& \leq N_1\Pr\LB\norm{\mu[\wv_{1,1,1},\wv_{2,1,1},\dots,\wv_{K,1,1}]} > 1\RB \\
							& \overset{(a)}{\leq} N_1\Pr\LB\bigcup_{i\in\Nb_K} \norm{\mu\wv_{i,1,1}} > \norm{\vv_{i,1,1}}\RB \\ 
							& \overset{(b)}{\leq} N_1K\Pr\LB\norm{\mu\wv_{1,1,1}} > \norm{\vv_{1,1,1}}\RB \\ 
							& \overset{(c)}{\leq} N_1K\Pr\LB\mu\norm{\vv_{1,1,1}} + \mu\norm{\bphi_{1}} > \norm{\vv_{1,1,1}}\RB \\
							& = N_1K\Pr\LB\norm{\bphi_{1}} > \frac{1-\mu}{\mu}\norm{\vv_{1,1,1}}\RB
				} 
			where $(a)$ is obtained from the precoder definition as $\norm{[\vv_{1,1,1}\ \dots\  \vv_{K,1,1}]}\leq 1$, $(b)$  follows because $\wv_{i,1,1}$ (resp. $\vv_{i,1,1}$) is equally distributed for any $i\in\Nb_K$, and $(c)$ from~\eqref{eq:eq_1b}.  
			Now, we  obtain the probability by conditioning on $\norm{\vv_{1,1}}$ and then averaging over the distribution of $\norm{\vv_{1,1,1}}$. Let us denote $\mu'\triangleq \frac{1-\mu}{\mu}$. Hence, 
				\eqm{
					\Pr\LB\pset\RB 			
							& \leq N_1K\int_{-\infty}^{\infty} \Pr\LB\norm{\bphi_{1}} > \mu'\nu \RB f_{\norm{\vv_{1,1,1}}}(\nu)\dd \nu.  
				}
			Using Markov's inequality  we obtain that 
				\eqmo{
					\Pr\LB\pset\RB 			
							& \leq N_1K\int_{-\infty}^{\infty} \frac{\Exp[{\norm{\bphi_{1}}}]}{ \mu'\nu}  f_{\norm{\vv_{1,1,1}}}(\nu)\dd \nu \\ 
							& = N_1K\Exp[{\norm{\bphi_{1}}}]\frac{1}{ \mu'}\E\big[{\norm{\vv_{1,1,1}}^{-1}}\big], \label{eq:chi_int} 
				}
			where $\E\big[{\norm{\vv_{1,1,1}}^{-1}}\big]$ exists from property~\eqref{eq:cond_2}. 
			Let us focus on the first expectation term of~\eqref{eq:chi_int} ($\Exp[{\norm{\bphi_{1}}}]$). 
			Recalling~\eqref{eq:eq_1b}, $\bphi_{i}$ is defined as  
						\eqm{
							 \bphi_{i}  = \bHH_{\iop,1}^\dagger\bHH_{\iop,\bar{1}}(\vv_{i,\bar{1}} - \wv_{i,\bar{1}})  \label{eq:eq_1y}. 
						}	
			Then, 			
				\eqmo{
					\Exp\left[\norm{\bphi_{i}}\right] 
							& \overset{(a)}{\leq}  \Exp\left[\norm{\bHH_{\iop,1}^\dagger\bHH_{\iop,\bar{1}}}\norm{\vv_{i,\bar{1}} - \wv_{i,\bar{1}}}\right] \\
							& \overset{(b)}{\leq}  \sqrt{\Exp\left[\norm{\bHH_{\iop,1}^\dagger\bHH_{\iop,\bar{1}}}^2\right] \Exp\left[\norm{\vv_{i,\bar{1}} - \wv_{i,\bar{1}}}^2\right]}, \label{eq:ineq_sigma}
				}	
			where $(a)$ comes from the sub-multiplicative property of the Frobenius norm and $(b)$ from Cauchy–Schwarz inequality. 
			Let us denote $g_m \triangleq \sqrt{\smash[b]{\Exp\big[\norm{\bHH_{\iop,1}^\dagger\bHH_{\iop,\bar{1}}}^2\big]}}$, which is a value that does not depend on $P$ since the channel estimates are equally distributed for any estimation error variance. 
			Then, we have that
				\eqm{
						\Exp\left[\norm{\bphi_{i}}\right]
									& \leq  g_m\sqrt{\Exp\left[\norm{\vv_{i,\bar{1}} - \wv_{i,\bar{1}}}^2\right]}. \label{eq:error_scaling_tx} 
				}				
			Lemma~\ref{lem:convergence_error_tx2} and the fact that $\Exp\left[\norm{\vv_{i,\bar{1}} - \wv_{i,\bar{1}}}^2\right] = \sum_{j=2}^M\Exp\big[\norm{\vv_{i,j} - \wv_{i,j}}^2\big]$ yield 
				\eqm{
					\Exp\left[\norm{\bphi_{1}}\right] = \Oc\Big( \Pb^{-\alpha_q}\Big).
				}
			Since $\mu = 1 - \varpi$, with $\varpi = \Theta(\Pb^{-\alpha_{\mu}})$ and $\varpi>0$, the term $\frac{1}{ \mu'} = \frac{\mu}{1-\mu}$ satisfies 
						$\frac{1}{ \mu'} = \Theta(\Pb^{\alpha_{\mu}})$. 
			From~\eqref{eq:cond_2}, $\E\big[{\norm{\vv_{1,1,1}}^{-1}}\big]$ is $\Theta(1)$. 
			Hence, recalling~\eqref{eq:chi_int},
				\eqmo{
					\Pr\LB\pset\RB 
							& \leq N_1K\E\big[{\norm{\vv_{1,1,1}}^{-1}}\big]\Exp[{\norm{\bphi_{1}}}]\frac{1}{ \mu'} \\
							& =  \Theta(1) \Oc\LB \Pb^{-\alpha_q}\RB \Theta(\Pb^{\alpha_{\mu}}), 
				}
			which implies that $\Pr\LB\pset\RB = \Oc\LB \Pb^{\alpha_{\mu}-\alpha_q} \RB $. 
			By selecting $\alpha_\mu < \alpha_q$, the probability of power outage vanishes and it holds that
				\eqm{
					\Pr\LB\pset\RB =  
							o\LB \frac{1}{\log_2(P)} \RB, 
				}		
			which concludes the proof.  \qed

		\section{Proof of Proposition~\ref{lem:prob_zero_order}  and Proposition~\ref{lem:exp_noise_order}}\label{app:proof_propositions}
							
			\subsection{Proof of Proposition~\ref{lem:prob_zero_order}}\label{app:proof_lemma_gamma}
			We prove in the following that $\Pr\big(\absn{\bdelta\expo_i\vv_{\ell}} < \Pb^{-\gamma}\big)  = o\LB\frac{1}{\log_2(P)}\RB$ for any $\gamma>0$ and $i,\ell\in\Nb_K$ such that $\ell\neq i$. 
			Let us denote the precoder for RX~$\ell$ obtained with perfect knowledge of $\bH$ as $\uv_\ell$. Then,
				\eqm{
						&\!\!\!\!\Pr\big(\absn{\bdelta\expo_i\vv_{\ell}} < \Pb^{-\gamma}\big)  
							 \!=\! \Pr\big(\absn{\bdelta\expo_i\uv_{\ell} + \bdelta\expo_i(\vv_\ell-\uv_{\ell})}  < \Pb^{-\gamma}\big)\notag\\
							&\hspace{8ex}\leq \Pr\big(\big|{\absn{\bdelta\expo_i\uv_{\ell}} - \absn{\bdelta\expo_i(\vv_\ell-\uv_{\ell})}}\big|  < \Pb^{-\gamma}\big),\!\!\label{eq:first_step_prop1}
				}
			where we have applied the inverse triangle inequality. 
			In order to prove Proposition~\ref{lem:prob_zero_order}, we capitalize the intuition that the term $\absn{\bdelta\expo_i\uv_{\ell}}$ is independent of the quality of the estimate and $P$, but the value of $\absn{\bdelta\expo_i(\vv_\ell-\uv_{\ell})}$ is directly proportional to   $P^{-\alpha\expo}$. Before applying this intuition to~\eqref{eq:first_step_prop1},
			we first analyze the term $\absn{\bdelta\expo_i(\vv_\ell-\uv_{\ell})}$ and the probability $\Pr\big(\absn{\bdelta\expo_i(\vv_\ell-\uv_{\ell})} > \Pb^{-\beta}\big)$ for any $\beta<\alpha\expo$.  
			\cmtblu{	
				\begin{proposition}\label{prop:proof_appendix_prop}
					For any $\beta<\alpha\expo$, it holds that 
					 \eqm{\label{eq:proof_appendix_prop}
							\Pr\big(\absn{\bdelta\expo_i(\vv_\ell-\uv_{\ell})} > \Pb^{-\beta}\big)
								=o\LB\frac{1}{\log_2(P)}\RB.
					 }						
				\end{proposition}
			}
				\begin{proof}
					\cmtblu{	
						It follows by the Cauchy-Schwarz inequality that $\absn{\bdelta\expo_i(\vv_\ell-\uv_{\ell})} < \norm{\bdelta\expo_i}\norm{\vv_\ell-\uv_{\ell}}$, which implies that $\Pr\big(\absn{\bdelta\expo_i(\vv_\ell-\uv_{\ell})} > \Pb^{-\beta}\big) \leq \Pr\big(\norm{\bdelta\expo_i}\norm{\vv_\ell-\uv_{\ell}} > \Pb^{-\beta}\big)$. 
						Let us define the scalar $\varepsilon>0$ such that $\beta<\beta+\varepsilon<\alpha\expo$. By applying the law of total probability, we obtain that
							\eqm{
								&\Pr\big(\absn{\bdelta\expo_i(\vv_\ell-\uv_{\ell})} > \Pb^{-\beta}\big)
												 \leq \Pr\big(\norm{\bdelta\expo_i}\norm{\vv_\ell-\uv_{\ell}} > \Pb^{-\beta}\big) \notag\\
									& \quad\leq \Pr\big(\norm{\bdelta\expo_i}\norm{\vv_\ell-\uv_{\ell}} > \Pb^{-\beta}\mid\norm{\vv_\ell-\uv_{\ell}}>\Pb^{-\beta-\varepsilon}\big)	\notag\\
											&\qquad\qquad\times\Pr\big(\norm{\vv_\ell-\uv_{\ell}}>\Pb^{-\beta-\varepsilon}\big) \notag\\
										&\ \qquad + \Pr\big(\norm{\bdelta\expo_i}\norm{\vv_\ell-\uv_{\ell}} > \Pb^{-\beta}\mid\norm{\vv_\ell-\uv_{\ell}}\leq\Pb^{-\beta-\varepsilon}\big)	\notag\\
											&\qquad\qquad\times\Pr\big(\norm{\vv_\ell-\uv_{\ell}}\leq\Pb^{-\beta-\varepsilon}\big)	\notag\\
										&\quad\leq \Pr\big(\norm{\vv_\ell-\uv_{\ell}}>\Pb^{-\beta-\varepsilon}\big)	\notag\\
										&\ \qquad + \Pr\big(\norm{\bdelta\expo_i} > \Pb^{\varepsilon}\mid\norm{\vv_\ell-\uv_{\ell}}\leq\Pb^{-\beta-\varepsilon}\big). \label{eq:app_err1}
							}
					}
					The term $\Pr\big(\norm{\vv_\ell-\uv_{\ell}}>\Pb^{-\beta-\varepsilon}\big)$ can be upper-bounded by means of the Markov's inequality, such that
						\eqmo{
								&\Pr\big(\norm{\vv_\ell-\uv_{\ell}}>\Pb^{-\beta-\varepsilon}\big) 
									 \leq~ \Pb^{\beta+\varepsilon}\ {\Exp[\norm{\vv_\ell-\uv_{\ell}}]}\\
								&\qquad\ \ \qquad\qquad \overset{(a)}{=} \Oc(\Pb^{\beta+\varepsilon-\alpha\expo}) =o\LB\frac{1}{\log_2(P)}\RB, \label{eq:order_difuv1}
						}
					where $(a)$  follows  after applying Lemma~\ref{lem:convergence_error_tx2} to vectors whose respective input estimates differ by a $\Oc(\Pb^{-\alpha\expo})$ additive error term.
					For the last term in~\eqref{eq:app_err1}, $\Pr\big(\norm{\bdelta\expo_i} > \Pb^{\varepsilon}\mid\norm{\vv_\ell-\uv_{\ell}}\leq\Pb^{-\beta-\varepsilon}\big)$, it follows that
						\eqmo{
								&\Pr\big(\norm{\bdelta\expo_i} > \Pb^{\varepsilon}\mid\norm{\vv_\ell-\uv_{\ell}}\leq\Pb^{-\beta-\varepsilon}\big)\\
									&\qquad\qquad\qquad\qquad\leq \frac{\Pr\big(\norm{\bdelta\expo_i} > \Pb^{\varepsilon}\big)   }{   \Pr\big(\norm{\vv_\ell-\uv_{\ell}}\leq\Pb^{-\beta-\varepsilon}\big)}. \label{eq:prob_12e}
						}
					From~\eqref{eq:order_difuv1}, it holds that $\Pr\big(\norm{\vv_\ell-\uv_{\ell}}\leq\Pb^{-\beta-\varepsilon}\big) = 1 - \Oc(\Pb^{\beta+\varepsilon-\alpha\expo})$. 
					Besides this, $\norm{\bdelta\expo_i}^2 = \sum_{n=1}^{N_T} \absn{\delta\expo_{i,n}}^2$, where  $\delta\expo_{i,n}$ are i.i.d. as $\CN(0,1)$. Consequently, $\absn{\delta\expo_{i,n}}^2$ is distributed following a Rayleigh distribution and 
						\eqmo{
								\norm{\bdelta\expo_i}^2 \sim \Gamma_d(N_T,1), 
						}
					where $\Gamma_d(N_T,1)$ denotes the Gamma distribution. 
					Moreover,  $\Gamma_d(N_T,1)$ is also called the Erlang distribution, and it satisfies that
						\eqm{
							\Pr\big(X\sim\Gamma_d(N_T,1) <x\big) = 1 - \sum_{n=0}^{N_T-1} \frac{1}{n!}e^{-x}x^n.
						}
					Hence,
						\eqmo{
								\Pr\big(\norm{\bdelta\expo_i} > \Pb^{\varepsilon}\big)
									& = \Pr\big(\norm{\bdelta\expo_i}^2 > \Pb^{2\varepsilon}\big) \\
									& = \sum_{n=0}^{N_T-1} \frac{1}{n!}e^{-\Pb^{2\varepsilon}}\Pb^{2n\varepsilon}.
						}
					Since $\varepsilon>0$, $\Pr\big(\norm{\bdelta\expo_i} > \Pb^{\varepsilon}\big) = o(\Pb^{x})$, for any~$x\in\Rb$, and hence it is $o(1/\log_2(P))$. This implies that 
						\eqmo{
							\frac{\Pr\big(\norm{\bdelta\expo_i} > \Pb^{\varepsilon}\big)   }{   \Pr\big(\norm{\vv_\ell-\uv_{\ell}}\leq\Pb^{-\beta-\varepsilon}\big)}
									& = \frac{o\LB\frac{1}{\log_2(P)}\RB}{1 - \Oc(\Pb^{\beta+\varepsilon-\alpha\expo})},
						}	
					which, together with~\eqref{eq:order_difuv1} and~\eqref{eq:app_err1}, leads to 
						\eqm{
								\Pr\big(\absn{\bdelta\expo_i(\vv_\ell-\uv_{\ell})} > \Pb^{-\beta}\big)
									=o\LB\frac{1}{\log_2(P)}\RB 
						}						
					for any $\beta<\alpha\expo$.
				\end{proof}
			Equipped with this result from Proposition~\ref{prop:proof_appendix_prop}, now we focus back on~\eqref{eq:first_step_prop1}.
			Before, let us introduce the notation $\dv^{\vv,\uv}_{\ell}\triangleq \vv_\ell-\uv_{\ell}$ for the sake of readability. 
			Then,~\eqref{eq:first_step_prop1} can be expanded 
			by means of the Law of total probability as	
				\eqmo{\notag
					&\Pr\big(\big|{\absn{\bdelta\expo_i\uv_{\ell}} - \absn{\bdelta\expo_i\dv^{\vv,\uv}_{\ell}}}\big|  < \Pb^{-\gamma}\big)\\
						&  = \Pr\Big(\big|{\absn{\bdelta\expo_i\uv_{\ell}} - \absn{\bdelta\expo_i\dv^{\vv,\uv}_{\ell}}}\big|  < \Pb^{-\gamma} ~\Big|~ \absn{\bdelta\expo_i\dv^{\vv,\uv}_{\ell}} \leq \Pb^{-\beta} \Big)\\
								&\qquad \times\Pr\big(\absn{\bdelta\expo_i\dv^{\vv,\uv}_{\ell}} \leq \Pb^{-\beta}\big) \\
							&\quad + \Pr\Big(\!\big|{\absn{\bdelta\expo_i\uv_{\ell}} - \absn{\bdelta\expo_i\dv^{\vv,\uv}_{\ell}}}\big| \! <\! \Pb^{-\gamma} ~\Big|~ \absn{\bdelta\expo_i\dv^{\vv,\uv}_{\ell}} > \Pb^{-\beta}\Big)\\
							& \qquad \times	\Pr\big(\absn{\bdelta\expo_i\dv^{\vv,\uv}_{\ell}} > \Pb^{-\beta}\big) \\
						&  = \Pr\Big(\big|{\absn{\bdelta\expo_i\uv_{\ell}} - \absn{\bdelta\expo_i\dv^{\vv,\uv}_{\ell}}}\big|  < \Pb^{-\gamma} 	~\Big|~ \absn{\bdelta\expo_i\dv^{\vv,\uv}_{\ell}} \leq \Pb^{-\beta} \Big) \\
							& \quad + o\LB\frac{1}{\log_2(P)}\RB \\
						&  \leq \Pr\big(\absn{\bdelta\expo_i\uv_{\ell}} < \Pb^{-\gamma} + \Pb^{-\beta} \big) + o\LB\frac{1}{\log_2(P)}\RB .
				}	
			Let us assume w.l.o.g. that $\beta<\gamma$, such that $\Pr\big(\absn{\bdelta\expo_i\uv_{\ell}} < \Pb^{-\gamma} + \Pb^{-\beta} \big) \leq \Pr\big(\absn{\bdelta\expo_i\uv_{\ell}} < 2\Pb^{-\beta} \big)$.  
			Therefore, it remains to prove that  $\Pr\big(\absn{\bdelta\expo_i\uv_{\ell}} < 2\Pb^{-\beta} \big) = o\LB\frac{1}{\log_2(P)}\RB$. 
			Let  $\epsilon_\beta$ be a scalar such that $0<\epsilon_\beta<\beta$ and let us define $\psi$ as the angle satisfying 		
				\eqmo{
						\cos(\psi) \triangleq \frac{\absn{\bdelta\expo_i\uv_{\ell}}}{\norm{\bdelta\expo_i}\norm{\uv_{\ell}}}.
				}
			Then, we use again the Law of Total Probability to obtain 
				\eqm{
					& \Pr\LB \absn{\bdelta\expo_i\uv_{\ell}} < 2\Pb^{-\beta}\RB	\notag\\  
						& = \Pr\LB \norm{\bdelta\expo_i}\norm{\uv_{\ell}} \cos(\psi) < 2\Pb^{-\beta} \mid \norm{\uv_{\ell}} \leq \Pb^{-\epsilon_\beta}\RB \notag \\
								& \qquad \times \Pr\LB \norm{\uv_{\ell}} \leq \Pb^{-\epsilon_\beta}\RB  \notag \\
							& \quad +  \Pr\LB \norm{\bdelta\expo_i}\norm{\uv_{\ell}} \cos(\psi) < 2\Pb^{-\beta} \mid\norm{\uv_{\ell}} > \Pb^{-\epsilon_\beta}\RB \notag \\
								& \qquad\times \Pr\LB \norm{\uv_{\ell}} > \Pb^{-\epsilon_\beta}\RB  \label{eq:proof_lemma_noise_5}\\
						& \leq \Pr\LB \norm{\uv_{\ell}} \leq \Pb^{-\epsilon_\beta}\RB\notag \\
							& \quad +  \Pr\LB \norm{\bdelta\expo_i}\norm{\uv_{\ell}} \cos(\psi) < 2\Pb^{-\beta} \mid\norm{\uv_{\ell}} > \Pb^{-\epsilon_\beta}\RB \notag\\
						& \leq \Pr\LB \norm{\uv_{\ell}} \leq \Pb^{-\epsilon_\beta}\RB \notag \\
						& \quad +  \Pr\LB \norm{\bdelta\expo_i}\cos(\psi) < 2\Pb^{-\beta}\Pb^{\epsilon_\beta} \mid\norm{\uv_{\ell}} > \Pb^{-\epsilon_\beta}\RB\notag.
				}		
			Importantly, $\bdelta\expo_i$ is isotropically distributed (i.e., the normalized value $\bdelta\expo_i/\norm{\bdelta\expo_i}$ is uniformly distributed in the sphere surface). Besides this,  $\uv_{\ell}$ is a function of $\bH$. Since $\bH$ and $\bdelta\expo_i$ are mutually independent, so  $\bdelta\expo_i$ and $\uv_{\ell}$ are. 
			Hence, from isotropy of $\bdelta\expo_i$, $\cos(\psi) $ is independent of $\uv_{\ell}$.						
			On this basis, we can select $\uv_{\ell} = [1, \bZero_{1\times N_T - 1}]$ to obtain that 
				\eqmo{
					&\Pr\LB \norm{\bdelta\expo_i}\cos(\psi) < 2\Pb^{\epsilon_\beta-\beta} \mid\norm{\uv_{\ell}} > \Pb^{-\epsilon_\beta} \RB~\\
						& \hspace{22ex}= \Pr\LB \absn{\bdelta\expo_{i,1,1}} < 2\Pb^{\epsilon_\beta-\beta}\RB,
				}		
			where $\bdelta\expo_{i,1,1}$ denotes the first element of the vector $\bdelta\expo_{i}$, and it is distributed as $\CN(0,1)$. 
			Then,
				\eqmo{
					\Pr\LB \absn{\bdelta\expo_{i,1,1}} < 2\Pb^{\epsilon_\beta-\beta}\RB 
							& = \frac{2}{\sqrt{2\pi}}\int_{0}^{2\Pb^{\epsilon_\beta-\beta}} \!\!\!\! \!\! e^{-x^2/2} \dd x \\
					 & \leq  \frac{4}{\sqrt{2\pi}}\Pb^{\epsilon_\beta-\beta}. \label{eq:proof_lemma_noise_enda}
				}				
			On the other hand, the term $\Pr\LB \norm{\uv_{\ell}} \leq \Pb^{-\epsilon_\beta}\RB$ is  bounded by 
				\eqmo{
					\Pr\LB \norm{\uv_{\ell}} \leq \Pb^{-\epsilon_\beta}\RB  
							& =  \int_{0}^{\Pb^{-\epsilon_\beta}} \!\!\!\! f^{}_{\norm{\uv_{i}}}(x) \dd x \\
							& \leq  f^{\max}_{\norm{\uv_{i}}} \Pb^{-\epsilon_\beta}, \label{eq:proof_lemma_noise_endb}
				}
			which  follows from~\eqref{eq:cond_3}. By introducing~\eqref{eq:proof_lemma_noise_enda} and \eqref{eq:proof_lemma_noise_endb} in~\eqref{eq:proof_lemma_noise_5} we obtain that
				\eqmo{
					\Pr\LB \absn{\bdelta\expo_i\uv_{\ell}} < 2\Pb^{-\beta}\RB 	 \leq \Oc(\Pb^{\max(-\epsilon_\beta,\ \epsilon_\beta-\beta)}).
				}		
			Note that $\epsilon_\beta$ satisfies $0<\epsilon_\beta < \beta$. Hence,
				\eqmo{
					&\Pr\big(\absn{\bdelta\expo_i\vv_{\ell}} < \Pb^{-\gamma}\big)\\  
						& \hspace{5ex}\leq \Pr\big(\big|{\absn{\bdelta\expo_i\uv_{\ell}} - \absn{\bdelta\expo_i(\vv_\ell-\uv_{\ell})}}\big|  < \Pb^{-\gamma}\big)\\
					& \hspace{5ex}\leq \Pr\big(\absn{\bdelta\expo_i\uv_{\ell}} < 2\Pb^{-\beta} \big) + o\LB\frac{1}{\log_2(P)}\RB \\
						& \hspace{5ex}= o\LB\frac{1}{\log_2(P)}\RB,			
				}
			which concludes the proof of Proposition~\ref{lem:prob_zero_order}. \qed 
			
			\subsection{Proof of Proposition~\ref{lem:exp_noise_order}}\label{app:proof_lemma_exp}
			We prove in the following that $\ExpB{\smash{\absn{\bdelta\expo_i(\wv_{\ell}-\vv_{\ell})} }} = \Oc(\Pb^{-\alpha_q})$ for any  $i,\ell\in\Nb_K : \ell\neq i$. 	
			It follows that
				\eqmo{\nonumber
					&\E\big[ \absn{\bdelta\expo_i(\wv_{\ell}-\vv_{\ell})}\big] \\ 
							&\ \ \ 	\leq \E\big[ {\norm{\bdelta\expo_i}\norm{\wv_{\ell}-\vv_{\ell}}} \big]	\\
							&\ \ \   =  \cov\LB{\norm{\bdelta\expo_i},\norm{\wv_{\ell}-\vv_{\ell}}} \RB + \ExpB{\norm{\bdelta\expo_i}}\ExpB{\norm{\wv_{\ell}-\vv_{\ell}}} \\
							& \ \ \  \leq  \sqrt{\ExpB{\norm{\bdelta\expo_i}^2}}\sigma_{\norm{\wv_{\ell}-\vv_{\ell}}} + \ExpB{\norm{\bdelta\expo_i}}\ExpB{\norm{\wv_{\ell}-\vv_{\ell}}},  
				}
			where $\cov(X,Y)\triangleq \Exp[(X-\Exp(X))(Y-\Exp(Y))]$ is the covariance between $X$ and $Y$ and $\sigma^2_X$ represents the variance of the random variable $X$. The last inequality comes from the fact that $\cov(x,y)\leq \sigma_x\sigma_y$ and $\sigma^2_x\leq \ExpB{x^2}$. 
			Besides this, it holds from $\norm{\bdelta\expo_i}^2 \sim \Gamma_d(N_T,1)$ that ${\ExpB{\norm{\bdelta\expo_i}^2}} = {N_T}$.   
			From this point and the fact that $\ExpB{x} \leq \sqrt{\ExpB{x^2}}$,  we can write
				\eqmo{\nonumber
						\E\big[ {\absn{\bdelta\expo_i(\wv_{\ell}-\vv_{\ell})}}\big] 
							& \leq \sqrt{N_T} \LB\sigma_{\norm{\wv_{\ell}-\vv_{\ell}}} + \ExpB{\norm{\wv_{\ell}-\vv_{\ell}}}\RB \\%
							& \overset{(a)}{\leq}  \sqrt{N_T}\ 2\sqrt{\ExpB{\norm{\wv_{\ell}-\vv_{\ell}}^2}}		\\ %
							& \overset{(b)}{=}   \Oc(\Pb^{-\alpha_q}), 	 %
				}				
			where $(a)$ comes from the fact that $\sigma_x + \E[x] \leq 2\sqrt{\E[x^2]}$ and $(b)$ from Corollary~\ref{lem:convergence_error_global}.   \qed
							
		\section{Proof of Lemma~\ref{lem:uniform_quantizer} (Quantizer Consistency)}\label{app:proof_consistent} 
		Let $q \triangleq \Pb^{-\alpha_q}$ be the quantization step size of the quantizer $\Qc_u$. Then, $\Qc_u$ is defined such that, for a scalar value $x\in\Rb$,
			\eqm{
				\Qc_u(x) \triangleq  q\left\lfloor \frac{x}{q}+\frac{1}{2}\right\rfloor.
			}			
		We extend the notation for any complex matrix $\bA\in\Cb^{n\times m}$ such that $\bA_q = \Qc_u(\bA)$ denotes the element-wise quantization, i.e.,
			\eqm{
					(\bA_q)_{i,k} \triangleq \Qc_u\big(\Real(\bA_{i,k})\big) + \mathrm{i} \Qc_u\big(\Imag(\bA_{i,k})\big), \label{eq:quantizer_element}
			}
		where $\Real(x)$ and $\Imag(x)$ stand for the real imaginary part of $x\in\Cb$, and $\iin\triangleq\sqrt{-1}$.  
		In this appendix we prove that, for a scalar uniform quantizer $\Qc_u$ with $q = \Pb^{-\alpha_q}$ and  $\alpha\expj >\alpha_q>0$, $\forall j\in \Nb_M$, it follows that
			\eqm{
					\Pr\LB\bHH^{(j)\leftarrow(1)}_q  \neq \bHH\expj_q \RB = o\LB\frac{1}{\log_2(P)}\RB,
			}		
		where $\bHH\expj_q = \Qc_u(\bHH\expj)$ and $\bHH^{(j)\leftarrow(1)}_q$ is the MAP estimator of $\bHH\expj_q$ given $\bHH\expo$. We start by noting that, by definition of the MAP estimator,
			\eqm{
				\Pr\LB\bHH^{(j)\leftarrow(1)}_q  \neq \bHH\expj_q \RB 
					& \leq \Pr\LB\Qc_u(\bHH^{(1)})  \neq \bHH\expj_q \RB.
			}
		Since $\Real(\bHH^{(1)}_{i,k})$ and $\Imag(\bHH^{(1)}_{i,k})$ are i.i.d. for any $i,k$, it follows that
			\eqmo{
				\!\! &\Pr\LB\Qc_u(\bHH^{(1)})  \neq \bHH\expj_q \RB  \\
					&\ \quad \leq 2KN_T \Pr\LB\Qc_u(\Real(\bHH^{(1)}_{1,1}))  \neq \Qc_u(\Real(\bHH^{(j)}_{1,1}))\RB\!, 
			}		
		where we have selected w.l.o.g. the real part of the (1,1) channel element. 
		Hence, it is sufficient to obtain the probability of disagreement for $\Real(\bHH^{(j)}_{1,1})$.  
		For that purpose, we split each reconstruction level of the quantizer in two parts: 
		The \emph{edge} of the cell and the \emph{center} of the cell. 
		This is done in order to show that, as $P$ increases, the probability of disagreement vanishes if $\bHH^{(1)}_{1,1}$ is in the \emph{center} of the quantization level and, besides this, the probability that $\bHH^{(1)}_{1,1}$ is in the \emph{edge} area also vanishes. 
		We rigorously show it in the following. Before starting, we introduce the simplified notation $\h\expj \triangleq \Real(\bHH^{(j)}_{1,1})$ to ease the readability. 
		Accordingly,  we also introduce the notation $\h \triangleq \Real(\bH_{1,1})$ and $\delta \triangleq \Real(\bdelta\expj_{1,1})$ such that $\h\expj = \zop\expj\h + z\expj\delta\expj$, with  $z\expj = \Pb^{-\alpha\expj}$ and $\zop\expj \triangleq \sqrt{1-(z\expj)^2}$.
		Furthermore, we recall the notations $\zop\expj_{\inv} \triangleq \frac{1}{\zop\expj}$ and $z\expj_{n} \triangleq \frac{z\expj}{\zop\expj}$, introduced in Appendix~\ref{se:proofs}. 
		
			\subsection{Egde and center of the reconstruction level}\label{subse:quant_partition}
		
		Let $\ell_n$ be the $n$-th quantization level of $\Qc_u$, $n\in\Zb$, with $\ell_0 = 0$. Let us define $L_n$ as the input interval that outputs $\ell_n$, i.e., 
			\eqm{
					L_n \triangleq \{x\mid \Qc_u(x) = \ell_n\}.
			}
		$L_n$ has a range $[L^{\min}_n, L^{\max}_n)$ such that	$|L_n| \triangleq L^{\max}_n - L^{\min}_n = \Pb^{-\alpha_q}$. 
		We split $L_n$ in two areas, the \emph{edge} area $E_n$ and the \emph{center} area $C_n$, depicted in~Fig.~\ref{fig:quantizer}. 
		The \emph{edge} area is defined as the part of $L_n$ that is at most at distance $\Pb^{{-c_e\alpha_q}}$ of the boundary of the cell, with $c_e>1$. 
			\eqm{
				E_n \triangleq \left\{ x\!\in\! L_n \mid x - L^{\min}_n < \Pb^{-c_e\alpha_q}  \lor  L^{\max}_n - x  < \Pb^{{-c_e\alpha_q}}\!\right\}\!.\nonumber
			}			
    The \emph{center} area is given by
			\eqm{
				C_n \triangleq  \left\{ x\in L_n \backslash E_n \right\}. 
			}				
		Intuitively, the probability of disagreement is very high if $\h^{(1)}$ lies in the edge area $E_n$, whereas this probability vanishes in the central area $C_n$.
		Mathematically, we have that
			\eqmo{
					&\!\!\!\!\!\Pr\bigg(\Qc_{u}(\h\expo)\neq \Qc_{u}(\h\expj) \bigg) \leq \Pr\bigg(\h\expo\in \bigcup_{n\in\Zb} E_n \bigg) \\
					& \qquad~ + \Pr\bigg(\Qc_{u}(\h\expo)\neq \Qc_{u}(\h\expj) \mid \h\expo\in \bigcup_{n\in\Zb} C_n\bigg).	\label{eq:IEEEproof_vanish_dis_1}
			}
		Let us analyze separately the two probabilities	in the right-hand side of~\eqref{eq:IEEEproof_vanish_dis_1}.

				\begin{figure}[t]
					\centering
						\begin{overpic}[width=0.99\columnwidth]{./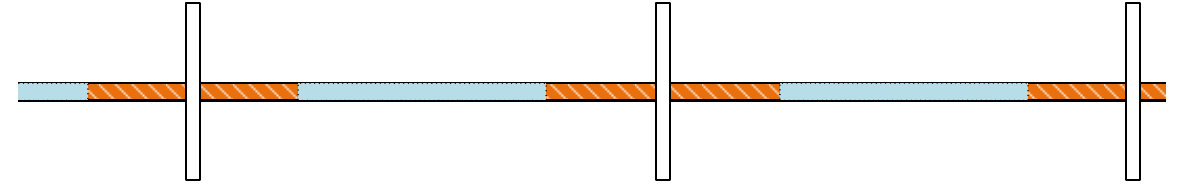}
								{\put(16, -2.25){\parbox{\linewidth}{\small $\underbrace{\phantom{This is a really blank space}}_{L_n}$ }}}
								{\put(7.40, 9.5){\parbox{\linewidth}{\small $\overbrace{\phantom{ccici\,}}^{E_{n-1}}$ }}}
								{\put(17.0, 9.5){\parbox{\linewidth}{\small $\overbrace{\phantom{ccici\,}}^{E_n}$ }}}
								{\put(46.50, 9.5){\parbox{\linewidth}{\small $\overbrace{\phantom{ccici\;}}^{E_n}$ }}}
								{\put(25.5, 9.2){\parbox{\linewidth}{\small $\overbrace{\phantom{ o make space\,}}^{C_n}$ }}}
								{\put(65.75, 9.2){\parbox{\linewidth}{\small $\overbrace{\phantom{ o make space\,}}^{C_{n+1}}$ }}}
								{\put(57.0, 9.75){\parbox{\linewidth}{\small $\overbrace{\phantom{ccici\,}}^{E_{n+1}}$ }}}							
								{\put(87.0, 9.75){\parbox{\linewidth}{\small $\overbrace{\phantom{ccici\,}}^{E_{n+1}}$ }}}							
								{\put(14.5, 16.25){\parbox{\linewidth}{\small $\ell_n$ }}}
								{\put(52.6, 16.25){\parbox{\linewidth}{\small $\ell_{n+1}$ }}}
								{\put(92.6, 16.25){\parbox{\linewidth}{\small $\ell_{n+2}$ }}}
						\end{overpic}~\\ \vspace{1ex}  
						\caption{Illustration of a reconstruction level $L_n$ of the quantizer and the two sub-areas in which we divide it: The central  area $C_n$ and the edge area  $E_n$.} \label{fig:quantizer}
				\end{figure}			
					
			\subsection{Probability of belonging to the edge area}\label{subse:prob_edge}
			Consider an arbitrary quantization level $\ell_n$. 
			Let $f^{\max}_{L_n}$ be the maximum value of the pdf of $\h\expo$ in $L_n = \{x\mid \Qc_u(x) = \ell_n\}$.
			It follows that the probability that $\h\expo$ is in $E_n$  is upper-bounded by
				\eqmo{
					\Pr\big(\h\expo\in E_n \big) 
							&\leq f^{\max}_{L_n} |E_n| \\ %
							& =2 f^{\max}_{L_n} {\Pb^{{-c_e\alpha_q}}},
				}	
			where $|E_n|$ denotes the length of $E_n$. The standard normal distribution has a derivative that is, at most, $1/\sqrt{2\pi e}$. Thus, the probability of  being in $L_n$ satisfies
				\eqmo{
					\Pr\big(\h\expo\in L_n \big) 
							&\geq \big(f^{\max}_{L_n}- 1/\sqrt{2\pi e}|L_n|\big)|L_n| \\ %
							& = \big(f^{\max}_{L_n}- 1/\sqrt{2\pi e}\Pb^{-\alpha_q}\big) {\Pb^{-\alpha_q}}.
				}				
			Hence, the probability that $\h\expo$ is in $E_n$,  given that it is in $L_n$, satisfies for any~$n$ that 
				\eqmo{
					&\Pr\big(\h\expo\in E_n \mid L_n \big)
						 = \frac{\Pr\LB\h\expo\in E_n \RB }{\Pr\LB\h\expo\in L_n \RB }\\
						&\qquad\qquad\quad \leq \frac{2 f^{\max}_{L_n}}{ \big(f^{\max}_{L_n}- 1/\sqrt{2\pi e}\Pb^{-\alpha_q}\big)} \Pb^{-(c_e-1)\alpha_q}.\label{eq:IEEEproof_vanish_dis_1bbedgeq_ch7}
				}		
			Let us define $g_{\max}$ as $g_{\max} \triangleq \max_{n\in\Zb}\frac{2 f^{\max}_{L_n}}{ \big(f^{\max}_{L_n}- 1/\sqrt{2\pi e}\Pb^{-\alpha_q}\big)}$. Note that $g_{\max} = \Theta(1)$. 
			Hence, from~\eqref{eq:IEEEproof_vanish_dis_1bbedgeq_ch7} and the fact that $\sum_{n\in\Zb} \Pr\big(\h\expo\in L_n \big)  = 1$, we can write
				\eqm{
					\Pr\big(\h\expo\in \bigcup_{n\in\Zb} E_n \big)  
						& = \sum_{n\in\Zb} \frac{\Pr\LB\h\expo\in E_n \RB }{\Pr\LB\h\expo\in L_n \RB}\Pr\big(\h\expo\in L_n \big)\nonumber \\
						& \leq  g_{\max} \Pb^{-(c_e-1)\alpha_q} \sum_{n\in\Zb} \Pr\big(\h\expo\in L_n \big)\nonumber \\
						& =  \Oc\big(\Pb^{-(c_e-1)\alpha_q}\big). 
						\label{eq:IEEEproof_vanish_dis_1bbedge_ch7a}
				}
			Consequently, it holds that
				\eqmo{
					\Pr\big(\h\expo\in \bigcup_{n\in\Zb} E_n \big)  
						& =  o\bigg(\frac{1}{\log_2(P)}\bigg). \label{eq:IEEEproof_vanish_dis_1bbedge_ch7}
				}				
			\subsection{Probability of disagreement in the center area}\label{subse:prob_center}
			From the fact that the minimum distance from any point of $C_n$ to the border of  $L_n$ is $\Pb^{-c_e\alpha_q}$, it holds that
				\eqmo{
					&\Pr\LB\mathcal{Q}_{u}(\h\expo)\neq \mathcal{Q}_{u}(\h\expj) \ \Big|\  \h\expo \in \bigcup_{n\in\Zb} C_n\RB\\  
								&\quad \leq \Pr\LB\big|{\h\expo-\h\expj}\big| \geq \Pb^{-c_e\alpha_q}\ \Big|\  \h\expo \in \bigcup_{n\in\Zb} C_n\RB\!.  \label{eq:IEEEproof_quant_1a}
				}	
			Given that, for two events $A,C$, $\Pr(A\mid C) \leq \Pr(A)/\Pr(C)$, it follows that
				\eqm{
					&\Pr\LB\big|{\h\expo-\h\expj}\big| \geq \Pb^{-c_e\alpha_q}\ \Big|\  \h\expo \in \bigcup_{n\in\Zb} C_n\RB\nonumber \\ 
					&\qquad \leq \frac{1}{\Pr\LB\h\expo \in \bigcup_{n\in\Zb} C_n\RB}  \Pr\LB\absn{\h\expo-\h\expj} \geq\Pb^{-c_e\alpha_q}\RB \nonumber\\
					&\qquad\leq \frac{1}{\Pr\LB\h\expo \in \bigcup_{n\in\Zb} C_n\RB} \ \frac{\Eb\left[\big|\h\expo-\h\expj\big|\right]}{\Pb^{-c_e\alpha_q}},  \label{eq:refref}
				}	
			which comes from Markov's Inequality. 
			In the following, we obtain the expectation  $\Eb\left[\big|\h\expo-\h\expj\big|\right]$. 
			Then, 
				\eqmo{
					\!\h\expo-\h\expj
						& = \h(\zop\expo-\zop\expj) + (z\expo\delta\expo - z\expj\delta\expj).\!
				}
			From the assumption of Gaussian variables, it holds that 
				\eqm{
					\h(\zop\expo-\zop\expj) \ & \ \sim\  \Nc\LB 0, \ (\zop\expo-\zop\expj)^2\RB,\\
					z\expo\delta\expo - z\expj\delta\expj \ &\  \sim\  \Nc\LB 0, \ (z\expo)^2 + (z\expj)^2\RB. \label{eq:distr_diff} 
				}
			Since $\h\expo$ is independent of $\delta\expo$ and $\delta\expj$, it follows that
				\eqm{
						\h\expo-\h\expj \ \sim \ \Nc\LB 0, \sigma^2_d\RB,\label{eq:distr_diff2as} 
				}
			where $\sigma^2_d$ is given by
				\eqm{
						\sigma^2_d = (\zop\expo-\zop\expj)^2 + (z\expo)^2 + (z\expj)^2. \label{eq:distr_diff2} 
				}
			Substituting the variables for their values yields
				\eqm{
						\!\sigma^2_d = 2\LB 1 - \sqrt{1 - P^{-\alpha\expo} - P^{-\alpha\expj} + P^{-\alpha\expo-\alpha\expj}}\RB\!. \label{eq:sigma_difference}
				}
			Furthermore, if $\h\expo-\h\expj$ is drawn from a zero-mean Normal distribution of variance $\sigma^2_d$, $\absn{\h\expo-\h\expj}$ is distributed as a half-normal distribution  of mean
				\eqm{
						\Eb\left[\big|\h\expo-\h\expj\big|\right]	&=  \sigma_d \sqrt{\frac{2}{\pi}}.
				}
			From~\eqref{eq:sigma_difference} and the fact that, for any $x$ such that $0\leq x \leq 1$, it holds that $\sqrt{1-x}\geq 1-x$, it follows that
				\eqm{
						\Eb\left[\big|\h\expo-\h\expj\big|\right]	
							& \leq \sqrt{\frac{4}{\pi} \LB {P^{-\alpha\expo} + P^{-\alpha\expj} - P^{-\alpha\expo-\alpha\expj}}\RB} \nonumber\\
							& = \Oc(\Pb^{-\alpha\expj}). \label{eq:proof_quantizer_order_m}
				}		
			Besides this, it holds from~\eqref{eq:IEEEproof_vanish_dis_1bbedge_ch7} that
				\eqmo{
					\Pr\Big(\h\expo \in \bigcup_{n\in\Zb} C_n\Big) 
						& = 1 - \Pr\Big(\h\expo\in \bigcup_{n\in\Zb} E_n\Big)\\
						& = 1 - \Oc(\Pb^{-\alpha_q}).			 \label{eq:proof_quantizer_order_n}
				}
			From~\eqref{eq:refref}, both~\eqref{eq:proof_quantizer_order_m} and~\eqref{eq:proof_quantizer_order_n} lead to
				\eqmo{
					&\Pr\LB\mathcal{Q}_{u}(\h\expo)\neq \mathcal{Q}_{u}(\h\expj) \ \Big|\  \h\expo \in \bigcup_{n\in\Zb} C_n\RB\\  
							&\qquad\quad \leq \frac{1}{\Pr\Big(\h\expo \in \bigcup_{n\in\Zb} C_n\Big) } \  \frac{\Eb\left[\big|\h\expo-\h\expj\big|\right]}{\Pb^{-c_e\alpha_q}} \\ 
							&\qquad\quad = \Oc\LB \Pb^{c_e\alpha_q-\,\alpha\expj}\RB \\
							&\qquad\quad = o\LB \frac{1}{\log_2(P)}\RB.  \label{eq:proof_quant_1aab}									
				}
			The last inequality is obtained only if  $c_e\alpha_q <\alpha\expj$ for any~$j\in\Nb_M$. 
			Thus, it follows from~\eqref{eq:proof_quant_1aab} that it is necessary to satisfy that $c_e\alpha_q <\alpha\expj$ for any~$j\in\Nb_M$.
			Since for any $\alpha_q <\alpha\expj$ we can find a $c_e>1$ such that $c_e\alpha_q <\alpha\expj$, any 	$\alpha_q <\alpha\expj$	 will satisfy~\eqref{eq:proof_quant_1aab} as long as a correct $c_e$ is selected.

			\subsection{Assembling probabilities}\label{subse:end_proof}
			We make use of~\eqref{eq:IEEEproof_vanish_dis_1bbedge_ch7} and~\eqref{eq:proof_quant_1aab} to show that~\eqref{eq:IEEEproof_vanish_dis_1} satisfies
				\eqmo{
					&\Pr\bigg(\!\Qc_{u}(\h\expo)\neq \Qc_{u}(\h\expj)\! \bigg) 
						 \leq  \Pr\bigg(\h\expo\in \bigcup_{n\in\Zb} E_n \bigg) \\
						& \qquad~~ {}+ \Pr\bigg(\Qc_{u}(\h\expo)\neq \Qc_{u}(\h\expj) \mid \h\expo\in\bigcup_{n\in\Zb} C_n\bigg)\!\\
						& \quad = o\LB \frac{1}{\log_2(P)}\RB,
				}
			which concludes the proof of Lemma~\ref{lem:uniform_quantizer}. \qed

		\section{Proof of Lemma~\ref{lem:convergence_error_tx2} (Error on Naive Precoder)}\label{app:lemma_precoder_tx2}
		In this appendix, we prove both Lemma~\ref{lem:convergence_error_tx2} and Corollary~\ref{lem:convergence_error_global}. First, we focus on demonstrating that 
			\eqm{
					\Exp\left[\norm{\vv_{i,2}-\wv_{i,2}}^2\right] = \Oc\LB P^{-\alpha_q}\RB. \label{eq:app_proof_err}
			} 
		The generalization of~\eqref{eq:app_proof_err} for any $\Exp\left[\norm{\vv_{i,k}-\wv_{i,k}}^2\right]$, $k\in\Nb_M\backslash 1$ is straightforward. Then, the proof of the other results in the lemma and the corollary will be shown to follow from~\eqref{eq:app_proof_err}.

		In order to prove~\eqref{eq:app_proof_err}, we make use of the fact that, as presented in Section~\ref{subse:scheme_centr}, we assume that the precoding vectors and matrices can be expressed as a combination of summations, products, and generalized inverses of the channel estimate. Note that,  with the previous operations, it is also possible to compute divisions and norms of the channel estimate coefficients. 

		First, note that both $\wv_{i,2}$ and  $\vv_{i,2}$ are obtained following the same algorithm but based on different information (input). 
		Specifically, $\wv_{i,2}$ is computed on the basis of $\bHH\expt_q = \Qc_u(\bHH\expt)$, where $\Qc_u$ is a scalar uniform quantizer with quantization step $q = \Pb^{-\alpha_q}$, and $\vv_{i,2}$ is computed on the basis of $\bHH\expo$. 
		As in the previous appendix, let $\h\expt_q$ (resp. $\h\expj$ and $\h$) denote the real or imaginary part of an arbitrary element of the matrix $\bHH\expt_q$ (resp. $\bHH\expj$ and $\bH$). 
		Let us define $\h\expt_\varsigma$ as
			\eqm{
				\h\expt_\varsigma \triangleq  \h\expt + \varsigma_q,
			}
		where $\varsigma_q$ is distributed as a binary symmetric distribution with points $[-q,\,q]$, independent of the other variables, such that $\sigma_{\varsigma_q}^2 = q^2$.
		Note that the error $\h\expt_q - \h\expo$ has smaller or equal variance than $\h\expt_\varsigma - \h\expo = \h\expt - \h\expo + \varsigma_q$. Hence, we can assume that $\wv_{i,2}$ is computed on the basis of the estimate $\h\expt_\varsigma$, since increasing the error variance can only hurt. Consequently, the error $\xi \triangleq \h\expt_\varsigma -\h\expo $ has a variance $\sigma_{\xi}^2$ given by
			\eqmo{
				\sigma_{\xi}^2 & \leq \sigma^2_d + \sigma_{\varsigma_q}^2 +  2\sigma_d\sigma_{\varsigma_q} \\
						& = \Oc(P^{-\alpha_q}),\label{eq:order_error_upper_4}
			}
		where $\sigma^2_d$ is given in~\eqref{eq:sigma_difference}.
		Therefore, we can write that
			\eqm{
				 \h\expt_\varsigma = \h\expo + \Pb^{-\alpha_q}\delta_\xi, \label{eq:distr_q_4}
			}
		where $\delta_\xi$ is a variable of variance $\Theta(1)$ and bounded density. 
		We continue by showing that the error-variance scaling remains being at most $\Theta(P^{-\alpha_q})$ after applying addition, product, inverse or pseudo-inverse operations. 
		Afterward, based on those results, we prove~\eqref{eq:app_proof_err}.
		
			\subsection{Error in the addition}\label{subse:error_add}
			Let $\an\expt_\xi$, $\bbn\expt_\xi$, be distributed as~\eqref{eq:distr_q_4}, i.e., $\an\expt_\xi \triangleq \an\expo + \Pb^{-\alpha_q}\delta^\an_\xi$, $\bbn\expt_\xi \triangleq \bbn\expo + \Pb^{-\alpha_q}\delta^\bbn_\xi$, where $\delta^\an_\xi$, $\delta^\bbn_\xi$, are variables with variance $\Theta(1)$ and bounded density. 
			It is easy to see that, for any $\an\expt_\xi$, $\bbn\expt_\xi$,  
				\eqmo{
					\an\expt_\xi + \bbn\expt_\xi  
						&= 	\an\expo + \Pb^{-\alpha_q}\delta^\an_\xi + \bbn\expo + \Pb^{-\alpha_q}\delta^\bbn_\xi \\
								& =  \an\expo + \bbn\expo  + \Pb^{-\alpha_q}(\delta^\an_\xi+\delta^\bbn_\xi).
				}
			This implies that the error variance of the sum is also~$\Oc(P^{-\alpha_q})$ as~\eqref{eq:order_error_upper_4}.  	
			
			\subsection{Error in the product}\label{subse:error_prod}
			In a similar way, it follows that
				\eqmo{
					\!\!\!\!&\!\an\expt_\xi\bbn\expt_\xi  = \LB\an\expo + \Pb^{-\alpha_q}\delta^\an_\xi\RB  \LB\bbn\expo + \Pb^{-\alpha_q}\delta^\bbn_\xi\RB \\
						&\   = \an\expo\bbn\expo  + \Pb^{-\alpha_q}\!\LB \an\expo\delta^\bbn_\xi  + \bbn\expo\delta^\an_\xi+ \Pb^{-\alpha_q}\delta^\an_\xi\delta^\bbn_\xi\RB\!, \label{eq:prod_var}
				}		
			which implies that the product also maintains the scaling of the variance as~$\Oc(P^{-\alpha_q})$.		 
			Moreover, as the sum and product of matrices is a composition of sums and products of its coefficients, the result extends to any two matrices of suitable dimension. 
			\subsection{Error in the inverse}\label{subse:error_inv}
			Let us first assume that $\bA\expt_\xi$ and $\bA\expo$ are square matrices with full rank with probability one, and with coefficients following~\eqref{eq:distr_q_4}. 
			We can then write that
				\eqmo{
						&(\bA\expt_\xi)^{-1} = \LB\bA\expo +  \Pb^{-\alpha_q}\bDelta^\bA_\xi\RB^{-1} \\
						&\ \ \ = (\bA\expo)^{-1}\\
						&\ \ \ \quad \  -  \Pb^{-\alpha_q}\,(\bA\expo)^{-1}\bDelta^\bA_\xi\LB\bA\expo+ \Pb^{-\alpha_q}\bDelta^\bA_\xi\RB^{-1} \! \label{eq:proof_inv_wood}
				}					
			which is obtained from the Woodbury matrix identity\cite{henderson1981}. 
			Hence, the error variance of the inverse is again~$\Oc(P^{-\alpha_q})$. 
			Once that it is proved that the inverse operation generates an error with variance~$\Oc(P^{-\alpha_q})$, we extend it for the Moore–Penrose inverse (pseudo-inverse) $(\bA\expt_\xi)^\dagger$.  
			We assume (as throughout the rest of the document) that each sub-matrix has maximum rank, i.e.,
				\eqm{
						\rank\LB\bA\expt_\xi\in\Cb^{N\times M}\RB = \min(N,M). 
				}
			Let us assume that $\bA\expt_\xi$ is full row-rank matrix, i.e.,  $N\leq M$. Under this assumption, the pseudo-inverse is given by
				\eqm{
						(\bA\expt_\xi)^\dagger = (\bA\expt_\xi)^\He\LB\bA\expt_\xi(\bA\expt_\xi)^\He\RB^{-1}.
				}
			The case in which $\bA\expt_\xi$ is full column-rank matrix ($N\geq M$) will follow the same steps and thus we omit it. 
			It follows from~\eqref{eq:prod_var} that $\bA\expt_\xi(\bA\expt_\xi)^\He = \bA\expo(\bA\expo)^\He +  \Pb^{-\alpha_q}\bDelta_{\text{eq}}$, where $\bDelta_{\text{eq}}$ has variance $\Theta(1)$. 
			This, together with~\eqref{eq:proof_inv_wood}, implies that 
				\eqm{
						\LB\bA\expt_\xi(\bA\expt_\xi)^\He\RB^{-1} = \LB\bA\expo(\bA\expo)^\He\RB^{-1} +   \Pb^{-\alpha_q}\bDelta'_{\text{eq}},
				}		
			and, by applying again~\eqref{eq:prod_var},  it holds that
				\eqm{
						(\bA\expt_\xi)^\dagger =  (\bA\expo)^\He\LB\bA\expo(\bA\expo)^\He\RB^{-1}  +   \Pb^{-\alpha_q}\bDelta^{''}_{\text{eq}},
				}				
			where $\bDelta'_{\text{eq}}$ and $\bDelta^{''}_{\text{eq}}$ have variance $\Theta(1)$. As explained in~\cite{Wiesel2008}, under the assumption that $\bXh$ is a full row-rank matrix, any generalized inverse may be expressed as $\bXh^{-} = \bXh^\dagger + \bP_{\bot}\bU$, \cmtblu{where $\bP_{\bot}=\bI - \bXh^\dagger\bXh$ is the orthogonal projection onto the null space of $\bXh$ and $\bU$ is an arbitrary matrix}. Hence, a similar result could be obtained for a broad set of generalized inverses.

			\subsection{Error variance of the difference of precoders}\label{subse:error_var}
									
			The centralized precoder $\vv_{i,2}$ is based on $\bHH\expo$, i.e., $\bV = \Vc(\bHH\expo)$. The distributed precoder at TX~2 is based on its own quantized CSIT $\bHH\expt_q$, and thus $\wv_{i,2}$ is obtained from $\bW = \Vc(\bHH\expt_q)$. Based on the previous results and the definition of linear precoders, it follows that we can write the distributed precoder based on the  CSIT of TX~$2$~($\bHH\expt_q$) as
				\eqm{\label{error_var_2_app}
						 \wv_{i,2} = \vv_{i,2} + \Pb^{-\alpha_q}\ev_{\wv},
				}
			where $\Exp\big[\norm{\ev_\wv}^2\big] = \Oc(1)$. 
			Consequently,
				\eqm{\label{error_var_2_app2}
						\Exp\left[\norm{\vv_{i,2} - \wv_{i,2}}^2\right] 
									& = \Exp\left[\norm{\vv_{i,2} - (\vv_{i,2} + \Pb^{-\alpha_q}\ev_{\wv})}^2\right]~~~~ \nonumber \\
									& = P^{-\alpha_q}\Exp\left[\norm{\ev_{\wv}}^2\right]  \\
									& = \Oc\LB P^{-\alpha_q}\RB,\nonumber
				}	
			which concludes the proof of~\eqref{eq:app_proof_err} and thus of~\eqref{eq:lemma_diff_2}. 
			Moreover, since $\Exp[\norm{\xv}] \leq \sqrt{\Exp[\norm{\xv}^2]}$, it follows that
				\eqmo{
						\Exp\left[\norm{\vv_{i,j}-\wv_{i,j}}\right]
									& = \Oc\LB \Pb^{-\alpha_q}\RB,
				}				
			which concludes the proof of Lemma~\ref{lem:convergence_error_tx2}. \qed		
		
		\subsection{Proof of Corollary~\ref{lem:convergence_error_global}}		
			In order to prove that $\Exp\left[\norm{\vv_{i}-\wv_{i}}^2\right] = \Oc\LB P^{-\alpha_q}\RB$ for any $i\in\Nb_K$, let us recall that the vector $\wv_{{i}}-\vv_{{i}}$ can be written as
				\eqm{
						\wv_{{i}}-\vv_{{i}} = 
							\begin{bmatrix}
									\bphi_i \\ 
								\wv_{{i,\bar{1}}}-\vv_{{i,\bar{1}}}
							\end{bmatrix}, \label{eq:vec_split1}
				}
			where $\bphi_i$ (defined in~\eqref{eq:eq_1b} as $\bphi_{i}  = \bHH_{\iop,1}^\dagger\bHH_{\iop,\bar{1}}(\vv_{i,\bar{1}} - \wv_{i,\bar{1}})$) is the difference at  TX~$1$, and $\wv_{{i,\bar{1}}}-\vv_{{i,\bar{1}}}$ denotes the difference for the coefficients of all the TXs but TX~$1$, i.e., 
				\eqm{
						\!\wv_{{i,\bar{1}}}-\vv_{{i,\bar{1}}} = [(\wv_{{i,2}}-\vv_{{i,2}})^\Trans,\ \dots,\ (\wv_{{i,K}}-\vv_{{i,K}})^\Trans]^\Trans.\notag
				}
			Let us recall that $N_{\bar{1}}$ has been defined as~$N_{\bar{1}} \triangleq N_T - N_1$, 
			and let us further define $\bG_\bI$ as 
			\begin{align}
				\bG_\bI\triangleq \begin{bmatrix} 
							 \bHH_{\iop,1}^\dagger\bHH_{\iop,\bar{1}}  \\ 
							\bI_{N_{\bar{1}}}
					\end{bmatrix}\!.
			\end{align}
			From the definition of $\bphi_i$, we can rewrite~\eqref{eq:vec_split1} as
				\eqm{
						\wv_{{i}}-\vv_{{i}} = 
							\bG_\bI (\wv_{{i,\bar{1}}}-\vv_{{i,\bar{1}}}). 
							\label{eq:vec_split3}
				}
			Generalizing~\eqref{error_var_2_app} for any $j\in\{\Nb_K\backslash 1\}$, it holds that
				\eqm{
						 \wv_{i,j} = \vv_{i,j} + \Pb^{-\alpha_q}\ev_{\wv,j} \quad\forall j\in\{\Nb_K\backslash 1\},
				}
			where $\Exp[\norm{\ev_{\wv,j}}^2] = \Oc(1)$. 		
			Then, upon defining $\ev_{\wv,\bar{1}} = [\ev_{\wv,2}^\Trans,\ \dots,\ \ev_{\wv,K}^\Trans]^\Trans$, it follows that 	
				\eqmo{
					\ExpB{\smash{{\norm{\wv_{{i}}-\vv_{{i}}} }}^2}
						& = \ExpB{\smash{\norm{\bG_\bI(\wv_{{i,\bar{1}}}-\vv_{{i,\bar{1}}})}}^2}\\
						& = P^{-\alpha_q}\ExpB{\smash{\norm{\bG_\bI\ev_{\wv,\bar{1}}}}^2}\\
						& =  \Oc\LB P^{-\alpha_q}\RB
						\label{eq:chi_int_term2}  
				}				
			since $\ExpB{\smash{\norm{\bG_\bI\ev_{\wv,\bar{1}}}}^2} =\Oc\LB 1\RB$, which proves~\eqref{eq:cor_diff_app_2}. In order to prove~\eqref{eq:cor_diff_app_1},  it follows from from $\Exp[\norm{\xv}] \leq \sqrt{\Exp[\norm{\xv}^2]}$ that
				\eqmo{
						\Exp\left[\norm{\vv_{i}-\wv_{i}}\right]
									& = \Oc\LB \Pb^{-\alpha_q}\RB,
				}				
			which concludes the proof of Corollary~\ref{lem:convergence_error_global}. \qed

\end{appendices}

			\bibliographystyle{IEEEtran}												
			\bibliography{IEEEabrv,Literature} 									

\begin{thebibliography}{10}
\providecommand{\url}[1]{#1}
\csname url@samestyle\endcsname
\providecommand{\newblock}{\relax}
\providecommand{\bibinfo}[2]{#2}
\providecommand{\BIBentrySTDinterwordspacing}{\spaceskip=0pt\relax}
\providecommand{\BIBentryALTinterwordstretchfactor}{4}
\providecommand{\BIBentryALTinterwordspacing}{\spaceskip=\fontdimen2\font plus
\BIBentryALTinterwordstretchfactor\fontdimen3\font minus
  \fontdimen4\font\relax}
\providecommand{\BIBforeignlanguage}[2]{{%
\expandafter\ifx\csname l@#1\endcsname\relax
\typeout{** WARNING: IEEEtran.bst: No hyphenation pattern has been}%
\typeout{** loaded for the language `#1'. Using the pattern for}%
\typeout{** the default language instead.}%
\else
\language=\csname l@#1\endcsname
\fi
#2}}
\providecommand{\BIBdecl}{\relax}
\BIBdecl

\bibitem{Costa1987}
M.~{Costa} and A.~E. {Gamal}, ``{The capacity region of the discrete memoryless
  interference channel with strong interference (Corresp.)},'' \emph{{IEEE}
  Trans. Inf. Theory}, vol.~33, no.~5, pp. 710--711, Sep. 1987.

\bibitem{Caire2003}
G.~Caire and S.~Shamai~(Shitz), ``{On the achievable throughput of a
  multiantenna Gaussian Broadcast Channel},'' \emph{{IEEE} Trans. Inf. Theory},
  vol.~49, no.~7, pp. 1691--1706, 2003.

\bibitem{Lozano2005}
A.~Lozano, A.~M. Tulino, and S.~Verdú, ``{High-SNR power offset in
  multiantenna communication},'' \emph{{IEEE} Trans. Inf. Theory}, vol.~51,
  no.~12, pp. 4134--4151, Dec 2005.

\bibitem{Hassibi2007}
B.~{Hassibi} and M.~{Sharif}, ``{Fundamental Limits in MIMO Broadcast
  Channels},'' \emph{{IEEE J. Selected Areas in Commun. (JSAC)}}, vol.~25,
  no.~7, pp. 1333--1344, Sep. 2007.

\bibitem{Jindal2005}
N.~{Jindal} and A.~{Goldsmith}, ``Dirty-paper coding versus {TDMA for MIMO}
  {B}roadcast {C}hannels,'' \emph{{IEEE} Trans. Inf. Theory}, vol.~51, no.~5,
  pp. 1783--1794, May 2005.

\bibitem{Cadambe2008}
V.~R. Cadambe and S.~A. Jafar, ``{Interference alignment and degrees of freedom
  of the K-user interference channel},'' \emph{{IEEE} Trans. Inf. Theory},
  vol.~54, no.~8, pp. 3425--3441, Aug. 2008.

\bibitem{Viswanath2003}
P.~Viswanath and D.~N.~C. Tse, ``{Sum capacity of the vector Gaussian Broadcast
  Channel and uplink-downlink duality},'' \emph{{IEEE} Trans. Inf. Theory},
  vol.~49, no.~8, pp. 1912--1921, Aug. 2003.

\bibitem{Weingarten2006}
H.~Weingarten, Y.~Steinberg, and S.~Shamai~(Shitz), ``The capacity region of
  the gaussian multiple-input multiple-output broadcast channel,'' \emph{{IEEE}
  Trans. Inf. Theory}, vol.~52, no.~9, pp. 3936--3964, Sep. 2006.

\bibitem{Etkin2008}
R.~Etkin, D.~Tse, and H.~Wang, ``{Gaussian interference channel capacity to
  within one bit},'' \emph{{IEEE} Trans. Inf. Theory}, vol.~54, no.~12, pp.
  5534--5562, Dec. 2008.

\bibitem{Kountouris2006}
M.~Kountouris, R.~de~Francisco, D.~Gesbert, D.~T.~M. Slock, and T.~Salzer,
  ``Multiuser diversity - multiplexing tradeoff in {MIMO} broadcast channels
  with limited feedback,'' in \emph{Proc. IEEE Asilomar Conf. on Signals,
  Systems and Computers}, Oct 2006, pp. 364--368.

\bibitem{Motahari2009}
A.~S. {Motahari} and A.~K. {Khandani}, ``{Capacity Bounds for the Gaussian
  Interference Channel},'' \emph{{IEEE} Trans. Inf. Theory}, vol.~55, no.~2,
  pp. 620--643, Feb 2009.

\bibitem{Huh2012}
H.~Huh, A.~M. Tulino, and G.~Caire, ``{Network MIMO with linear zero-forcing
  beamforming: Large system analysis, impact of channel estimation, and
  reduced-complexity scheduling},'' \emph{{IEEE} Trans. Inf. Theory}, vol.~58,
  no.~5, pp. 2911--2934, May 2012.

\bibitem{Piovano2017}
E.~Piovano and B.~Clerckx, ``{Optimal DoF Region of the K-User MISO BC With
  Partial CSIT},'' \emph{{IEEE} Commun. Lett.}, vol.~21, no.~11, pp.
  2368--2371, Nov 2017.

\bibitem{Davoodi2016_TIT_DoF}
{A. G. Davoodi and S. A. Jafar}, ``{Aligned Image Sets Under Channel
  Uncertainty: Settling Conjectures on the Collapse of Degrees of Freedom Under
  Finite Precision CSIT},'' \emph{{IEEE} Trans. Inf. Theory}, vol.~62, no.~10,
  pp. 5603--5618, Oct 2016.

\bibitem{Tandon2012b}
R.~Tandon, S.~A. Jafar, S.~Shamai~(Shitz), and H.~V. Poor, ``{On the
  synergistic benefits of alternating CSIT for the MISO BC},'' \emph{{IEEE}
  Trans. Inf. Theory}, vol.~59, no.~7, pp. 4106--4128, 2013.

\bibitem{Lashgari2016}
S.~Lashgari, R.~Tandon, and S.~Avestimehr, ``{MISO Broadcast Channel With
  Hybrid CSIT: Beyond Two Users},'' \emph{IEEE Trans. Inf. Theory}, vol.~62,
  no.~12, pp. 7056--7077, Dec 2016.

\bibitem{MaddahAli2012}
M.~Maddah-Ali and D.~Tse, ``{Completely stale transmitter channel state
  information is still very useful},'' \emph{{IEEE} Trans. Inf. Theory},
  vol.~58, no.~7, pp. 4418--4431, Jul. 2012.

\bibitem{Xu2012}
J.~Xu, J.~G. Andrews, and S.~A. Jafar, ``{MISO Broadcast Channels with delayed
  finite-rate feedback: Predict or observe?}'' \emph{{IEEE} Trans. Wireless
  Commun.}, vol.~11, no.~4, pp. 1456--1467, Apr. 2012.

\bibitem{Yang2013}
S.~Yang, M.~Kobayashi, D.~Gesbert, and X.~Yi, ``{Degrees of freedom of time
  correlated MISO Broadcast Channel with delayed CSIT},'' \emph{{IEEE} Trans.
  Inf. Theory}, vol.~59, no.~1, pp. 315--328, Jan. 2013.

\bibitem{Vahid2016}
A.~Vahid, M.~A. Maddah-Ali, and A.~S. Avestimehr, ``{Approximate Capacity
  Region of the MISO Broadcast Channels With Delayed CSIT},'' \emph{IEEE Trans.
  on Commun.}, vol.~64, no.~7, pp. 2913--2924, July 2016.

\bibitem{Simeone2009}
O.~Simeone, O.~Somekh, H.~V. Poor, and S.~Shamai~(Shitz), ``{Downlink multicell
  processing with limited-backhaul capacity},'' \emph{EURASIP J. on Advances in
  Signal Process.}, May 2009.

\bibitem{Jaber2016}
M.~{Jaber}, M.~A. {Imran}, R.~{Tafazolli}, and A.~{Tukmanov}, ``{5G} backhaul
  challenges and emerging research directions: A survey,'' \emph{IEEE Access},
  vol.~4, pp. 1743--1766, 2016.

\bibitem{Popovski2018}
P.~{Popovski}, J.~J. {Nielsen}, C.~{Stefanovic}, E.~d.~{Carvalho}, E.~{Strom},
  K.~F. {Trillingsgaard}, A.~{Bana}, D.~M. {Kim}, R.~{Kotaba}, J.~{Park}, and
  R.~B. {Sorensen}, ``Wireless access for ultra-reliable low-latency
  communication: Principles and building blocks,'' \emph{IEEE Network},
  vol.~32, no.~2, pp. 16--23, March 2018.

\bibitem{Bennis2018}
M.~{Bennis}, M.~{Debbah}, and H.~V. {Poor}, ``Ultrareliable and low-latency
  wireless communication: Tail, risk, and scale,'' \emph{Proceedings of the
  IEEE}, vol. 106, no.~10, pp. 1834--1853, Oct 2018.

\bibitem{Radner1962}
R.~Radner, ``{Team decision problems},'' \emph{The Annals of Mathematical
  Statistics}, 1962.

\bibitem{Grandhi1994}
S.~Grandhi, R.~Vijayan, and D.~Goodman, ``Distributed power control in cellular
  radio systems,'' \emph{IEEE Trans. on Commun.}, vol.~42, no. 234, pp.
  226--228, Feb 1994.

\bibitem{Ng2008}
B.~L. Ng, J.~S. Evans, S.~V. Hanly, and D.~Aktas, ``{Distributed downlink
  beamforming with cooperative base stations},'' \emph{{IEEE} Trans. Inf.
  Theory}, vol.~54, no.~12, pp. 5491--5499, 2008.

\bibitem{Dimakis2010}
A.~G. Dimakis, S.~Kar, J.~M.~F. Moura, M.~G. Rabbat, and A.~Scaglione,
  ``{Gossip algorithms for distributed signal processing},'' \emph{Proceedings
  of the IEEE}, vol.~98, no.~11, pp. 1847--1864, Nov 2010.

\bibitem{Aggarwal2012}
V.~Aggarwal, Y.~Liu, and A.~Sabharwal, ``{Sum capacity of interference channels
  with a local view: Impact of distributed decisions},'' \emph{IEEE Trans. on
  Inf. Theory}, vol.~58, no.~3, pp. 1630--1659, 2012.

\bibitem{Vahid2016_erasure}
A.~Vahid and R.~Calderbank, ``{Two-User Erasure Interference Channels With
  Local Delayed CSIT},'' \emph{{IEEE} Trans. Inf. Theory}, vol.~62, no.~9, pp.
  4910--4923, Sept 2016.

\bibitem{lampiris2020resolving}
\BIBentryALTinterwordspacing
E.~Lampiris, A.~Bazco-Nogueras, and P.~Elia, ``Resolving the feedback
  bottleneck of multi-antenna coded caching,'' 2020. [Online]. Available:
  \url{arxiv.org/abs/1811.03935}
\BIBentrySTDinterwordspacing

\bibitem{MaddahAli2014}
M.~A. Maddah-Ali and U.~Niesen, ``{Fundamental limits of Caching},''
  \emph{{IEEE} Trans. Inf. Theory}, vol.~60, no.~5, pp. 2856--2867, May 2014.

\bibitem{Zhang2017}
J.~{Zhang} and P.~{Elia}, ``{Fundamental Limits of Cache-Aided Wireless BC:
  Interplay of Coded-Caching and CSIT Feedback},'' \emph{{IEEE} Trans. Inf.
  Theory}, vol.~63, no.~5, pp. 3142--3160, May 2017.

\bibitem{Hail2015}
M.~A. {Hail}, M.~{Amadeo}, A.~{Molinaro}, and S.~{Fischer}, ``Caching in named
  data networking for the wireless internet of things,'' in \emph{Int. Conf. on
  Recent Advances in Internet of Things (RIoT)}, April 2015, pp. 1--6.

\bibitem{Wang2017}
S.~{Wang}, X.~{Zhang}, Y.~{Zhang}, L.~{Wang}, J.~{Yang}, and W.~{Wang}, ``A
  survey on mobile edge networks: Convergence of computing, caching and
  communications,'' \emph{IEEE Access}, vol.~5, pp. 6757--6779, 2017.

\bibitem{dekerret2012_TIT}
P.~de~Kerret and D.~Gesbert, ``{Degrees of freedom of the network MIMO channel
  with distributed CSI},'' \emph{{IEEE} Trans. Inf. Theory}, vol.~58, no.~11,
  pp. 6806--6824, Nov. 2012.

\bibitem{Bazco2018_WCL}
A.~Bazco-Nogueras, P.~de~Kerret, D.~Gesbert, and N.~Gresset, ``Distributed
  {CSIT} does not reduce the {Generalized DoF} of the 2-user {MISO} {Broadcast
  Channel},'' \emph{{IEEE} Wireless Commun. Lett.}, vol.~8, no.~3, pp.
  685--688, June 2019.

\bibitem{Bazco2018_TIT}
------, ``On the {Degrees-of-Freedom of the K}-user distributed broadcast
  channel,'' \emph{{IEEE} Trans. Inf. Theory}, vol.~66, no.~9, pp. 5642--5659,
  2020.

\bibitem{dekerret2016_asilomar}
P.~{de Kerret}, A.~{Bazco}, and D.~{Gesbert}, ``{Enforcing coordination in
  network MIMO with unequal CSIT},'' in \emph{Proc. IEEE Asilomar Conf. on
  Signals, Systems and Computers}, Nov 2016, pp. 39--43.

\bibitem{Shamai2001_TIT}
{S. Shamai and S. Verdú}, ``{The impact of frequency-flat fading on the
  spectral efficiency of CDMA},'' \emph{{IEEE} Trans. Inf. Theory}, vol.~47,
  no.~4, pp. 1302--1327, May 2001.

\bibitem{Lee2007}
J.~Lee and N.~Jindal, ``High {SNR} analysis for {MIMO Broadcast Channels}:
  {D}irty paper coding versus linear precoding,'' \emph{{IEEE} Trans. Inf.
  Theory}, vol.~53, no.~12, pp. 4787 -- 4792, Dec. 2007.

\bibitem{Jindal2006}
N.~Jindal, ``{MIMO Broadcast Channels with finite-rate feedback},''
  \emph{{IEEE} Trans. Inf. Theory}, vol.~52, no.~11, pp. 5045--5060, Nov. 2006.

\bibitem{Bazco2019_ISIT}
A.~Bazco-Nogueras, L.~Miretti, P.~de~Kerret, D.~Gesbert, and N.~Gresset,
  ``Achieving vanishing rate loss in decentralized {Network MIMO},'' in
  \emph{Proc. IEEE Int. Symp. Inf. Theory (ISIT)}, July 2019, pp. 1457--1461.

\bibitem{truong2013viability}
K.~T. Truong and R.~W. Heath, ``The viability of distributed antennas for
  massive {MIMO} systems,'' in \emph{Proc. IEEE Asilomar Conf. on Signals,
  Systems and Computers}.\hskip 1em plus 0.5em minus 0.4em\relax IEEE, 2013,
  pp. 1318--1323.

\bibitem{Knuth1976}
D.~E. Knuth, ``{Big Omicron and Big Omega and Big Theta},'' \emph{SIGACT News},
  vol.~8, no.~2, pp. 18--24, Apr. 1976.

\bibitem{Hardy2008}
G.~Hardy, E.~Wright, D.~Heath-Brown, and J.~Silverman, \emph{{An Introduction
  to the Theory of Numbers}}, ser. Oxford mathematics.\hskip 1em plus 0.5em
  minus 0.4em\relax Oxford University Press, 2008.

\bibitem{interdonato2019ubiquitous}
G.~Interdonato, E.~Bj{\"o}rnson, H.~Q. Ngo, P.~Frenger, and E.~G. Larsson,
  ``Ubiquitous cell-free massive {MIMO} communications,'' \emph{EURASIP Journal
  on Wireless Commun. and Networking}, vol. 2019, no.~1, pp. 1--13, 2019.

\bibitem{shaik2020mmse}
Z.~H. Shaik, E.~Bj{\"o}rnson, and E.~G. Larsson, ``{MMSE}-optimal sequential
  processing for cell-free massive {MIMO} with radio stripes,'' \emph{{IEEE}
  Trans. Commun.}, 2021.

\bibitem{Miretti2021_cellfree}
L.~Miretti, E.~Björnson, and D.~Gesbert, ``Team {MMSE} precoding with
  applications to cell-free massive {MIMO},'' \emph{Submitted to IEEE Trans.
  Wireless Commun., May 2021}, 2021.

\bibitem{Davoodi2018_TIT_BC}
A.~G. {Davoodi}, B.~{Yuan}, and S.~A. {Jafar}, ``{GDoF Region of the MISO BC:
  Bridging the Gap Between Finite Precision and Perfect CSIT},'' \emph{{IEEE}
  Trans. Inf. Theory}, vol.~64, no.~11, pp. 7208--7217, Nov 2018.

\bibitem{Bazco2019_thesis}
\BIBentryALTinterwordspacing
A.~{B}azco {N}ogueras, ``{F}undamental limits and algorithms in decentralized
  and cooperative wireless networks,'' Ph.D. dissertation, Sorbonne
  Université, 11 2019. [Online]. Available:
  \url{http://www.eurecom.fr/publication/6121}
\BIBentrySTDinterwordspacing

\bibitem{Caire2007_found}
G.~Caire and K.~R. Kumar, ``Information theoretic foundations of adaptive coded
  modulation,'' \emph{Proceedings of the IEEE}, vol.~95, no.~12, pp.
  2274--2298, 2007.

\bibitem{Wiesel2008}
A.~Wiesel, Y.~C. Eldar, and S.~Shamai~(Shitz), ``{Zero-forcing precoding and
  generalized inverses},'' \emph{IEEE Trans. Signal Process.}, vol.~56, no.~9,
  pp. 4409--4418, 2008.

\bibitem{rao1972generalized}
C.~Rao, ``Generalized inverse of a matrix and its applications,'' in \emph{Vol.
  1 Theory of Statistics}.\hskip 1em plus 0.5em minus 0.4em\relax University of
  California Press, 1972, pp. 601--620.

\bibitem{Ng2009}
C.~T.~K. Ng, D.~Gunduz, A.~J. Goldsmith, and E.~Erkip, ``{Distortion
  minimization in Gaussian layered broadcast coding with successive
  Refinement},'' \emph{{IEEE} Trans. Inf. Theory}, vol.~55, no.~11, pp.
  5074--5086, 2009.

\bibitem{ElAyach2012a}
O.~E. Ayach and R.~W. Heath, ``{Interference alignment with analog channel
  state feedback},'' \emph{IEEE Trans. Wireless Commun.}, vol.~11, no.~2, pp.
  626--636, Feb. 2012.

\bibitem{Bazco2019_GRETSI}
A.~Bazco-Nogueras, L.~Miretti, P.~de~Kerret, D.~Gesbert, and N.~Gresset,
  ``{Transmission robuste de zéro-forçage asymptotiquement optimale pour
  coopération imparfaite de transmetteurs},'' in \emph{2019 GRETSI Colloque},
  August 2019.

\bibitem{Atzeni2018}
I.~{Atzeni} and D.~{Gesbert}, ``{Cooperative MIMO Precoding with Distributed
  CSI: A Hierarchical Approach},'' in \emph{IEEE Int. Workshop on Signal
  Processing Advances in Wireless Commun. (SPAWC)}, June 2018, pp. 1--5.

\bibitem{henderson1981}
H.~V. Henderson and S.~R. Searle, ``{On deriving the inverse of a sum of
  matrices},'' \emph{Siam Review}, vol.~23, no.~1, pp. 53--60, Jan. 1981.

\end{thebibliography}

\end{document}